\makeatletter\@ifpackageloaded{mathpazo}\@tempswatrue\@tempswafalse
  \DeclareFontFamily{OT1}{pzc}{}
  \DeclareFontShape{OT1}{pzc}{m}{it}{<-> s * [1.15] pzcmi7t}{}
  \DeclareMathAlphabet{\mathpzc}{OT1}{pzc}{m}{it}
\makeatletter\@ifpackageloaded{biblatex}{%
  \usepackage{csquotes} 
  \bibliography{../../references}
  \renewbibmacro{in:}{%
    \ifentrytype{incollection}{\printtext{\bibstring{in}\intitlepunct}}{}}
  \renewbibmacro{publisher+location+date}{%
    \iflistundef{publisher}
      {\setunit*{\addcomma\space}}
      {\setunit*{\addcomma\space}}%
    \printlist{publisher}%
    \setunit*{\addcomma\space}%
    \printlist{location}%
    \setunit*{\addcomma\space}%
    \usebibmacro{date}%
    \newunit}
  \DeclareFieldFormat[article]{pages}{#1\isdot}
  \DeclareFieldFormat[article,incollection,inproceedings,unpublished]{title}{#1\isdot}
  \DeclareFieldFormat[thesis]{title}{\mkbibemph{#1\isdot}}
  \DeclareFieldFormat[unpublished]{date}{(#1)\isdot}
  \DeclareFieldFormat[unpublished]{note}{#1\nopunct} 
  \DeclareFieldFormat[article]{journaltitle}{\mkbibemph{#1\isdot}}
  
  \AtEveryBibitem{%
    \ifentrytype{book}{}{
      \clearname{editor}
    }
  }
  \newbibmacro*{bbx:parunit}{%
    \ifbibliography
      {\setunit{\bibpagerefpunct}\newblock
       \usebibmacro{pageref}%
       \clearlist{pageref}%
       \setunit{\adddot\par\nobreak}}
      {}
  }
  \renewbibmacro*{doi+eprint+url}{%
    \usebibmacro{bbx:parunit}
    \iftoggle{bbx:doi}
      {\printfield{doi}}
      {}%
    \iftoggle{bbx:eprint}
      {\usebibmacro{eprint}}
      {}%
    \iftoggle{bbx:url}
      {\usebibmacro{url+urldate}}
      {}
  }
  \renewbibmacro*{eprint}{%
    \usebibmacro{bbx:parunit}
    \iffieldundef{eprinttype}
      {\printfield{eprint}}
      {\printfield[eprint:\strfield{eprinttype}]{eprint}}
  }
  \renewbibmacro*{url+urldate}{%
    \usebibmacro{bbx:parunit}
    \printfield{url}%
    \iffieldundef{urlyear}
      {}
      {\setunit*{\addspace}%
       \printtext[urldate]{\printurldate}}
  }
}{}\makeatother
\declaretheorem[numberwithin=section,refname={theorem,theorems},Refname={Theorem,Theorems}]{theorem}
\declaretheorem[sibling=theorem,style=definition]{definition}
\declaretheorem[sibling=theorem,name=Lemma]{lemma}
\declaretheorem[sibling=theorem,name=Proposition]{proposition}
\declaretheorem[sibling=theorem,name=Remark,style=definition]{remark}
\declaretheorem[sibling=theorem,name=Example,style=definition]{example}
\declaretheorem[sibling=theorem,name=Claim]{claim}
\declaretheorem[sibling=theorem,name=Conjecture]{conjecture}
\makeatletter\@ifpackageloaded{hyperref}{%
  \usepackage{xcolor}
  \definecolor{dark-red}{rgb}{0.4,0.15,0.15}
  \definecolor{dark-blue}{rgb}{0.15,0.15,0.4}
  \definecolor{medium-blue}{rgb}{0,0,0.5}
  \hypersetup{
    colorlinks,
    linkcolor={dark-red},
    citecolor={dark-blue},
    urlcolor={medium-blue}%
  }

}{}\makeatother
\providecommand{\abs}[1]{\lvert#1\rvert}
\providecommand{\Abs}[1]{\left\lvert#1\right\rvert}
\providecommand{\norm}[1]{\lVert#1\rVert}
\providecommand{\floor}[1]{\lfloor#1\rfloor}
\providecommand{\Floor}[1]{\left\lfloor#1\right\rfloor}
\newcommand{\Lang}[2][]{\mathcal{L}_{#1}(#2)}
\newcommand{\infw}[1]{%
  \ifcat\noexpand#1\relax\bm{#1}
  \else\mathbf{#1}\fi}          
\newcommand{\mirror}[1]{\widetilde{#1}}
\newcommand{\Parikh}[2][]{\mathcal{P}_{#1}(#2)}
\newcommand{\abexp}[2][]{\mathpzc{A\mkern-3mu e}_{#1}(#2)}
\newcommand{\qk}[1]{\mathcal{Q}_{#1}}
\newcommand{\qkl}[1]{\mathcal{Q}^+_{#1}}
\newcommand{\N}{\mathbb{N}}
\newcommand{\T}{\mathbb{T}}
\newcommand{\Z}{\mathbb{Z}}
\newcommand{\keywords}[1]{\par\noindent{\footnotesize{\em Keywords\/}: #1}}
\begin{document}
  \title{Abelian periods of factors of Sturmian words}
  \author[,1,2,3]{Jarkko Peltomäki\footnote{Corresponding author.\\E-mail address: \href{mailto:r@turambar.org}{r@turambar.org} (J. Peltomäki).}}
  \affil[1]{The Turku Collegium for Science and Medicine TCSM, University of Turku, Turku, Finland}
  \affil[2]{Turku Centre for Computer Science TUCS, Turku, Finland}
  \affil[3]{University of Turku, Department of Mathematics and Statistics, Turku, Finland}
  \date{}
  \maketitle
  \noindent
  \hrulefill
  \begin{abstract}
    \vspace{-1em}
    \noindent
    We study the abelian period sets of Sturmian words, which are codings of irrational rotations on a one-dimensional
    torus. The main result states that the minimum abelian period of a factor of a Sturmian word of angle $\alpha$ with
    continued fraction expansion $[0; a_1, a_2, \ldots]$ is either $tq_k$ with $1 \leq t \leq a_{k+1}$ (a multiple of a
    denominator $q_k$ of a convergent of $\alpha$) or $q_{k,\ell}$ (a denominator $q_{k,\ell}$ of a semiconvergent of
    $\alpha$). This result generalizes a result of Fici et al. stating that the abelian period set of the Fibonacci
    word is the set of Fibonacci numbers. A characterization of the Fibonacci word in terms of its abelian period set
    is obtained as a corollary.
    \vspace{1em}
    \keywords{Sturmian word, continued fraction, abelian equivalence, abelian period, singular word}
    \vspace{-1em}
  \end{abstract}
  \hrulefill

  \section{Introduction}
  let $w = w_0 w_1 \dotsm w_{\abs{w}-1}$ to be a finite word of length $\abs{w}$ composed of letters $w_0$, $w_1$,
  $\ldots$, $w_{\abs{w}-1}$. The word $w$ has period $p$ if $w_i = w_{i+p}$ for all $i$ with
  $0 \leq i \leq \abs{w} - 1 - p$. For example, the word $abaab$ has period $3$. Periods of words have been extensively
  studied; see, e.g., \cite[Ch.~8]{2002:algebraic_combinatorics_on_words}. One famous result is the Theorem of Fine and
  Wilf which states that if a word $w$ has two periods $p$ and $q$ and $\abs{w} \geq p + q - \gcd(p, q)$, then $w$ has
  period $\gcd(p, q)$ \cite{1965:uniqueness_theorems_for_periodic_functions}.  

  The majority of the research on periods has been about understanding the structure of periods of a single finite
  word. Much less attention has been paid to period sets. The period set of a (finite or infinite) given word is the
  set of minimum periods of all of its factors (subwords). For instance, the above word $abaab$ has proper factors $a$,
  $b$, $ab$, $ba$, $aa$, $aba$, $baa$, $aab$, $abaa$, and $baab$, so its period set is $\{1, 2, 3\}$. It seems that the
  only papers written on period sets are the 2009 seminal paper
  \cite{2009:least_periods_of_factors_of_infinite_words} of J. Currie and K. Saari and the 2012 preprint
  \cite{2012:least_periods_of_k-automatic_sequences} of D. Go{\v c} and J. Shallit. Currie and Saari study the period
  sets of infinite words. They show that the period set of the Thue-Morse word
  \cite{1999:the_ubiquitous_prouhet-thue-morse_sequence} is the set of positive integers, and they prove the following
  theorem on the period sets of Sturmian words, codings of irrational rotations on a one-dimensional torus.

  \begin{theorem}\label{thm:usual_period}\cite[Cor.~3]{2009:least_periods_of_factors_of_infinite_words}
    The period set of a Sturmian word of slope $\alpha$ having continued fraction expansion $[0; a_1, a_2, \ldots]$ is
    $\{\ell q_k + q_{k-1} : \text{$k \geq 0$, $\ell = 1, \ldots, a_{k+1}$}\}$, where the sequence $(q_k)$ is the
    sequence of denominators of convergents of $\alpha$.\footnote{Sturmian words are binary, and the slope of a Sturmian word is the (irrational) frequency of the letter having frequency less than $\tfrac12$.}
  \end{theorem}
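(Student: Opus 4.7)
The plan is to work in the standard dynamical model: the Sturmian word of slope $\alpha$ is the itinerary $(c(T_\alpha^i y))_{i\ge 0}$ of a point $y$ under the rotation $T_\alpha \colon x\mapsto x + \alpha \pmod 1$, with $c = \mathbf{1}_{I_1}$ for a partition of the unit circle into two arcs $I_0, I_1$ having common endpoints $0$ and $1-\alpha$. A length-$n$ factor $w$ starting at phase $y$ has period $p$ if and only if $c(T_\alpha^i y) = c(T_\alpha^i y + p\alpha)$ for every $i \in \{0,\ldots,n-1-p\}$; equivalently, the orbit points $T_\alpha^i y$ with $0 \le i \le n-1-p$ must all avoid a ``bad set'' $B_p$ of total arc length $2\lVert p\alpha\rVert$, consisting of the two preimages of the partition endpoints under $x \mapsto x + p\alpha$.

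For the inclusion $\{\ell q_k + q_{k-1}\} \subseteq$ period set, for each candidate $p = \ell q_k + q_{k-1}$ with $1 \le \ell \le a_{k+1}$ I would exhibit an explicit factor whose minimum period is $p$. A natural candidate is a singular word of level $k$ (or a suitable extension of it), realised as the coding of an orbit segment placed so that it just fits in the complement of $B_p$ but strictly crosses $B_{p'}$ for every smaller $p'$ that must be ruled out. The three-distance theorem supplies the gap structure needed to make this placement rigorous, and the semiconvergent denominators enter naturally because they index the successive record minima of the sequence $p \mapsto \lVert p\alpha\rVert$, which is precisely what governs the size of $B_p$.

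Conversely, suppose $w$ has minimum period $p$ that is not of the claimed form. By the best-approximation property just mentioned, there is a strictly smaller semiconvergent denominator $p'$ with $\lVert p'\alpha\rVert \le \lVert p\alpha\rVert$ and, after a refinement, with $p'\alpha$ on the same side of its nearest integer as $p\alpha$. One then wants to deduce $B_{p'} \subseteq B_p$ after aligning endpoints, and conclude that avoidance of $B_p$ along the orbit segment of length $n-p$ forces avoidance of $B_{p'}$ along the longer segment of length $n-p'$, so that $p'$ is also a period of $w$, contradicting minimality. This reverse direction is the main obstacle: the bad sets $B_p$ and $B_{p'}$ are not concentric on the circle, and one has to carefully track the positions of the partition endpoints and the signs of the approximation errors. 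A cleaner alternative would be a purely combinatorial route using the recursion $s_{k+1} = s_k^{a_{k+1}} s_{k-1}$ for standard words: every Sturmian factor is, up to conjugacy, a factor of some $s_{k+1} s_k$, and the minimum period of such a factor can be computed inductively through the Sturmian morphisms, with singular words playing the role of extremal examples realising each value $\ell q_k + q_{k-1}$.
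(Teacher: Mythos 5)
This theorem is not proved in the paper at all: it is quoted verbatim from Currie and Saari \cite[Cor.~3]{2009:least_periods_of_factors_of_infinite_words}, so there is no internal proof to compare your argument with. Judged on its own, your text is a plan rather than a proof, and both inclusions are left with genuine gaps. For the inclusion $\{\ell q_k + q_{k-1}\} \subseteq$ period set you say you ``would exhibit'' a factor with minimum period $p$, but no factor is actually produced, no placement of the orbit segment is specified, and minimality (ruling out \emph{every} smaller period, not just smaller special ones) is not verified; invoking the three-distance theorem and singular words names the right tools without doing the work.

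The converse direction contains the more serious gap, which you yourself flag as ``the main obstacle.'' First, the containment $B_{p'} \subseteq B_p$ is generally false: $B_p$ is the union of two arcs of length $\lVert p\alpha\rVert$ anchored at the translates of the partition endpoints by $-p\alpha$, and $B_{p'}$ is anchored at their translates by $-p'\alpha$; these sit at different points of the circle, and having smaller total measure does not give containment. Second, even if some containment were arranged, it would only show that the points $T_\alpha^i y$ for $0 \le i \le n-1-p$ avoid $B_{p'}$, whereas period $p'$ requires avoidance over the \emph{longer} range $0 \le i \le n-1-p'$; the extra $p - p'$ orbit points need a separate argument, and this is precisely where Fine--Wilf-type or return-word structure must enter. (Your observation that some $p' < p$ with $\lVert p'\alpha\rVert \le \lVert p\alpha\rVert$ exists is correct --- take $p' = q_k$ with $q_k \le p < q_{k+1}$ --- but by itself it yields nothing about periods of $w$.) The ``cleaner alternative'' via the recursion $s_{k+1} = s_k^{a_{k+1}} s_{k-1}$ is closer to how such results are actually proved, but it too is only gestured at. As it stands, the proposal identifies a reasonable framework but does not constitute a proof of either inclusion.
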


  When \autoref{thm:usual_period} is applied to the Fibonacci word whose slope has continued fraction expansion
  $[0; 2, \overline{1}]$ (by a bar, we indicate a repeating pattern), we obtain the following nice theorem.

  \begin{theorem}\label{thm:usual_fibonacci}\cite[Cor.~4]{2009:least_periods_of_factors_of_infinite_words}
    The period set of the Fibonacci word is the set of Fibonacci numbers.
  \end{theorem}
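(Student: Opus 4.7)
The plan is to apply \autoref{thm:usual_period} directly to the continued fraction expansion $[0; 2, \overline{1}]$ of the slope of the Fibonacci word, so $a_1 = 2$ and $a_k = 1$ for all $k \geq 2$. Using the standard initial conditions $q_{-1} = 0$ and $q_0 = 1$ together with the recurrence $q_k = a_k q_{k-1} + q_{k-2}$, I would first verify by a short induction that $q_0 = 1$, $q_1 = 2$, $q_2 = 3$, $q_3 = 5$, $q_4 = 8$, \ldots, so that the sequence $(q_k)_{k \geq 0}$ is precisely the Fibonacci sequence starting at $1, 2$.

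Next I would split the enumeration in \autoref{thm:usual_period} into two cases. For $k = 0$, the admissible values of $\ell$ range over $1, \ldots, a_1 = 2$ and the expression $\ell q_0 + q_{-1}$ reduces to $\ell$, producing the two values $1$ and $2$. For each $k \geq 1$ the bound $a_{k+1} = 1$ forces $\ell = 1$, and the expression becomes $q_k + q_{k-1}$, which equals $q_{k+1}$ by the recurrence. Taking the union over $k \geq 0$ then gives $\{1, 2\} \cup \{q_{k+1} : k \geq 1\} = \{q_k : k \geq 0\}$, which is exactly the set of Fibonacci numbers.

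There is no substantive obstacle here, since the result is a bookkeeping specialization of \autoref{thm:usual_period}. The only points needing care are fixing conventions — specifically the value of $q_{-1}$ and the starting index of the Fibonacci sequence — and observing that $k = 0$ is the only index at which more than one value of $\ell$ contributes, precisely because $a_1 = 2$ is the unique partial quotient exceeding $1$.
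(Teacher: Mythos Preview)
Your proposal is correct and follows exactly the approach indicated in the paper, which simply states that the result is obtained by applying \autoref{thm:usual_period} to the slope $[0; 2, \overline{1}]$ of the Fibonacci word. Your careful handling of the conventions $q_{-1}=0$, $q_0=1$ and the case split at $k=0$ is precisely the bookkeeping needed to carry this out.
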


  In the 2016 paper \cite{2016:abelian_powers_and_repetitions_in_sturmian_words} by the author and others, the period
  set of the Fibonacci word was studied with a generalized notion of a period called an abelian period, and an analogue
  of \autoref{thm:usual_fibonacci} was obtained in this generalized setting. The goal of this paper is to extend this
  result to all Sturmian words and obtain an analogue of \autoref{thm:usual_period} for abelian periods.

  Sturmian words are central objects in combinatorics on words. Their study was initiated in the 1940 paper
  \cite{1940:symbolic_dynamics_ii} by M. Morse and G. Hedlund. Sturmian words often exhibit extremal behavior among
  infinite words, and their properties have links to other areas of mathematics like number theory and discrete
  geometry. They admit many interesting combinatorial and dynamical generalizations. See
  \cite{1996:recent_results_on_sturmian_words,2009:episturmian_words_a_survey} and the references therein.

  Two words $u$ and $v$ are called abelian equivalent if one is obtained from the other by permuting letters. If $u_0$,
  $u_1$, $\ldots$, $u_{n-1}$ are abelian equivalent words of length $m$, then their concatenation
  $u_0 u_1 \dotsm u_{n-1}$ is called an abelian power of period $m$ and exponent $n$. For example,
  $aba \cdot baa \cdot aab$ is an abelian cube. This notion is a generalization of the concept of a power: a power is
  simply a repetition of the same word such as $aba \cdot aba \cdot aba$ (a cube). Recently it has been popular to
  generalize concepts and questions regarding ordinary powers and periods to this abelian setting. The foundational
  paper here is \cite{2011:abelian_complexity_of_minimal_subshifts}; see
  \cite{2016:abelian_powers_and_repetitions_in_sturmian_words} for additional references. For example, an appropriate
  generalization of the Theorem of Fine and Wilf was given in
  \cite{2006:fine_and_wilfs_theorem_for_abelian_periods,2013:abelian_periods_partial_words_and_an_extension,2016:an_abelian_periodicity_lemma}.
  These papers naturally contain the definition of an abelian period, which we shall give next; cf.
  \cite{2012:on_abelian_repetition_threshold}.
  
  Let $w$ be a finite word. Then $w$ has abelian period $m$ if $w$ is a factor of an abelian power $u_0 u_1 \dotsm
  u_{n-1}$ with $\abs{u_0} = \ldots = \abs{u_{n-1}} = m$. For example, the word $abaababa$, having minimum period $5$,
  has abelian periods $2$ and $3$ because it is a factor of the abelian powers $babaababab$ and $abaababaa$
  respectively. This indeed generalizes the concept of a period: a word has period $p$ if and only if it is a factor of
  some power of a word of length $p$.
  
  The abelian period set of an infinite word $\infw{w}$ is defined as the set of minimum abelian periods of its
  nonempty factors. As was done in \cite{2009:least_periods_of_factors_of_infinite_words} by Currie and Saari for the
  usual period set, we may now ask for a characterization of the abelian period set for a given word or class of words.
  For the Thue-Morse word, this is easy. The Thue-Morse word $\infw{t}$ is the fixed point of the substitution
  $0 \mapsto 01$, $1 \mapsto 10$ beginning with the letter $0$, and it is clear that $\infw{t}$ is an infinite
  concatenation of the words $01$ and $10$. Thus every factor of $\infw{t}$ has abelian period $2$. The minimum abelian
  period can equal $1$, but this happens only for finitely many factors because $000$ and $111$ do not occur in
  $\infw{t}$. Hence the abelian period set of $\infw{t}$ is $\{1, 2\}$. This should be compared with
  \cite[Thm.~2]{2009:least_periods_of_factors_of_infinite_words}: the period set of $\infw{t}$ is the set of positive
  integers.

  Characterizing the abelian period sets of Sturmian words is significantly harder. The following result was proved in
  \cite{2016:abelian_powers_and_repetitions_in_sturmian_words} for the Fibonacci word (which can be said to be the
  simplest Sturmian word). It should be compared with \autoref{thm:usual_fibonacci}.

  \begin{theorem}\label{thm:fibonacci}\cite[Thm.~6.9]{2016:abelian_powers_and_repetitions_in_sturmian_words}, \cite[Thm.~6.12]{2016:abelian_powers_and_repetitions_in_sturmian_words}
    The abelian period set of the Fibonacci word is the set of Fibonacci numbers.
  \end{theorem}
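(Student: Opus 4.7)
The plan is to establish the equality of the two sets by proving both containments. For the easier direction (every Fibonacci number is a minimum abelian period), I would exhibit explicit witnesses. The natural candidates are the singular words $s_k$ of the Fibonacci word — palindromic factors whose lengths are tied to Fibonacci numbers and which are known to realize extremal values of many combinatorial parameters. Using the self-similar decomposition $f_{k+1}=f_k f_{k-1}$ of the finite Fibonacci words, one can exhibit an abelian power of block length $F_k$ inside the Fibonacci word whose set of factors contains $s_k$; minimality of $F_k$ as an abelian period of $s_k$ is then forced by the explicit Parikh vector of $s_k$ together with the balance property of Sturmian words, which says that the abelian complexity is $2$ at every length and, for each length, restricts the Parikh vectors of factors to two possibilities.

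For the harder inclusion (every minimum abelian period is a Fibonacci number), I would proceed via the Ostrowski / Zeckendorf representation of an arbitrary candidate period $m$. If $m$ is not a Fibonacci number, its Zeckendorf decomposition $m=F_{i_1}+F_{i_2}+\cdots$ has at least two non-consecutive Fibonacci-number summands. I would then prove an abelian Fine--Wilf-type statement, specific to the Fibonacci word, showing that if a factor $w$ has abelian period $m$, then it also has abelian period equal to the smallest summand $F_{i_1}$, so $m$ cannot be a minimum abelian period. The key ingredient is again the balance property combined with the fact that factors of length $F_k$ of the Fibonacci word belong to only two Parikh classes, so that a putative abelian-periodic decomposition with composite block length $m$ must itself split into abelian-periodic pieces of Fibonacci-number block length.

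The main obstacle is this abelian Fine--Wilf step. Abelian periodicity lemmas in the literature (\cite{2006:fine_and_wilfs_theorem_for_abelian_periods,2013:abelian_periods_partial_words_and_an_extension,2016:an_abelian_periodicity_lemma}) apply in related but not identical settings, and none of them directly delivers the conclusion needed here about the minimum abelian period of a Fibonacci factor. The successful argument must exploit the specific arithmetic of the continued fraction $[0;2,\overline{1}]$ — namely that all partial quotients past the first equal $1$ — which makes the Zeckendorf representation of both lengths and candidate periods behave particularly cleanly. By contrast, the singular-word construction in the first step is expected to reduce to a direct combinatorial computation once the right factors are identified, so the bulk of the work lies in the minimality argument for non-Fibonacci values of $m$.
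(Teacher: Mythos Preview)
Your proposal has concrete gaps in both directions.

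\textbf{Easy direction.} The singular factors do \emph{not} witness the Fibonacci numbers as minimum abelian periods. For instance, the singular factor of length $q_1=2$ is $00$, whose minimum abelian period is $1$; the singular factor of length $q_2=3$ is $101$, which has abelian period $2$ via $1\cdot 01\cdot\varepsilon$; the singular factor of length $q_3=5$ is $00100$, which has abelian period $3$ via $00\cdot 100\cdot\varepsilon$. In general the singular factor of length $q_k$ is far too short to force abelian period $q_k$: its length is only $q_k$, so it trivially has abelian period $q_k$, but nothing prevents a smaller one. The paper instead takes as witness the prefix of $\infw{s}_{0,\alpha}$ of length $\abexp{q_k}q_k\ge q_{k+1}q_k$ (Proposition~\ref{prp:qk_admissible}); this is long enough that a putative smaller period $m<q_k$ would force $\abexp{m}$ to exceed its known upper bound.

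\textbf{Hard direction.} Your proposed abelian Fine--Wilf step, ``abelian period $m$ implies abelian period equal to the smallest Zeckendorf summand $F_{i_1}$ of $m$'', is false as stated. Take $m=4=3+1$: the factor $01$ of the Fibonacci word has abelian period $4$ (trivially) but not abelian period $1$. More substantively, even restricting to factors whose \emph{minimum} abelian period is $m$, there is no mechanism in your sketch that links an abelian decomposition with block length $m$ to one with block length $F_{i_1}$; the balance property only tells you there are two Parikh classes at each length, which is not enough to split blocks of length $m$ into abelian-equivalent sub-blocks of length $F_{i_1}$. The existing abelian periodicity lemmas you cite go in the direction $\gcd(p,q)$, not ``smallest summand'', and you offer no replacement argument.

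The paper's route is entirely different. For the hard direction it does not attempt any Fine--Wilf reduction. Instead, assuming a factor $w$ has minimum abelian period $m\notin\mathcal{M}_\alpha$, it shows (Claim~\ref{cl:s_occurrences}) that $w$ must contain the singular factor $s$ of length $q_k$ in \emph{every} phase modulo $q_k$, hence at least $q_k$ times; combined with the return times of $s$ (Lemma~\ref{lem:return_times_singular}) this forces $\lvert w\rvert$ to be large (inequality~\eqref{eq:main_lower_bound}). Comparing with the upper bound $\lvert w\rvert\le(\abexp{m}+2)m-2$ and the continued-fraction estimate $\abexp{m}=\lfloor 1/\lVert m\alpha\rVert\rfloor$ then rules out all $m$ outside $\qkl{\alpha}\cup\mathcal{M}_\alpha$, which for the Fibonacci slope collapses to $\qk{\alpha}$.
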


  What are then the abelian period sets of other Sturmian words? By simply replacing the word ``period set'' with
  ``abelian period set'' in the statement of \autoref{thm:usual_period} yields a false statement. Indeed, it was
  observed in \cite[Remark~6.11]{2016:abelian_powers_and_repetitions_in_sturmian_words} that, for example, the factor
  \begin{equation*}
    00101 \cdot 001001001010010010010100100100 \cdot 10100,
  \end{equation*}
  of a Sturmian word of slope $[0;\overline{2,1}]$ has minimum abelian period $6$, which is not of the form
  $\ell q_k + q_{k-1}$ for this slope. This example showed that the proof of \autoref{thm:fibonacci} in
  \cite{2016:abelian_powers_and_repetitions_in_sturmian_words} is not generalizable to all Sturmian words. In this
  paper, we present new ideas that work for all Sturmian words and prove the following result, which is the main result
  of this paper.

  \begin{theorem}\label{thm:main_intro}
    If $m$ is the minimum abelian period of a nonempty factor of a Sturmian word of slope $\alpha$ having continued
    fraction expansion $[0; a_1, a_2, \ldots]$, then either $m = tq_k$ for some $k \geq 0$ and some $t$ such that
    $1 \leq t \leq a_{k+1}$ or $m = \ell q_k + q_{k-1}$ for some $k \geq 1$ and some $\ell$ such that
    $1 \leq \ell < a_{k+1}$, where the sequence $(q_k)$ is the sequence of denominators of convergents of $\alpha$.
  \end{theorem}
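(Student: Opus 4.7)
The plan is to recast ``$w$ has abelian period $m$'' as a statement about rotation orbits on the torus $\T = \mathbb{R}/\mathbb{Z}$ and then to exploit the classical best-approximation properties of continued fractions. Writing the Sturmian word of slope $\alpha$ and some intercept $\rho$ in its canonical form $s_n = \lfloor (n{+}1)\alpha + \rho \rfloor - \lfloor n\alpha + \rho \rfloor$, the Parikh vector of the length-$m$ factor beginning at position $i$ depends only on whether $\{i\alpha + \rho\}$ lies in an interval $I_m \subseteq \T$ of length $\{m\alpha\}$ or in its complement. An abelian power of block length $m$ consisting of $N$ blocks starting at position $i$ thereby corresponds to $N$ consecutive points of the rotation orbit $(\{i\alpha + \rho + km\alpha\})_{k=0}^{N-1}$ lying in a common half, and $w$ has abelian period $m$ exactly when such a configuration can house $w$ at an appropriate offset.

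With this dictionary in hand, I would show that the maximal length $N(m)$ of such a monochromatic orbit run depends essentially on $\gamma(m) := \min(\{m\alpha\}, 1-\{m\alpha\})$, with $N(m) \approx \lfloor 1/\gamma(m) \rfloor$. The denominators $q_k$ of convergents and their small multiples $tq_k$ (for $1 \leq t \leq a_{k+1}$) are precisely those $m$ for which $\gamma(m)$ is a new record low, while the semiconvergent denominators $\ell q_k + q_{k-1}$ (for $1 \leq \ell < a_{k+1}$) are the ``in-between'' best approximants of the second kind. Any $m$ not of the allowed form fails to be such a best approximant, and so admits some smaller $m'$ of the allowed form with $\gamma(m') \leq \gamma(m)$ and, as I would verify, an orbit configuration at length $m'$ that still accommodates the same factor $w$. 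This contradicts the minimality of $m$ and yields the theorem.

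The main obstacle will be the boundary bookkeeping in this reduction: when one shortens the block length from $m$ to $m'$, the factor $w$ is re-sliced and the prefix/suffix overhang at either end of the abelian power changes, so one must verify that $w$ still fits inside the shorter abelian power. I expect the Euclidean-type relation $m = qm' + r$ coming from continued fraction arithmetic, together with the inclusion of the Parikh-defining intervals dictated by the Ostrowski decomposition of $m$, to force the overhangs to behave correctly. Singular words, highlighted in the paper's keywords, should play the role of extremal witnesses realizing each allowed $m$ as the minimum abelian period of some specific factor and, in the reduction step, providing the precise word whose Parikh-balance controls whether a given shortening is feasible.
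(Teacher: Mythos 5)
Your torus dictionary is the right starting point and matches the paper's setup (\autoref{prp:ab_eq_geometric}, \autoref{prp:ab_eq_geometric_2}, and the exponent formula $\abexp{m} = \floor{1/\norm{m\alpha}}$ of \autoref{prp:exp_formula}). The gap is in the core step: given a factor $w$ whose minimum abelian period $m$ is not of the allowed form, you propose to find a smaller allowed $m'$ with $\norm{m'\alpha} \leq \norm{m\alpha}$ whose orbit configuration ``still accommodates $w$,'' i.e.\ to show that $w$ also has abelian period $m'$. That is precisely the reduction the paper shows cannot work in general: it is the strategy used for the Fibonacci word in the earlier paper of Fici et al., and the factor of slope $[0;\overline{2,1}]$ displayed in the introduction (minimum abelian period $6 = 2q_2$, not admitting the abelian period $q_2 = 3$) is an explicit counterexample to the covering step; \autoref{prp:fibonacci_counter_example} shows such counterexamples exist for every slope other than $1/\varphi^2$. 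The obstruction is structural: abelian powers of period $m$ and of period $m'$ are governed by two different partitions of $\T$ (by the points $0$, $\{-m\alpha\}$ and $0$, $\{-m'\alpha\}$ respectively), and there is no containment between the corresponding events, so no Euclidean or Ostrowski bookkeeping on the overhangs will make the re-slicing go through. A secondary inaccuracy: the allowed values are not the record lows of $\gamma$. Only the $q_k$ are; for instance $\gamma(2q_k) = 2\gamma(q_k) > \gamma(q_k)$, yet $2q_k$ is allowed and genuinely occurs as a minimum abelian period. The condition that actually characterizes the allowed $m$ with $q_k < m < q_{k+1}$ is $\norm{m\alpha} \leq \norm{q_{k-1}\alpha} + \norm{q_k\alpha}$, which is a closeness condition relative to the two previous convergents, not a record condition.

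What the paper does instead is a two-sided counting argument with no change of period. If $m$ is the minimum abelian period of $w$ and $m$ is not a multiple $tq_k$ with $t \leq a_{k+1}$, then $w$ cannot be covered by an abelian repetition of period $q_k$, which forces $w$ to contain the singular factor of length $q_k$ in every phase modulo $q_k$, hence at least $q_k$ times (\autoref{cl:s_occurrences}); since the return time of the singular factor is at least $q_{k+1}$ (\autoref{lem:return_times_singular}), this yields the lower bound $\abs{w} \geq (q_{k+1}+2t-1)q_k - q_{k+1}$ of \autoref{prp:main_inequality}. Against this stands the trivial upper bound $\abs{w} \leq (\abexp{m}+2)m - 2$. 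The resulting inequality is then shown, by a delicate continued-fraction case analysis (Lemmas \ref{lem:improved_ub}--\ref{lem:case_1}), to fail unless $\norm{m\alpha} \leq \norm{q_{k-1}\alpha} + \norm{q_k\alpha}$, i.e.\ unless $m \in \qkl{\alpha} \cup \mathcal{M}_\alpha$. Your proposal is missing this quantitative lower bound on $\abs{w}$, which is where the singular words actually do their work; without it there is no mechanism for ruling out the non-allowed periods.
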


  \autoref{thm:main_intro} essentially says that certain multiples of the numbers $q_k$ must also be allowed as minimum
  abelian periods. \autoref{thm:main_intro} implies \autoref{thm:fibonacci} (see the end of \autoref{sec:main_proofs}).

  Notice that \autoref{thm:main_intro} does not characterize the abelian period sets completely. Indeed, we shall see
  at the end of \autoref{sec:main_proofs} that the set of possible minimum abelian periods given by
  \autoref{thm:main_intro} can be unnecessarily large. The complete answer seems to depend on the slope $\alpha$ in a
  complicated way. To us \autoref{thm:main_intro} seems to be the best result obtainable without additional assumptions
  about the arithmetical nature of the slope $\alpha$. Because of this, we leave the complete characterization open.

  \autoref{thm:main_intro} allows an interesting characterization of the Fibonacci subshift, the shift orbit closure of
  the Fibonacci word, as the Sturmian subshift of slope $\alpha$ whose language $\Lang{\alpha}$ has the following
  property: the minimum abelian period of each $w \in \Lang{\alpha}$ is a denominator of a convergent of $\alpha$. See
  \autoref{thm:fibonacci_characterization} at the end of \autoref{sec:main_proofs}. This adds yet another property to
  the rather long list of extremal properties of the Fibonacci word
  \cite{2008:on_extremal_properties_of_the_fibonacci_word,2013:some_extremal_properties_of_the_fibonacci_word,2014:lyndon_words_and_fibonacci_numbers}.

  Even though the problems considered in this paper have their background in combinatorics and formal languages, a
  large part of the proofs are completely number-theoretic. It was already observed in
  \cite{2016:abelian_powers_and_repetitions_in_sturmian_words} (and independently in
  \cite{2013:some_properties_of_abelian_return_words}) that abelian powers and their exponents in Sturmian words can be
  studied effectively using continued fractions; in fact it is almost impossible to do without them. We continue to use
  this powerful tool. We give combinatorial arguments to derive a certain inequality which must hold if a given number
  is the minimum abelian period of some factor. Then we proceed to study the inequality using continued fractions with
  little combinatorics involved. Some of the intermediate results presented could be of independent interest in the
  theory of continued fractions.

  The paper is organized as follows. In \autoref{sec:preliminaries}, we give the necessary definitions and background
  information on continued fractions, Sturmian words, and abelian equivalence. Auxiliary results needed for the main
  proofs are then presented in \autoref{sec:abelian_exponents}. The central proof ideas and derivation of the main
  inequality are given in \autoref{sec:main_inequality}; the actual proofs of the main results are presented in
  \autoref{sec:main_proofs}. We conclude the paper by briefly considering the so-called minimum $k$-abelian periods of
  factors of Sturmian words in \autoref{sec:k-abelian}; this is a further generalization of the notion of a period.
  
  \section{Preliminaries}\label{sec:preliminaries}
  We shall use standard notions and notation from combinatorics on words. These are found in, e.g.,
  \cite{2002:algebraic_combinatorics_on_words}, and we briefly repeat what we need here.

  An \emph{alphabet} is a finite nonempty set of \emph{letters}. A \emph{word} $a_0 a_1 \dotsm a_{n-1}$ of length $n$
  over $A$ is a finite sequence of letters of $A$. We refer to the empty word with the symbol $\varepsilon$. The length
  of a word $w$ is denoted by $\abs{w}$. In this paper, we only consider binary words, and we take them to be over the
  alphabet $\{0, 1\}$. By $\abs{w}_0$ (resp. $\abs{w}_1$), we refer to the number of letters $0$ (resp. $1$) in the
  word $w$. An \emph{infinite word} $\infw{w}$ is a map from $\N$ to an alphabet $A$, and we write, as is usual,
  $\infw{w} = a_0 a_1 \dotsm$ with $a_i \in A$ (we always index from $0$). We refer to infinite words in boldface
  symbols. Many of the notions given here extend naturally to infinite words.

  Given two words $u$ and $v$, their product $uv$ is formed by concatenating their letters. A word $z$ is a
  \emph{factor} of the word $w$ if $w = uzv$ for some words $u$ and $v$. If $u = \varepsilon$ (resp.
  $v = \varepsilon$), then $z$ is a \emph{prefix} (resp. \emph{suffix}) of $w$. The word $z$ is a \emph{proper} prefix
  (resp. proper suffix) if $z \neq \varepsilon$ and $v \neq \varepsilon$ (resp. $u \neq \varepsilon$). With $u^{-1}w$
  and $wv^{-1}$ we respectively refer to the words $zv$ and $uz$. By $w^n$, we mean the word $w \dotsm w$ where $w$ is
  repeated $n$ times. Such a word is called an \emph{$n$th power}, or a \emph{repetition}. If $w = uzv$, then we say
  that $z$ \emph{occurs} in $w$ in position $\abs{u}$. In other words, the position $\abs{u}$ defines an
  \emph{occurrence} of $z$ in $w$. When we say that a factor $z$ occurs in $w$ in \emph{phase} $n$ modulo $q$, we mean
  that $z$ occurs in $w$ in a position $i$ such that $i \equiv n \pmod{q}$.

  Let $w = a_0 a_1 \dotsm a_{n-1}$ with $a_i \in A$. As mentioned in the introduction, the word $w$ has \emph{period}
  $p$ if $a_i = a_{i+p}$ for all $i$ with $0 \leq i \leq n - 1 - p$. The \emph{reversal} $\mirror{w}$ of $w$ is defined
  to be the word $a_{n-1} \dotsm a_1 a_0$. The word $w$ is a \emph{palindrome} if $\mirror{w} = w$. If a word $u$ has
  $w$ as a prefix and a suffix and contains exactly two occurrences of $w$, then we say that $u$ is a \emph{complete
  first return to $w$}. A word $u$ is a \emph{complete first return to $w$ in the same phase} if $u$ has $w$ as a
  prefix and as a suffix, $\abs{u} \equiv 0 \pmod{\abs{w}}$, $u$ contains at least two occurrences of $w$, and if $w$
  occurs in $u$ in position $i$ such that $i \equiv 0 \pmod{\abs{w}}$, then $i = 0$ or $i = \abs{u} - \abs{w}$. For
  example, the word $01001$ is a complete first return to $01$, but not a complete first return to $01$ in the same
  phase. The word $01001101$ is not a complete first return to $01$, but it is a complete first return to $01$ in the
  same phase.

  An infinite word $\infw{x}$ is \emph{recurrent} if each of its factors occur in it infinitely many times. Let
  $\infw{x} = x_0 x_1 \dotsm$ and $\infw{y} = y_0 y_1 \dotsm$ be two infinite words over an alphabet $A$. We endow
  $A^\N$, the set of infinite words over $A$, with the topology determined by the metric $d$ defined by
  \begin{equation*}
    d(\infw{x}, \infw{y}) = 2^{-k},
  \end{equation*}
  where $k$ is the least integer such that $x_k \neq y_k$ if $x \neq y$, and $k = \infty$ otherwise. The \emph{shift
  map} $T\colon A^\N \to A^\N$ is defined by setting the $n$th letter of $T\infw{x}$ to be the $(n+1)$th letter of
  $\infw{x}$. In other words, $T$ maps $(x_n)$ to $(x_{n+1})$. A \emph{subshift} is a closed and $T$-invariant subset
  of $A^\N$.
  
  Before defining abelian equivalence and the related concepts precisely, let us first recall some facts on continued
  fractions and define Sturmian words. For a more extensive introduction to continued fractions and Sturmian words, we
  refer the reader to \cite[Ch.~4]{diss:jarkko_peltomaki}. Good books on continued fractions are, e.g.,
  \cite{1979:an_introduction_to_the_theory_of_numbers,1997:continued_fractions} whereas
  \cite{2002:algebraic_combinatorics_on_words,2002:substitutions_in_dynamics_arithmetics_and_combinatorics} are good
  sources on Sturmian words.

  \subsection{Continued Fractions}
  Every irrational real number $\alpha$ has a unique infinite continued fraction expansion:
  \begin{equation}\label{eq:cf}
    \alpha = [a_0; a_1, a_2, a_3, \ldots] = a_0 + \dfrac{1}{a_1 + \dfrac{1}{a_2 + \dfrac{1}{a_3 + \ldots}}}
  \end{equation}
  with $a_0 \in \Z$ and $a_t \in \Z_+$ for $t \geq 1$. The numbers $a_i$ are called the \emph{partial
  quotients} of $\alpha$. The rational numbers $[a_0; a_1, a_2, a_3, \ldots, a_k]$, denoted by $p_k / q_k$, are called
  \emph{convergents} of $\alpha$. The convergents satisfy the following recurrences:
  \begin{alignat*}{4}
    p_0 = a_0, &\qquad p_1 = a_1 a_0 + 1, &\qquad p_k = a_k p_{k-1} + p_{k-2}, \qquad& k \geq 2, \\
    q_0 = 1,   &\qquad q_1 = a_1,         &\qquad q_k = a_k q_{k-1} + q_{k-2}, \qquad& k \geq 2.
  \end{alignat*}  
  For convenience, we set $p_{-1} = 1$ and $q_{-1} = 0$. The \emph{semiconvergents} (or intermediate fractions)
  $p_{k,\ell} / q_{k,\ell}$ of $\alpha$ are defined as the fractions
  \begin{equation*}
    \frac{ \ell p_{k-1} + p_{k-2} }{ \ell q_{k-1} + q_{k-2} }
  \end{equation*}
  for $1 \leq \ell < a_k$ and $k \geq 2$ (if they exist). Notice that semiconvergents are not a subtype of convergents.
  We often do not refer to convergents or semiconvergents, but to their denominators $q_k$ or $q_{k,\ell}$, so we let
  $\qk{\alpha}$ denote the set of denominators of convergents of $\alpha$ and $\qkl{\alpha}$ denote the set of
  denominators of the convergents and semiconvergents of $\alpha$. We emphasize that the above number $q_{-1}$ defined
  for convenience does not belong to the sets $\qk{\alpha}$ and $\qkl{\alpha}$. Throughout the paper, we make the
  convention that $\alpha$ always refers to some fixed irrational number in $(0, 1)$ with continued fraction expansion
  \eqref{eq:cf}, convergents $q_k$, and semiconvergents $q_{k,\ell}$.

  \begin{example}
    Let $\varphi$ be the Golden ratio, that is, set $\varphi = (1 + \sqrt{5})/2$. Then
    $\varphi = [1; \overline{1}] \approx 1.62$. The number $1/\varphi^2$, approximately $0.38$, has continued fraction
    expansion $[0; 2, \overline{1}]$. Its convergents, related to the Fibonacci numbers, are
    \begin{equation*}
      \frac{0}{1}, \frac{1}{2}, \frac{1}{3}, \frac{2}{5}, \frac{3}{8}, \frac{5}{13}, \ldots.
    \end{equation*}
    Notice that this number does not have semiconvergents.
  \end{example}

  For a real number $x$, we let $\{x\}$ to be its fractional part and $\norm{x} = \min\{\{x\}, 1 - \{x\}\}$. Here
  $\norm{x}$ measures the distance of $x$ to the nearest integer. It is often useful to reduce numbers of the form
  $n\alpha$ modulo $1$ and imagine them lying on the circle $\T$ having circumference $1$, which we identify with the
  unit interval $[0, 1)$. See \autoref{fig:points} for a picture of the numbers $\{-\alpha\}$, $\{-2\alpha\}$,
  $\ldots$, $\{-5\alpha\}$ lying on $\T$ when $\alpha = 1/\varphi^2$. In fact, adding $\alpha$ to its multiple can be
  viewed as the rotation
  \begin{equation*}
    R\colon \T \to \T, R(x) = \{x + \alpha\}
  \end{equation*}
  on $\T$.

  \begin{figure}
   \centering
   \begin{tikzpicture}
     \newcommand\CircleRadiusI{1.8}
     \newcommand\Slope{0.381966}

     \draw (0,0) circle (\CircleRadiusI);
     \foreach \n in {0,...,5}{
       \pgfmathparse{(-1)*\n*\Slope}
       \filldraw[fill=black] ({(\pgfmathresult - floor(\pgfmathresult)) * 360}:\CircleRadiusI) circle (1pt);
     }

     \node at (2.3,0.5)    {\scriptsize{$[00100]$}};
     \node at (1.5,1.7)    {\scriptsize{$[00101]$}};
     \node at (-1.7,1.5)   {\scriptsize{$[01001]$}};
     \node at (-2.25,-0.5) {\scriptsize{$[01010]$}};
     \node at (-0.2,-2.05) {\scriptsize{$[10010]$}};
     \node at (2.1,-1.0)   {\scriptsize{$[10100]$}};

     \node at (1.6,0) {\scriptsize{$0$}};
     \node at (-1.05,-1.1) {\scriptsize{$-\alpha$}};
     \node at (0.15,1.5) {\scriptsize{$-2\alpha$}};
     \node at (0.8,-1.2) {\scriptsize{$-3\alpha$}};
     \node at (-1.4,0.3) {\scriptsize{$-4\alpha$}};
     \node at (1.1,0.9) {\scriptsize{$-5\alpha$}};
   \end{tikzpicture}
   \caption{The points $0$, $\{-\alpha\}$, $\{-2\alpha\}$, $\ldots$, $\{-5\alpha\}$ on the circle $\T$ when
            $\alpha = 1/\varphi^2$. The intervals of the factors of length $5$ of the Fibonacci word are also
            included.}
   \label{fig:points}
  \end{figure}
 
  The denominators of convergents of $\alpha$ satisfy the \emph{best approximation property}:
  \begin{equation*}
    \norm{q_k \alpha} = \min_{0 < n < q_{k+1}} \norm{n\alpha}.
  \end{equation*}
  This means that the point $\{q_k\alpha\}$ is closer to the point $0$ on $\T$ than the points $\{\alpha\}$,
  $\{2\alpha\}$, $\ldots$, $\{(q_{k+1} - 1)\alpha\}$. Information on the quality of approximation of the numbers
  $\{q_{k,\ell}\alpha\}$ related to semiconvergents is given in
  \cite[Prop.~2.2]{2015:characterization_of_repetitions_in_sturmian_words_a_new}, but this information is not needed in
  this paper. For deeper understanding how the special points $\{q_k\alpha\}$ and $\{q_{k,\ell}\alpha\}$ lie on $\T$
  see \autoref{fig:intervals} (ignore the negative signs for now; they are needed when we work with Sturmian words).
  The details on why the picture is correct are found in the proof of \autoref{lem:larger_lower_bound}. It is important
  to understand how the next point closest to $0$ is formed from the previously closest points $\{q_k\alpha\}$ and
  $\{q_{k-1}\alpha\}$. Notice that $q_{k+1,1} = q_k + q_{k-1}$. The point $\{q_{k+1,1}\alpha\}$ related to the
  denominator of the convergent or semiconvergent $q_{k+1,1}$ is formed by performing $q_k$ rotations on the point
  $\{q_{k-1}\alpha\}$. This point $\{q_{k+1,1}\alpha\}$ is closer to $0$ than $\{q_{k-1}\alpha\}$---as is evident from
  \autoref{fig:intervals}---but it is not necessarily closer than $\{q_k\alpha\}$ if $a_{k+1} > 1$. In fact, we have
  \begin{equation*}
    \norm{q_{k+1,1}\alpha} = \norm{q_{k-1}\alpha} - \norm{q_k\alpha}.
  \end{equation*}
  Then successive $q_k$ rotations are added forming the points $\{q_{k+1,2}\alpha\}$, $\ldots$,
  $\{q_{k+1,a_{k+1}-1}\alpha\}$ that are successively closer to $0$ than $\{q_{k-1}\alpha\}$, but not closer than
  $\{q_k\alpha\}$. Finally the point $\{q_{k+1,a_{k+1}}\alpha\}$, i.e., the point $\{q_{k+1}\alpha\}$, is closer to $0$
  than $\{q_k\alpha\}$.

  By the preceding description, we see that
  $\norm{q_{k,\ell}\alpha} = \norm{q_{k,\ell-1}\alpha} - \norm{q_{k-1}\alpha}$. From this identity, it is not
  difficult to derive by induction that
  \begin{equation}\label{eq:cf_identity}
    \alpha = p_{k+1}\norm{q_k\alpha} + \norm{q_{k+1}\alpha}
  \end{equation}
  for all $k \geq 1$ when $a_0 = 0$ (i.e., when $\alpha \in (0, 1)$). Let then $\alpha_t$ for $t \geq 1$ denote the
  number with the continued fraction expansion $[a_t; a_{t+1}, a_{t+2}, \ldots]$. A short proof by induction shows that
  \begin{equation}\label{eq:cf_identity_2}
    \frac{\norm{q_k\alpha}}{\norm{q_{k+1}\alpha}} = \alpha_{k+2}
  \end{equation}
  for all $k \geq 1$. The following identity is well-known (see, e.g.,
  \cite[Sect.~10.7]{1979:an_introduction_to_the_theory_of_numbers}) for $k \geq 0$:
  \begin{equation*}
    \alpha - \frac{p_k}{q_k} = \frac{(-1)^k}{q_k(\alpha_{k+1}q_k + q_{k-1})}.
  \end{equation*}
  This identity shows that
  \begin{equation}\label{eq:norm_alpha}
    \norm{q_k\alpha} = \frac{1}{\alpha_{k+1}q_k + q_{k-1}}
  \end{equation}
  for $k \geq 0$.

  We conclude by a simple lemma needed in \autoref{sec:abelian_exponents} and in the proof of \autoref{lem:cases_2_3}.

  \begin{lemma}\label{lem:consecutive_norm}
    Let $\ell$ be a nonnegative integer. Then
    \begin{equation*}
      \frac{1}{\norm{q_{k-1}\alpha} + \ell\norm{q_k \alpha}} = \frac{\alpha_{k+1}q_k + q_{k-1}}{\alpha_{k+1} + \ell}
    \end{equation*}
    for all $k \geq 1$.
  \end{lemma}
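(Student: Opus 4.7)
The plan is to prove the lemma by direct manipulation using the two identities already derived in the preceding discussion, namely \eqref{eq:cf_identity_2} and \eqref{eq:norm_alpha}, so essentially no new ideas are needed.

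First I would rewrite $\norm{q_{k-1}\alpha}$ in terms of $\norm{q_k\alpha}$. Applying \eqref{eq:cf_identity_2} with index $k-1$ in place of $k$ (which is valid for $k \geq 2$, and is easily checked separately for $k=1$ using the formulas $q_0 = 1$, $q_{-1} = 0$, and $\alpha = 1/\alpha_1$) gives
\begin{equation*}
  \norm{q_{k-1}\alpha} = \alpha_{k+1}\norm{q_k\alpha}.
\end{equation*}
Substituting this into the denominator on the left-hand side factors out $\norm{q_k\alpha}$:
\begin{equation*}
  \frac{1}{\norm{q_{k-1}\alpha} + \ell \norm{q_k\alpha}} = \frac{1}{(\alpha_{k+1} + \ell)\norm{q_k\alpha}}.
\end{equation*}
Finally, applying \eqref{eq:norm_alpha} to replace $\norm{q_k\alpha}$ by $1/(\alpha_{k+1}q_k + q_{k-1})$ yields exactly the right-hand side.

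There is no real obstacle; the only minor issue is that a literal reading of \eqref{eq:cf_identity_2} states the identity only for $k \geq 1$, while the lemma claims the result down to $k = 1$. So a brief separate verification of the base case $k=1$ (or equivalently a direct computation of $\norm{q_0\alpha}/\norm{q_1\alpha}$ using the recurrence $\alpha_1 = a_1 + 1/\alpha_2$) is needed. Once that is handled, the proof reduces to two line-long substitutions.
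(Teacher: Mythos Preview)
Your argument is correct. The route you take is slightly different from the paper's: the paper does not invoke \eqref{eq:cf_identity_2} but instead starts from \eqref{eq:norm_alpha} at index $k-1$, writing $\norm{q_{k-1}\alpha} = (\alpha_k q_{k-1} + q_{k-2})^{-1}$, and then expands $\alpha_k = a_k + 1/\alpha_{k+1}$ through a short chain of equalities to arrive (in effect) at the same relation $\norm{q_{k-1}\alpha} = \alpha_{k+1}\norm{q_k\alpha}$ before combining it with $\ell\norm{q_k\alpha}$. Your observation that \eqref{eq:cf_identity_2}, shifted down by one index, \emph{is} exactly that relation short-circuits the algebraic expansion and makes the proof a couple of lines long. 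The only extra work this costs you is the separate verification at $k=1$, which you handle correctly (and which the paper avoids because \eqref{eq:norm_alpha} is stated for $k \geq 0$, so no index shift is needed there). Both approaches are equally elementary; yours is the more economical one.
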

  \begin{proof}
    By \eqref{eq:norm_alpha}, we have for all $k \geq 1$ that
    \begin{align*}
      \norm{q_{k-1}\alpha} &= (\alpha_k q_{k-1} + q_{k-2})^{-1} = (q_k + [0; a_{k+1}, \ldots]q_{k-1})^{-1} \\
                           &= (q_k + (a_{k+1} + \alpha_{k+2}^{-1})^{-1}q_{k-1})^{-1} \\
                           &= ( (a_{k+1} + \alpha_{k+2}^{-1})^{-1}( (a_{k+1} + \alpha_{k+2}^{-1})q_k + q_{k-1}) )^{-1} \\
                           &= ( (a_{k+1} + \alpha_{k+2}^{-1})^{-1}(\alpha_{k+1}q_k + q_{k-1}) )^{-1}
    \end{align*}
    (the computation indeed works with the convention $q_{-1} = 0$ when $k = 1$), so
    \begin{equation*}
      \norm{q_{k-1}\alpha} + \ell \norm{q_k \alpha} = \frac{\ell + a_{k+1} + \alpha_{k+2}^{-1}}{\alpha_{k+1}q_k + q_{k-1}} = \frac{\ell + \alpha_{k+1}}{\alpha_{k+1}q_k + q_{k-1}}.
    \end{equation*}
  \end{proof}

  \subsection{Sturmian Words}
  For the purposes of this paper, Sturmian words are best defined as codings of orbits of irrational rotations on the
  circle $\T$. For alternative definitions and proofs of the facts listed below, we refer the reader to
  \cite{2002:algebraic_combinatorics_on_words,2002:substitutions_in_dynamics_arithmetics_and_combinatorics}.

  Let $\alpha \in (0, 1)$ be an irrational real number, and divide $\T$ into two disjoint intervals $I_0$ and $I_1$ by
  the points $0$ and $1 - \alpha$. The map $R\colon \T \to \T$, $R(\rho) = \{\rho + \alpha\}$ defines an irrational
  rotation on $\T$. We shall code the orbit of a point $\rho$ as follows. Let $\nu$ be the coding function
  \begin{equation*}
    \nu(x) = \begin{cases}
               0, &\text{if $x \in I_0$,} \\
               1, &\text{if $x \in I_1$,}
             \end{cases}
  \end{equation*}
  and let $\infw{s}_{\rho,\alpha}$ be the infinite binary word whose $n$th letter (indexing from $0$) equals
  $\nu(R^n(\rho))$. We call this infinite word $\infw{s}_{\rho,\alpha}$ a \emph{Sturmian word of slope $\alpha$ and
  intercept $\rho$}. Our definition leaves the behavior of $\nu$ on the endpoints of $I_0$ and $I_1$ ambiguous. To fix
  this, we dictate that there are exactly two options: either select $I_0 = [0, 1-\alpha)$ and $I_1 = [1-\alpha, 1)$
  (case $0 \in I_0$) or set $I_0 = (0, 1-\alpha]$ and $I_1 = (1-\alpha, 1]$ (case $0 \notin I_0$). For a typical
  intercept $\rho$, the choice makes no difference, but a difference is seen if $\rho$ is of the form $\{-n\alpha\}$
  for some $n \geq 0$. We define the \emph{Sturmian subshift of slope $\alpha$}, denoted by $\Omega_\alpha$, to be the
  set of all Sturmian words of slope $\alpha$ and intercept $\rho$ obtained in both cases $0 \in I_0$ and $0 \notin I_0$.
  We refer to the words of $\Omega_\alpha$ as the Sturmian words of slope $\alpha$. We remark that
  $\Omega_\alpha \cap \Omega_\beta \neq \emptyset$ if and only if $\alpha = \beta$.

  \begin{example}
    Let $\alpha = 1/\varphi^2$ where $\varphi$ is the Golden ratio. Here $\alpha = [0; 2, \overline{1}] \approx 0.38$.
    The Sturmian word
    \begin{equation*}
      \infw{s}_{\alpha,\alpha} = 010010100100101001010010010100100101001010010010100101001001010010 \dotsm
    \end{equation*}
    of slope $\alpha$ and intercept $\alpha$ is the Fibonacci word $\infw{f}$ mentioned in the introduction. The
    subshift $\Omega_\alpha$ of slope $\alpha$ is called the \emph{Fibonacci subshift}.
  \end{example}

  The decision if $0 \in I_0$ or $0 \notin I_0$ is often irrelevant because all words in $\Omega_\alpha$ have the same
  \emph{language} (set of factors) $\Lang{\alpha}$. It is irrelevant in this paper too with the exceptions of the
  proofs of two minor claims. However, both options are needed for proving equivalence with the alternative definitions
  of Sturmian words and are needed to make $\Omega_\alpha$ a subshift.

  Let $w$ be a word $a_0 a_1 \dotsm a_{n-1}$ of length $n$ in $\Lang{\alpha}$, and set
  \begin{equation*}
    [w] = I_{a_0} \cap R^{-1}(I_{a_1}) \cap \dotsm \cap R^{-(n-1)}(I_{a_{n-1}}).
  \end{equation*}
  Then $[w]$ is the unique subinterval of $\T$ such that $\infw{s}_{\rho,\alpha}$ begins with $w$ if and only if
  $\rho \in [w]$. The points $0$, $\{-\alpha\}$, $\ldots$, $\{-n\alpha\}$ partition the circle $\T$ into $n + 1$
  subintervals that are in one-to-one correspondence with the words of $\Lang{\alpha}$ of length $n$. See
  \autoref{fig:points} for the intervals of the factors of length $5$ of the Fibonacci word. We let $I(x, y)$,
  $\{x\} < \{y\}$, stand for the interval $[\{x\}, \{y\})$ if $0 \in I_0$ and for $(\{x\}, \{y\}]$ if $0 \notin I_0$.
  We call the words of $\Lang{\alpha}$ the \emph{factors of slope $\alpha$}.

  Moreover, Sturmian words are recurrent, and the language $\Lang{\alpha}$ is \emph{closed under reversal}: for each
  word $w$ in $\Lang{\alpha}$, its reversal $\mirror{w}$ is also in $\Lang{\alpha}$. The only difference between
  Sturmian words of slope $[0; 1, a_2, a_3, \ldots]$ and Sturmian words of slope $[0; a_2 + 1, a_3, \ldots]$ is that
  the roles of the letters $0$ and $1$ are reversed. Thus we make the typical assumption that $a_1 \geq 2$ in
  \eqref{eq:cf}. This means that $\alpha \in (0, \tfrac12)$.

  \subsection{Abelian Powers, Repetitions, and Periods}
  Many of the notions and results presented in this subsection and the following subsection are found in
  \cite{2016:abelian_powers_and_repetitions_in_sturmian_words}. However, we use the notation of
  \cite[Ch.~4.7]{diss:jarkko_peltomaki}.

  Let $w$ be a finite binary word over the alphabet $\{0, 1\}$. The \emph{Parikh vector} (or abelianization)
  $\Parikh{w}$ of $w$ is defined to be the vector $(\abs{w}_0, \abs{w}_1)$ counting the number of occurrences of the
  letters $0$ and $1$ in $w$. Two words $u$ and $v$ are \emph{abelian equivalent} if $\Parikh{u} = \Parikh{v}$. If
  $\mathcal{P}$ and $\mathcal{Q}$ are two Parikh vectors and $\mathcal{P}$ is componentwise less than or equal to
  $\mathcal{Q}$ but is not equal to $\mathcal{Q}$, then we say that $\mathcal{P}$ is \emph{contained} in $\mathcal{Q}$.

  Using the above notions, we generalize the notion of a period to the abelian setting.

  \begin{definition}\label{def:abelian_period}
    An \emph{abelian decomposition} of a word $w$ is a factorization $w = u_0 u_1 \dotsm u_{n-1} u_n$ such that
    $n \geq 2$, the words $u_1$, $\ldots$, $u_{n-1}$ have a common Parikh vector $\mathcal{P}$ (i.e., they are abelian
    equivalent), and the Parikh vectors of $u_0$ and $u_n$ are contained in $\mathcal{P}$. The words $u_0$ and $u_n$
    are respectively called the \emph{head} and the \emph{tail} of the decomposition. The common length $m$ of the
    words $u_1$, $\ldots$, $u_{n-1}$ is called an \emph{abelian period} of $w$. The \emph{minimum abelian period}
    (i.e., the shortest) of $w$ is denoted by $\mu_w$.

    If $n \geq 3$, then we say that $w$ is an \emph{abelian repetition} of period $m$ and exponent $\abs{w}/m$. If
    $n \geq 3$ and the head $u_0$ and the tail $u_n$ are empty, then we say that $w$ is an \emph{abelian power} of
    period $m$ and exponent $\abs{w}/m$. If $n \leq 2$, then we say that $w$ is a \emph{degenerate abelian repetition}
    (of period $m$) or a \emph{degenerate abelian power} (of period $m$) if the head and tail are empty.
  \end{definition}

  For example, the word $abaababa$ has abelian decompositions $a \cdot ba \cdot ab \cdot ab \cdot a$ (of period $2$ and
  exponent $8/2$) and $\varepsilon \cdot aba \cdot aba \cdot ba$ (of period $3$ and exponent $8/3$).

  The following lemma is immediate.

  \begin{lemma}\label{lem:abelian_period_covering}
    Let $u$ be a factor of a word $w$. Then $\mu_w \geq \mu_u$. On the other hand, if $w$ has an abelian period $m$
    such that $m \leq \abs{u}$, then $m$ is also an abelian period of $u$.
  \end{lemma}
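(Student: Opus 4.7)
The plan is to use the equivalent characterization of abelian period $m$: the word $w$ has abelian period $m$ if and only if $w$ is a factor of some abelian power of period $m$, namely a concatenation of equally long, pairwise abelian equivalent blocks of length $m$. This is essentially the informal description given in the introduction, and the equivalence with Definition~\ref{def:abelian_period} is straightforward. From a decomposition $w = u_0 u_1 \cdots u_{n-1} u_n$ with common middle Parikh vector $\mathcal{P}$, one extends $u_0$ and $u_n$ to length-$m$ blocks of Parikh vector $\mathcal{P}$ (possible because $\Parikh{u_0}$ and $\Parikh{u_n}$ are contained in $\mathcal{P}$), giving an abelian power containing $w$ as a factor. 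Conversely, if $w$ is a factor of such an abelian power, the full blocks sitting inside $w$ serve as middle parts of a decomposition, with the remaining partial blocks at the boundaries as head and tail.

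For the substantive second claim, I would assume $w$ has abelian period $m$ with $m \leq \abs{u}$, realize $w$ as a factor of some abelian power $V$ of period $m$ by the construction above, and observe that $u$, being a factor of $w$, is a factor of $V$. Since $\abs{u} \geq m$, the reverse direction of the characterization applied to $u$ then exhibits $m$ as an abelian period of $u$. The one delicate case is when $u$ fails to contain a full block of $V$, which can only happen when $m \leq \abs{u} < 2m$ and $u$ straddles two adjacent blocks of $V$ unaligned: for $\abs{u} = m$ I would use the trivial decomposition $\varepsilon \cdot u \cdot \varepsilon$, and for $m < \abs{u} < 2m$ I would re-embed $u$ in a different abelian power of period $m$ in which $u$ contains a full block, or directly locate a length-$m$ window inside $u$ whose Parikh vector dominates the corresponding head and tail componentwise.

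For the first claim $\mu_w \geq \mu_u$, I would split on whether $\mu_w \leq \abs{u}$: if so, applying the second claim with $m = \mu_w$ yields $\mu_u \leq \mu_w$; otherwise $\mu_u \leq \abs{u} < \mu_w$ holds trivially, since $u$ always admits the decomposition $\varepsilon \cdot u \cdot \varepsilon$ witnessing abelian period $\abs{u}$. The only genuinely delicate step in the whole argument is the straddling edge case in the second claim; everything else is routine verification.
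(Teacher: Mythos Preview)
The paper declares this lemma ``immediate'' and offers no proof. Your approach through the equivalent characterization---$w$ has abelian period $m$ if and only if $w$ is a factor of an abelian power of period $m$, which is exactly the informal definition the paper gives in its introduction---is the natural one and is clearly what the author has in mind. With that characterization the second claim is a one-liner, and your derivation of the first claim from the second (splitting on whether $\mu_w\le|u|$) is clean and correct.

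You are also right that the only nontrivial step is the ``straddling'' edge case in the equivalence, where $m<|u|<2m$ and $u$ contains no full block of the ambient abelian power. Your two proposed fixes (re-embedding, or locating a dominating length-$m$ window) are only sketched, and this point is in fact more delicate than your wording suggests: the equivalence between Definition~\ref{def:abelian_period} and the informal ``factor of an abelian power'' description genuinely fails over alphabets of size at least three---for $m\ge 3$ the word $u=ab^{m-1}c$ is a factor of an abelian power of period $m$ (take blocks with Parikh vector $(1,m-2,1)$) yet admits no abelian decomposition of period $m$---so the paper's standing binary-alphabet assumption is essential for the lemma as stated. In the binary case a short sliding-window argument does close the gap (the key observation being that the length-$\ell$ prefix and length-$\ell$ suffix of $u$ are disjoint, so they cannot both contain more than half of $|u|_x$ for the same letter $x$), but this takes a few lines to verify and is not quite as immediate as either you or the paper suggest. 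Apart from this your outline is complete and already more careful than the paper's own treatment.
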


  \begin{definition}
    Let $w$ be a finite or infinite word. Then the set
    \begin{equation*}
      \{\mu_u : \text{$u$ is a nonempty factor of $w$}\}
    \end{equation*}
    is called the \emph{abelian period set of $w$}.
  \end{definition}

  \subsection{Abelian Powers in Sturmian Words}
  The starting point of the study of abelian equivalence in Sturmian words is the following result stating that factors
  of length $n$ of a Sturmian word belong to exactly two abelian equivalence classes and that these classes can be
  identified with the subintervals of $\T$ separated by the points $0$ and $\{-n\alpha\}$. Let $w$ be a factor of slope
  $\alpha$. If $w$ contains the minimum (resp. maximum) number of occurrences of the letter $1$ among factors of length
  $\abs{w}$, then we say that $w$ is \emph{light} (resp. \emph{heavy}).

  \begin{proposition}\label{prp:ab_eq_geometric}
    \cite[Prop.~3.3]{2016:abelian_powers_and_repetitions_in_sturmian_words},
    \cite[Thm.~19]{2013:some_properties_of_abelian_return_words}
    Each factor of length $n$ in $\Lang{\alpha}$ is either light or heavy. A factor $w$ in $\Lang{\alpha}$ is light if
    and only if $[w] \subseteq I(0, -\abs{w}\alpha\}$. Moreover, if $\{-n\alpha\} \geq 1 - \alpha$, then all heavy
    factors of length $n$ begin and end with $1$, while if $\{-n\alpha\} \leq 1 - \alpha$, then each light factor of
    length $n$ begins and ends with $0$.
  \end{proposition}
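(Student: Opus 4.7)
My plan is to study the map $f(\rho) = \abs{w(\rho)}_1$ on the circle $\T$, where $w(\rho)$ is the length-$n$ prefix of $\infw{s}_{\rho,\alpha}$. By definition $f(\rho) = \#\{0 \leq i \leq n-1 : R^i(\rho) \in I_1\}$, so $f$ is piecewise constant and can only jump where some $R^i(\rho)$ hits an endpoint of $I_1$, i.e., where $R^i(\rho) \in \{0, 1-\alpha\}$. This pins down the candidate discontinuity set as $\{\{-j\alpha\} : 0 \leq j \leq n\}$, which cuts $\T$ into $n+1$ arcs and matches the known partition of $\T$ into the intervals $[w]$ over the length-$n$ factors in $\Lang{\alpha}$.

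Next I would do a local analysis at each candidate. At $\rho = \{-j\alpha\}$ with $1 \leq j \leq n-1$, both $R^{j-1}(\rho) = 1-\alpha$ and $R^j(\rho) = 0$ hold, producing two opposite jumps as $\rho$ increases: position $j-1$ swaps from $0$ to $1$, while position $j$ swaps from $1$ to $0$. These cancel, so $f$ is unchanged even though $w(\rho)$ changes. At $\rho = 0$ only position $0$ is affected (changing from $1$ to $0$), so $f$ drops by $1$; at $\rho = \{-n\alpha\}$ only position $n-1$ is affected (changing from $0$ to $1$), so $f$ rises by $1$. Hence $f$ attains exactly two consecutive integer values on $\T$, the smaller on the arc $I(0, -n\alpha)$ (giving the light factors) and the larger on the complementary arc (giving the heavy factors). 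This yields the balance statement and the geometric characterization of lightness.

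For the last assertion, suppose $\{-n\alpha\} \geq 1-\alpha$. Then $I(-n\alpha, 0) \subseteq (1-\alpha, 1) \subseteq I_1$, so every heavy factor begins with $1$. Its image under $R^{n-1}$ is an arc of the same length $1 - \{-n\alpha\} \leq \alpha$ ending at $R^{n-1}(\{-n\alpha\}) = \{-\alpha\} = 1-\alpha$, hence contained in $[1-\alpha, 1) \subseteq I_1$, proving the last letter is also $1$. The case $\{-n\alpha\} \leq 1-\alpha$ for light factors is symmetric: $I(0, -n\alpha) \subseteq I_0$ gives the first letter, and its image under $R^{n-1}$, of length $\{-n\alpha\} \leq 1-\alpha$ and ending at $1-\alpha$, lies in $I_0$.

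The main bookkeeping obstacle will be tracking the two conventions $0 \in I_0$ vs.\ $0 \notin I_0$ at the discontinuity points themselves, together with the directional arguments ``as $\rho$ increases'' that underlie the jump analysis; since the claims concern the interval $[w]$ and the half-open arcs captured by the notation $I(\cdot,\cdot)$, which already absorb the convention choice, this should reduce to routine endpoint bookkeeping rather than a substantive difficulty.
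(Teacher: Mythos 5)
The paper does not prove this proposition---it is quoted verbatim from the cited references---so there is no in-paper argument to compare against; your proof must stand on its own, and it does. The jump-counting argument is sound: writing $\abs{w(\rho)}_1=\#\{i: R^i(\rho)\in I_1\}$, observing that the indicator of ``position $i$ is a $1$'' switches exactly at $\rho=\{-(i+1)\alpha\}$ and $\rho=\{-i\alpha\}$, and checking that the two switches at an interior point $\{-j\alpha\}$ ($1\le j\le n-1$) cancel while the lone switches at $0$ and $\{-n\alpha\}$ give a $-1$ and a $+1$, correctly forces $f$ to take exactly two consecutive values (the points being distinct by irrationality), with the smaller value on $I(0,-n\alpha)$; since a binary word's Parikh vector is determined by its length and $\abs{w}_1$, this yields both the light/heavy dichotomy and the geometric characterization. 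The first/last-letter argument is also correct in substance, though you have one orientation slip: in the heavy case the image of $I(-n\alpha,1)$ under the rotation $R^{n-1}$ \emph{begins} at $1-\alpha$ (it is the arc of length $1-\{-n\alpha\}\le\alpha$ extending forward from $1-\alpha$, hence inside $I_1$); ``ending at $1-\alpha$'' would place it in $I_0$. In the light case ``ending at $1-\alpha$'' is the correct description. With that wording fixed, and the routine endpoint bookkeeping you already flag for the two conventions $0\in I_0$ versus $0\notin I_0$, the proof is complete.
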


  The following proposition is a direct consequence of \autoref{prp:ab_eq_geometric}, but it is best to state it for
  clarity. See \cite[Lemma~4.2]{2016:abelian_powers_and_repetitions_in_sturmian_words} for more precise information.

  \begin{proposition}\label{prp:ab_eq_geometric_2}
    Let $\infw{s}_{\rho,\alpha} = a_0 a_1 \dotsm$ be a Sturmian word of slope $\alpha$ and intercept $\rho$. Then its
    factor $a_n \dotsm a_{n+m-1} \dotsm a_{n+em-1}$ is an abelian power of period $m$ and exponent $e$, $e \geq 2$, if
    and only if the $e$ points $\{\rho + (n + im)\alpha\}$, $i = 0$, $\ldots$, $e - 1$, are all either in the interval
    $I(0, -m\alpha)$ or in the interval $I(-m\alpha, 1)$.
  \end{proposition}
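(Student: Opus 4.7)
The plan is to reduce the claim directly to \autoref{prp:ab_eq_geometric} by analyzing the $e$ length-$m$ blocks of the candidate abelian power. Write $u_i = a_{n+im} \dotsm a_{n+(i+1)m-1}$ for $i = 0, \ldots, e-1$, so that the factor under consideration equals $u_0 u_1 \dotsm u_{e-1}$. By the definition of an abelian power (with empty head and tail), this factor is an abelian power of period $m$ and exponent $e$ if and only if all the $u_i$ share a common Parikh vector.

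Next, I would invoke \autoref{prp:ab_eq_geometric}, which partitions the factors of $\Lang{\alpha}$ of length $m$ into exactly two abelian equivalence classes (light and heavy). Consequently, the blocks $u_0, \ldots, u_{e-1}$ are all abelian equivalent if and only if they are all light or all heavy. Since $u_i$ is the prefix of length $m$ of the Sturmian word with intercept $R^{n+im}(\rho) = \{\rho + (n+im)\alpha\}$, one has $\{\rho + (n+im)\alpha\} \in [u_i]$. By \autoref{prp:ab_eq_geometric}, $u_i$ is light iff $[u_i] \subseteq I(0,-m\alpha)$, equivalently iff $\{\rho + (n+im)\alpha\} \in I(0,-m\alpha)$; symmetrically, $u_i$ is heavy iff $\{\rho+(n+im)\alpha\} \in I(-m\alpha,1)$, since these two half-open intervals partition $\T$.

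Combining the two steps yields the stated equivalence: the factor is an abelian power of period $m$ and exponent $e$ precisely when the $e$ points $\{\rho+(n+im)\alpha\}$ all lie in $I(0,-m\alpha)$ (all blocks light) or all lie in $I(-m\alpha,1)$ (all blocks heavy). The only mild subtlety is the handling of the endpoint convention $I(\cdot,\cdot)$ coming from the dichotomy $0 \in I_0$ versus $0 \notin I_0$, but since the endpoints $0$ and $\{-m\alpha\}$ are precisely the division points used throughout and are consistent with the definition of $[u_i]$ from \autoref{prp:ab_eq_geometric}, no additional care beyond bookkeeping is required. I expect no essential obstacle here; this proposition is really a reformulation of \autoref{prp:ab_eq_geometric} applied blockwise.
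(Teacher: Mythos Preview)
Your proposal is correct and matches the paper's approach exactly: the paper does not give an explicit proof but simply states that the proposition is ``a direct consequence of \autoref{prp:ab_eq_geometric}'' (with a pointer to \cite[Lemma~4.2]{2016:abelian_powers_and_repetitions_in_sturmian_words} for more detail), and your argument is precisely the blockwise application of that proposition that the paper has in mind. The only point worth making explicit is why $[u_i] \subseteq I(0,-m\alpha)$ is equivalent to the single point $\{\rho+(n+im)\alpha\}$ lying in $I(0,-m\alpha)$: this holds because the intervals $[w]$ for $\abs{w}=m$ have endpoints among $0,\{-\alpha\},\ldots,\{-m\alpha\}$, so each such interval lies entirely on one side of the partition by $0$ and $\{-m\alpha\}$.
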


  \begin{remark}\label{rem:singular}
    Consider factors of a Sturmian word of slope $\alpha$ and of length $q_k$ for some $k \geq 0$. By the best
    approximation property, the point $\{-q_k\alpha\}$ is closest to the point $0$ among the points $\{-\alpha\}$,
    $\{-2\alpha\}$, $\ldots$, $\{-q_k \alpha\}$. This means that the interval separated by the points $\{-q_k\alpha\}$
    and $0$ is the interval $[s]$ of a unique word $s$ of length $q_k$. The word $s$ is called the \emph{singular
    factor} of length $q_k$. By \autoref{prp:ab_eq_geometric}, we see that the factors of length $q_k$ that do not
    equal $s$ are abelian equivalent.
    
    The singular factors play a crucial role in deriving the main inequality in \autoref{sec:main_inequality}. Singular
    factors have been studied before in other contexts; see
    \cite{1999:lyndon_words_and_singular_factors_of_sturmian,2003:some_properties_of_the_factors_of_sturmian_sequences}.
    The previous approaches have been combinatorial, but here we derive the needed results by number-theoretic means.
  \end{remark}

  We need the following result on singular factors.

  \begin{lemma}\label{lem:singular_properties}
    The singular factor $s$ of length $q_k$ has the following properties:
    \begin{enumerate}[(i)]
      \item $s$ begins and ends with the same letter;
      \item $s$ is a palindrome; and
      \item the Parikh vectors of proper prefixes and suffixes of $s$ are contained in the Parikh vectors of all
            factors of length $q_k$.
    \end{enumerate}
  \end{lemma}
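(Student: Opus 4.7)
The plan is to deduce all three assertions from the single observation that $s$ is the unique factor of length $q_k$ in its abelian equivalence class. The best approximation property prevents any of the points $\{-i\alpha\}$ with $0 < i < q_k$ from lying in the small arc of $\T$ bounded by $0$ and $\{-q_k\alpha\}$, so that small arc is a single cell $[s]$ of the length-$q_k$ partition; exactly one factor has its interval there, namely $s$, while the other $q_k$ factors of length $q_k$ all sit in the opposite abelian class. Assertion (ii) is then immediate: $\Lang{\alpha}$ is closed under reversal, so $\mirror{s}$ is a length-$q_k$ factor with the same Parikh vector as $s$, and uniqueness forces $\mirror{s} = s$.

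For (i), I would pin down the side of $0$ on which $\{-q_k\alpha\}$ lies by exploiting the sign alternation of $q_k\alpha - p_k$. Because $a_1 \geq 2$ gives $\alpha < \tfrac12$ and $\norm{q_k\alpha} < \norm{q_0\alpha} = \alpha$ for all $k \geq 1$, one obtains $\{-q_k\alpha\} \in (0,\alpha)$ for odd $k$ and $\{-q_k\alpha\} \in (1-\alpha,1)$ for even $k$, with the case $k = 0$ handled directly (giving $s = 1$). In the odd case $[s] \subseteq I(0,-q_k\alpha)$ and $\{-q_k\alpha\} \leq 1-\alpha$, so \autoref{prp:ab_eq_geometric} yields that $s$ is light and begins and ends with $0$; the even case is symmetric and produces $s$ heavy, beginning and ending with $1$.

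For (iii), I would use (i) directly. Let $\sigma$ denote the common first and last letter of $s$. A proper prefix of $s$ omits its last letter and a proper suffix omits its first, so in either case the count of $\sigma$ in $u$ is at most $\abs{s}_\sigma - 1$, while the count of $1-\sigma$ is at most $\abs{s}_{1-\sigma}$. Since $s$ is extremal in its abelian class and the opposite class differs by one in each coordinate, every length-$q_k$ factor $w$ satisfies $\abs{w}_\sigma \geq \abs{s}_\sigma - 1$ and $\abs{w}_{1-\sigma} \geq \abs{s}_{1-\sigma}$. Combining these yields $\Parikh{u} \leq \Parikh{w}$ componentwise, and since $\abs{u} < \abs{w}$ the inequality is strict in at least one coordinate, which is precisely the claimed containment.

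No step looks particularly deep; the only place that needs genuine care is the parity-based identification of the side of $0$ on which $\{-q_k\alpha\}$ sits, since this fixes whether $s$ is heavy or light and in turn governs both (i) and the extremality argument used in (iii).
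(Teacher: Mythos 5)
Your proof is correct and takes essentially the same route as the paper: (ii) from closure under reversal plus the uniqueness of $s$ in its abelian class, (i) from \autoref{prp:ab_eq_geometric} via the light/heavy dichotomy, and (iii) from the fact that the two abelian classes of length $q_k$ differ by exactly one in each coordinate. The only difference is that your parity computation pinning down which side of $0$ the point $\{-q_k\alpha\}$ lies on is extra detail the paper does not need, since it simply treats both cases $\{-q_k\alpha\}\geq 1-\alpha$ and $\{-q_k\alpha\}\leq 1-\alpha$ without deciding which occurs.
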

  \begin{proof}
    The property (i) is directly implied by \autoref{prp:ab_eq_geometric}. Namely if $\{-q_k\alpha\} \geq 1 - \alpha$,
    then $s$ is heavy by \autoref{rem:singular} and $s$ begins and ends with $1$ by \autoref{prp:ab_eq_geometric}. If
    $\{-q_k\alpha\} \leq 1 - \alpha$, then $s$ is light and begins and ends with $0$.

    Since the language $\Lang{\alpha}$ is closed under reversal, we have $\mirror{s} \in \Lang{\alpha}$ (recall that
    $\mirror{s}$ is the reversal of $s$). By \autoref{rem:singular}, the singular factor $s$ uniquely corresponds to
    its Parikh vector among factors of length $q_k$. Since a Parikh vector is invariant under reversal, it follows that
    $\mirror{s} = s$. This establishes property (ii).

    Let us then consider the final claim. If $\abs{s} = 1$, then there is nothing to prove, so suppose that
    $\abs{s} > 1$. Write $s = as' a$ for a letter $a$. It is sufficient to prove that the Parikh vector $\Parikh{as'}$
    of $as'$ is contained in the Parikh vectors of all factors of length $q_k$ because $\Parikh{s'a} = \Parikh{as'}$.
    Let $w$ be a factor of length $q_k$ such that $w \neq s$. Suppose that $a = 0$. The first paragraph of this proof
    shows that $s$ is light. This means that $w$ is heavy. Thus $\abs{w}_1 > \abs{s}_1 = \abs{as'}_1$. In fact, since
    all factors of fixed length are either heavy or light by \autoref{prp:ab_eq_geometric}, it must be that
    $\abs{w}_1 = \abs{s}_1 + 1$. In other words, $\abs{w}_0 = \abs{s}_0 - 1 = \abs{as'}_0$. Hence $\Parikh{as'}$ is
    contained in $\Parikh{w}$. The case $a = 1$ is similar.
  \end{proof}

  \autoref{prp:ab_eq_geometric} allows continued fractions and geometric arguments to be applied to the study of
  abelian powers in Sturmian words. Let $\abexp{m}$ denote the maximum exponent of an abelian power of period $m$
  occurring in a Sturmian word of slope $\alpha$. The number $\abexp{m}$ is always finite and is easily computed using
  the following result.

  \begin{proposition}\label{prp:exp_formula} \cite[Thm.~4.7]{2016:abelian_powers_and_repetitions_in_sturmian_words}
    We have $\abexp{m} = \Floor{\tfrac{1}{\norm{m\alpha}}}$.
  \end{proposition}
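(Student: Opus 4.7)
The plan is to reduce the statement to a counting problem about consecutive orbit points of the rotation $R^m$ on $\T$ and then solve it by a direct geometric argument. By \autoref{prp:ab_eq_geometric_2}, the exponent of an abelian power of period $m$ occurring in some $\infw{s}_{\rho,\alpha}$ at some position $n$ equals the largest $e \geq 2$ such that the $e$ points $\{\rho + (n+im)\alpha\}$, $0 \leq i < e$, all lie in one of the two arcs $I(0,-m\alpha)$ or $I(-m\alpha,1)$ of $\T$. Since $\rho$, $n$, and the endpoint convention for $I_0$ may be chosen freely, and since $\{\rho + n\alpha\}$ is dense in $\T$, computing $\abexp{m}$ reduces to finding the maximum number of consecutive orbit points of $R^m$ that can be fitted into one of those two arcs.

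The arcs have lengths $\norm{m\alpha}$ and $1-\norm{m\alpha}$, since $\norm{m\alpha}$ is by definition the shorter arc distance from $0$ to $\{-m\alpha\}$. Consecutive orbit points of $R^m$ are displaced by $\{m\alpha\}$, which on $\T$ is a step of length $\norm{m\alpha}$ in one fixed direction (forward if $\{m\alpha\}<\tfrac12$, backward otherwise). Hence $e$ consecutive orbit points span an arc of length exactly $(e-1)\norm{m\alpha}$, provided the run does not wrap. For it to fit in the longer arc we need $(e-1)\norm{m\alpha} \leq 1-\norm{m\alpha}$, that is, $e \leq 1/\norm{m\alpha}$. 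Since $\norm{m\alpha}$ is irrational, so is $1/\norm{m\alpha}$, and the bound sharpens to $e \leq \Floor{1/\norm{m\alpha}}$. The shorter arc yields only a weaker bound.

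To realise equality I would place the first orbit point at an endpoint of the longer arc: taking $\rho = 0$ and $n = 0$ puts it at $0$, and the inequality $(\Floor{1/\norm{m\alpha}}-1)\norm{m\alpha} < 1-\norm{m\alpha}$, again by irrationality, ensures the resulting run of $\Floor{1/\norm{m\alpha}}$ consecutive orbit points remains strictly inside the longer arc. The main obstacle is the bookkeeping of endpoint conventions: one must select between $0 \in I_0$ and $0 \notin I_0$ so that $0$ sits at the ``correct'' end of the longer arc, from which the rotation enters its interior rather than leaving it. A short case split on whether $\{m\alpha\} < \tfrac12$ or $\{m\alpha\} > \tfrac12$ dispatches this and completes the proof.
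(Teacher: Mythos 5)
Your proposal is correct and follows essentially the same route as the paper's proof sketch: reduce via \autoref{prp:ab_eq_geometric_2} to counting how many consecutive points of the orbit of $R^m$ (spaced $\norm{m\alpha}$ apart) fit in the longer of the two arcs cut out by $0$ and $\{-m\alpha\}$, which has length $1-\norm{m\alpha}$, and then use irrationality to extract the floor. The only difference is that you spell out the extremal placement and the endpoint bookkeeping slightly more explicitly than the paper does.
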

  \begin{proof}[Proof Sketch]
    We sketch the proof here because similar arguments are needed in the proofs of \autoref{lem:first_last_letter} and
    \autoref{lem:cases_2_3}. Say $\{-m\alpha\} < \tfrac12$; the case $\{-m\alpha\} > \tfrac12$ is similar. Consider two
    points $\rho$ and $\{\rho + m\alpha\}$ on $\T$. Because the distance between these points is $\norm{m\alpha}$, they
    cannot both belong to the interval $I(0, -m\alpha)$. If they lie on the interval $I(-m\alpha, 1)$ of length
    $1 - \norm{m\alpha}$, then the word $\infw{s}_{\rho,\alpha}$ begins with an abelian square of period $m$ by
    \autoref{prp:ab_eq_geometric_2}. To find the maximum exponent of an abelian power of period $m$ that is a prefix of
    $\infw{s}_{\rho,\alpha}$, it thus suffices to see how many times $\norm{m\alpha}$ divides $1 - \norm{m\alpha}$.
    This proves the claim.
  \end{proof}

  Since $\norm{m\alpha}$ can be made as small as desired, the preceding proposition shows that each Sturmian word
  contains abelian powers of arbitrarily high exponent. A similar result for a broader class of words is given in
  \cite[Thm.~1.8]{2011:abelian_complexity_of_minimal_subshifts}.

  \section{Lemmas on Abelian Exponents}\label{sec:abelian_exponents}
  In this section, we prove several inequalities concerning the abelian exponents of factors of slope $\alpha$ needed
  mainly in \autoref{sec:main_proofs}. The results presented here are purely arithmetical in their nature and do not,
  as such, provide any significant insight for proving the main results. The reader might want to read
  \autoref{sec:main_inequality} before studying this section in detail.

  The first lemma relates an abelian exponent to a convergent of $\alpha$.

  \begin{lemma}\label{lem:relation_exponent_convergent}
    If $\norm{m\alpha} \geq \norm{q_{k-1}\alpha} + \norm{q_k\alpha}$ for some $k \geq 1$, then $\abexp{m} < q_k$.
  \end{lemma}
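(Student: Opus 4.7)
The plan is to reduce everything to \autoref{prp:exp_formula}, which rewrites the conclusion $\abexp{m} < q_k$ as $\lfloor 1/\norm{m\alpha}\rfloor < q_k$, i.e.\ it suffices to prove the strict inequality $1/\norm{m\alpha} < q_k$. The hypothesis, after taking reciprocals, gives
\begin{equation*}
  \frac{1}{\norm{m\alpha}} \leq \frac{1}{\norm{q_{k-1}\alpha} + \norm{q_k\alpha}},
\end{equation*}
so the whole task becomes bounding the right-hand side by $q_k$.

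For this I would simply invoke \autoref{lem:consecutive_norm} with $\ell = 1$, which converts the right-hand side into the closed form $(\alpha_{k+1}q_k + q_{k-1})/(\alpha_{k+1} + 1)$. Writing
\begin{equation*}
  \frac{\alpha_{k+1}q_k + q_{k-1}}{\alpha_{k+1} + 1} = q_k - \frac{q_k - q_{k-1}}{\alpha_{k+1} + 1},
\end{equation*}
the claim reduces to observing that the subtracted term is strictly positive. This is immediate: $\alpha_{k+1} > 0$, and $q_k > q_{k-1}$ for every $k \geq 1$ under the standing assumption $a_1 \geq 2$ (indeed $q_1 = a_1 \geq 2 > 1 = q_0$, and the recurrence then ensures strict increase).

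Combining these steps yields $1/\norm{m\alpha} < q_k$ and hence $\abexp{m} = \lfloor 1/\norm{m\alpha}\rfloor \leq q_k - 1 < q_k$. There is no real obstacle here; the only delicate point is remembering that the strict inequality $q_k > q_{k-1}$ is what upgrades the non-strict hypothesis into a strict conclusion on the floor, and that this strict inequality does require the convention $a_1 \geq 2$ already in force in this paper.
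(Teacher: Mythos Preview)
Your proposal is correct and follows essentially the same route as the paper: both reduce via \autoref{prp:exp_formula} to the inequality $1/(\norm{q_{k-1}\alpha} + \norm{q_k\alpha}) < q_k$, apply \autoref{lem:consecutive_norm} with $\ell = 1$, and finish using $q_{k-1} < q_k$. The only cosmetic difference is that the paper clears the denominator to compare $\alpha_{k+1}q_k + q_{k-1}$ with $(\alpha_{k+1}+1)q_k$, whereas you subtract to exhibit the closed form as $q_k$ minus a positive quantity; these are equivalent one-line manipulations.
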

  \begin{proof}
    Suppose that $\norm{m\alpha} \geq \norm{q_{k-1}\alpha} + \norm{q_k\alpha}$ for some $k \geq 1$. By
    \autoref{prp:exp_formula}, it suffices to establish that
    \begin{equation*}
      \frac{1}{\norm{q_{k-1}\alpha} + \norm{q_k \alpha}} < q_k.
    \end{equation*}
    By \autoref{lem:consecutive_norm}, this inequality is equivalent to the inequality
    \begin{equation*}
      \alpha_{k+1}q_k + q_{k-1} < \alpha_{k+1}q_k + q_k,
    \end{equation*}
    which is obviously true because $q_{k-1} < q_k$.
  \end{proof}

  \autoref{lem:relation_exponent_convergent} is sharp in the sense that, for suitable partial quotients, it is possible
  that $\abexp{m} = q_k - 1$ for some $k$. For instance, if $\alpha = [0; 2, \overline{1}]$ and $m = 4$, then
  $q_2 = 3 < m < q_3 = 5$, $\norm{m\alpha} \approx 0.47 > 0.38 \approx \norm{q_1\alpha} + \norm{q_2\alpha}$, and
  $\abexp{m} = 2 = q_2 - 1$.

  \begin{figure}
  \centering
  \begin{tikzpicture}
    \tikzstyle{own}=[line width=0.8pt,black]
    \tikzstyle{point}=[line width=0.5pt,black,fill=white]
    \tikzstyle{point2}=[line width=0.8pt,black]

    \tikzmath{\width = 14.5; \theight = 0.5; \lheight = -5; \buheight = 1.0; \blheight = -1.0; \q1 = 0.07; \q2 = 0.16; \q3 = \q1 + 2*\q2; \q4 = \q2 + 2*\q3;}

    \draw[own,-|] (0,0) -- (\width,0);
    \node at (\width,\theight) {\footnotesize $0$};
    \draw[own,|-] (0,\lheight) -- (\width,\lheight);
    \pgfmathsetmacro\result{\lheight + \theight}
    \node at (0,\result) {\footnotesize $0$};

    \pgfmathsetmacro\result{(1 - \q1)*\width}
    \draw[point] ({\result}, 0) circle(2.5pt);
    \node at ({\result}, \theight) {\footnotesize $-q_{k+1}\alpha$};
    \pgfmathsetmacro\result{(1 - \q3)*\width}
    \filldraw[point] ({\result}, 0) circle(2.5pt);
    \node at ({\result}, \theight) {\footnotesize $-q_{k-1}\alpha$};
    \pgfmathsetmacro\result{(1 - 2*\q3)*\width}
    \filldraw[point] ({\result}, 0) circle(2.5pt);
    \node at ({\result}, \theight) {\footnotesize $-2q_{k-1}\alpha$};
    \pgfmathsetmacro\result{(1 - \q1 - \q2 - \q3)*\width}
    \filldraw[point2] ({\result}, 0) circle(0.9pt);
    \node at ({\result}, -\theight) {\footnotesize $-(2q_{k-1} + (a_{k+1} - 1)q_k)\alpha$};

    \pgfmathsetmacro\result{\q2*\width}
    \filldraw[point] ({\result}, \lheight) circle(2.5pt);
    \node at ({\result}, \lheight+\theight) {\footnotesize $-q_k\alpha$};
    \pgfmathsetmacro\result{\q4*\width}
    \filldraw[point] ({\result}, \lheight) circle(2.5pt);
    \node at ({\result}, \lheight+\theight) {\footnotesize $-q_{k-2}\alpha$};
    \pgfmathsetmacro\result{(\q2 + \q3)*\width}
    \filldraw[point] ({\result}, \lheight) circle(2.5pt);
    \node at ({\result}, \lheight+\theight) {\footnotesize $-q_{k,a_k-1}\alpha$};
    \pgfmathsetmacro\result{2*\q2*\width}
    \filldraw[point2] ({\result}, \lheight) circle(0.9pt);
    \node at ({\result}, \lheight-\theight) {\footnotesize $-2q_k\alpha$};
    \pgfmathsetmacro\result{(\q2+\q3+2*\q2)*\width}
    \filldraw[point2] ({\result}, \lheight) circle(0.9pt);
    \node at ({\result}, \lheight-\theight) {\footnotesize $-(q_{k,a_k-1} + a_{k+1}q_k)\alpha$};

    \pgfmathsetmacro\result{(1 - \q3)*\width}
    \pgfmathsetmacro\tresult{(1 - \q1)*\width}
    \draw [thick,decoration={brace},decorate] (\result,\buheight) -- (\tresult,\buheight) node [midway,align=center,yshift=20] {\footnotesize $a_{k+1}\norm{q_k\alpha}$};
    \pgfmathsetmacro\result{(1 - \q1)*\width}
    \pgfmathsetmacro\tresult{\width}
    \draw [thick,decoration={brace},decorate] (\result,\buheight) -- (\tresult,\buheight) node [midway,align=center,yshift=20] {\footnotesize $\norm{q_{k+1}\alpha}$};
    \pgfmathsetmacro\result{(1 - \q1 - \q2 - \q3)*\width}
    \pgfmathsetmacro\tresult{(1 - \q3)*\width}
    \draw [thick,decoration={brace},decorate] (\result,\buheight) -- (\tresult,\buheight) node [midway,align=center,yshift=20] {\footnotesize $\norm{q_k\alpha} + \norm{q_{k+1}\alpha}$};

    \pgfmathsetmacro\result{(1 - \q1)*\width}
    \pgfmathsetmacro\tresult{(1 - \q3)*\width}
    \draw [thick,decoration={brace,mirror},decorate] (\tresult,\blheight) -- (\result,\blheight) node [midway,align=center,yshift=-20] {\footnotesize (semi)convergents\\\footnotesize $-q_{k+1,\ell}\alpha, \, 0 \leq \ell \leq a_{k+1}$};
    \pgfmathsetmacro\result{(1 - \q1 - \q2 - \q3)*\width}
    \pgfmathsetmacro\tresult{(1 - 2*\q3)*\width}
    \draw [thick,decoration={brace,mirror},decorate] (\tresult,\blheight) -- (\result,\blheight) node
  [midway,align=center,yshift=-20] {\footnotesize points\\\footnotesize $-(2q_{k-1} + \ell q_k)\alpha, \, 0 \leq \ell < a_{k+1}$};

    \pgfmathsetmacro\result{0*\width}
    \pgfmathsetmacro\tresult{\q2*\width}
    \draw [thick,decoration={brace},decorate] (\result,\lheight+\buheight) -- (\tresult,\lheight+\buheight) node [midway,align=center,yshift=20] {\footnotesize $\norm{q_k\alpha}$};
    \pgfmathsetmacro\result{\q2*\width}
    \pgfmathsetmacro\tresult{(\q2+\q3)*\width}
    \draw [thick,decoration={brace},decorate] (\result,\lheight+\buheight) -- (\tresult,\lheight+\buheight) node [midway,align=center,yshift=20] {\footnotesize $\norm{q_{k-1}\alpha}$};
    \pgfmathsetmacro\result{(\q2+\q3)*\width}
    \pgfmathsetmacro\tresult{(\q2+\q3+2*\q2)*\width}
    \draw [thick,decoration={brace},decorate] (\result,\lheight+\buheight) -- (\tresult,\lheight+\buheight) node [midway,align=center,yshift=20] {\footnotesize $a_{k+1}\norm{q_k\alpha}$};

    \pgfmathsetmacro\result{2*\q2*\width}
    \pgfmathsetmacro\tresult{(\q2 + \q3)*\width - 0.3}
    \draw [thick,decoration={brace,mirror},decorate] (\result,\lheight+\blheight) -- (\tresult,\lheight+\blheight) node [midway,align=center,yshift=-20] {\footnotesize multiples\\\footnotesize $-tq_k\alpha, \, 1 < t \leq a_{k+1}$};
    \pgfmathsetmacro\result{(\q2 + \q3)*\width}
    \pgfmathsetmacro\tresult{\q4*\width}
    \draw [thick,decoration={brace,mirror},decorate] (\result,\lheight+\blheight) -- (\tresult,\lheight+\blheight) node [midway,align=center,yshift=-33] {\footnotesize semiconvergents\\\footnotesize $-q_{k,\ell}\alpha, \, 0 \leq \ell < a_{k+1}$,\\\footnotesize points\\\footnotesize $-(q_{k,\ell} + t q_k)\alpha, \, 1 \leq \ell < a_k, 0 \leq t < a_{k+1}$};
  \end{tikzpicture}
  \caption{The points $\{-i\alpha\}$ with $i \leq q_{k+1}$ that are closest to $0$. The picture is in scale;
  $a_k = a_{k+1} = 2$ was used for drawing.}
  \label{fig:intervals}
  \end{figure}

  In some cases, we need the following improvement of \autoref{lem:relation_exponent_convergent}.

  \begin{lemma}\label{lem:relation_exponent_convergent_2}
    If $\norm{m\alpha} \geq \norm{q_{k-1}\alpha} + (a_{k+1} + 1)\norm{q_k\alpha}$ for some $k \geq 2$, then
    $\abexp{m} < q_k - 1$.
  \end{lemma}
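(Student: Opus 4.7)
The plan is to mimic the proof of the previous lemma. By \autoref{prp:exp_formula} we have $\abexp{m} = \Floor{1/\norm{m\alpha}}$, so since $q_k - 1$ is an integer it suffices to prove
\[
  \frac{1}{\norm{q_{k-1}\alpha} + (a_{k+1}+1)\norm{q_k\alpha}} < q_k - 1.
\]
Applying \autoref{lem:consecutive_norm} with $\ell = a_{k+1}+1$ rewrites the left-hand side as $(\alpha_{k+1} q_k + q_{k-1})/(\alpha_{k+1} + a_{k+1} + 1)$, so after cross-multiplying the task reduces to the single inequality
\[
  (a_{k+1}+1) q_k - q_{k-1} > \alpha_{k+1} + a_{k+1} + 1. \qquad (\star)
\]

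Next I would use $\alpha_{k+1} = a_{k+1} + 1/\alpha_{k+2} < a_{k+1} + 1$ (strict, because $\alpha_{k+2} > 1$), which makes the right side of $(\star)$ strictly less than $2(a_{k+1}+1)$. It therefore suffices to establish the non-strict integer inequality
\[
  (a_{k+1}+1)(q_k - 2) \geq q_{k-1}.
\]

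I would finish with a short case split based on $q_k = a_k q_{k-1} + q_{k-2}$ and the standing convention $a_1 \geq 2$. For $k \geq 3$, $q_{k-2} \geq q_1 \geq 2$ gives $q_k - 2 \geq q_{k-1}$, and the inequality is immediate from $a_{k+1} + 1 \geq 2$. For the base case $k = 2$, one has $q_2 - 2 = a_1 a_2 - 1$ and $q_1 = a_1$, so the inequality reads $(a_3+1)(a_1 a_2 - 1) \geq a_1$; since $a_2, a_3 \geq 1$ and $a_1 \geq 2$, the left side is at least $2(a_1 - 1) \geq a_1$. The only subtlety worth watching is the strict-vs-non-strict bookkeeping at $(\star)$, but this is automatic because $\alpha_{k+1} < a_{k+1} + 1$ is itself strict, so the non-strict integer bound propagates to the strict bound in $(\star)$.
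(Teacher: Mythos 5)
Your proposal is correct and follows essentially the same route as the paper: both pass through \autoref{prp:exp_formula} and \autoref{lem:consecutive_norm} to reduce the claim to the inequality $(a_{k+1}+1)q_k - q_{k-1} > \alpha_{k+1} + a_{k+1} + 1$ (the paper's \eqref{eq:foo3}), and then dispose of it by elementary estimates using $\alpha_{k+1} < a_{k+1}+1$ and $q_1 \geq 2$. The only difference is cosmetic bookkeeping in the final verification (your split on $k=2$ versus $k\geq 3$ in place of the paper's substitution $q_k \geq q_{k-1}+q_{k-2}$).
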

  \begin{proof}
    Suppose that $\norm{m\alpha} \geq \norm{q_{k-1}\alpha} + (a_{k+1} + 1)\norm{q_k\alpha}$ for some $k \geq 2$. Then,
    by \autoref{lem:consecutive_norm}, it suffices to show that
    \begin{equation*}
      \frac{\alpha_{k+1}q_k + q_{k-1}}{\alpha_{k+1} + a_{k+1} + 1} < q_k - 1.
    \end{equation*}
    This inequality is equivalent to
    \begin{equation}\label{eq:foo3}
      q_{k-1} < (a_{k+1} + 1)q_k - (\alpha_{k+1} + a_{k+1} + 1).
    \end{equation}
    Now $q_k \geq q_{k-1} + q_{k-2}$, so it is enough to show that
    \begin{equation}\label{eq:sufficient_2}
      a_{k+1}q_{k-1} + (a_{k+1} + 1)q_{k-2} > \alpha_{k+1} + a_{k+1} + 1.
    \end{equation}
    Since $k \geq 2$, we have $q_{k-1} \geq q_1 \geq 2$ and $q_{k-2} \geq q_0 = 1$. Thus
    $a_{k+1}q_{k-1} + (a_{k+1} + 1)q_{k-2} \geq 3a_{k+1} + 1$ and $3a_{k+1} + 1 > \alpha_{k+1} + a_{k+1} + 1$ if and
    only if $2a_{k+1} > \alpha_{k+1}$ (recall that $\alpha_{k+1} < a_{k+1} + 1$). Since $a_{k+1} \geq 1$, this final
    inequality is true. This means that \eqref{eq:sufficient_2} holds.
  \end{proof}

  In order to apply \autoref{lem:relation_exponent_convergent_2}, we need the following lemma. Its proof essentially
  argues that \autoref{fig:intervals} is correctly drawn. This figure is important for the proofs in
  \autoref{sec:main_proofs}. It depicts the points of the form $\{-i\alpha\}$ with $i \leq q_{k+1}$ that are closest to
  $0$. The presented arguments contain ingredients for proving the Three Distance Theorem; see
  \cite{1998:three_distance_theorems_and_combinatorics_on_words} and its references, especially
  \cite{1988:the_three_gap_theorem_steinhaus_conjecture}.

  \begin{lemma}\label{lem:larger_lower_bound}
    Let $k \geq 1$, and suppose that $m$ is an integer such that $a_{k+1}q_k < m < q_{k+1}$. If
    $m \neq q_{k+1,a_{k+1}-1}$ and $m \neq (a_{k+1}-1)q_k + 2q_{k-1}$ when $a_k = 1$, then
    $\norm{m\alpha} \geq \norm{q_{k-1}\alpha} + (a_{k+1} + 1)\norm{q_k\alpha}$.
  \end{lemma}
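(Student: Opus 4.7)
The plan is to write $m = a_{k+1}q_k + j$ with $1 \leq j \leq q_{k-1}-1$ and work in terms of the signed residue $\beta_i$, defined as the unique real in $(-\tfrac12,\tfrac12]$ congruent to $i\alpha$ modulo $1$, so that $|\beta_i| = \norm{i\alpha}$. Since consecutive convergents straddle $\alpha$, $\beta_{q_k}$ has sign $(-1)^k$ and magnitude $\norm{q_k\alpha}$, and $\beta_m \equiv a_{k+1}\beta_{q_k} + \beta_j \pmod{1}$. The two main inputs will be the identity $\norm{q_{k-2}\alpha} = a_k\norm{q_{k-1}\alpha} + \norm{q_k\alpha}$ (from the recurrence $q_k = a_kq_{k-1}+q_{k-2}$) and the best-approximation inequality $|\beta_j| \geq \norm{q_{k-2}\alpha}$, valid for $0 < j < q_{k-1}$ with equality iff $j = q_{k-2}$.

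I would then split on whether $\beta_j$ has the same sign as $\beta_{q_k}$ or the opposite. In the \emph{same-sign case}, the two terms in $\beta_m$ add constructively, and (setting aside a rare wrap-around discussed below) $\norm{m\alpha} = a_{k+1}\norm{q_k\alpha} + |\beta_j|$; the best-approximation bound together with $\norm{q_{k-2}\alpha} \geq \norm{q_{k-1}\alpha} + \norm{q_k\alpha}$ then yields $\norm{m\alpha} \geq \norm{q_{k-1}\alpha} + (a_{k+1}+1)\norm{q_k\alpha}$ immediately, with equality exactly when $a_k = 1$ and $j = q_{k-2}$. In the \emph{opposite-sign case}, the chain of inequalities $|\beta_j| > \norm{q_{k-1}\alpha} > a_{k+1}\norm{q_k\alpha}$ forces $\norm{m\alpha} = |\beta_j| - a_{k+1}\norm{q_k\alpha}$ with no wrap-around, so the required bound becomes a lower bound on $|\beta_j|$.

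The central step in the opposite-sign case is to identify $\min|\beta_j|$ over $1 \leq j < q_{k-1}$ with $\beta_j$ of sign $(-1)^{k+1}$. I would prove via the involution $j \leftrightarrow q_{k-1} - j$ (which interchanges the two sign classes and shifts $|\beta_\cdot|$ by exactly $\norm{q_{k-1}\alpha}$) that this minimum equals $\norm{q_{k-1}\alpha} + \norm{q_{k-2}\alpha}$, attained at $j = q_{k-1}-q_{k-2}$, reducing everything to the best-approximation statement. Substituting, the required inequality simplifies to $a_k\norm{q_{k-1}\alpha} \geq 2a_{k+1}\norm{q_k\alpha}$: this holds strictly for $a_k \geq 2$, and for $a_k = 1$ the unique offending $j$ yields $m = a_{k+1}q_k + q_{k-1}-q_{k-2} = (a_{k+1}-1)q_k + 2q_{k-1}$ via $q_k = q_{k-1}+q_{k-2}$, i.e.\ the second excluded value. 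The main technical obstacle is the wrap-around/sign-flip phenomenon: when $\norm{q_{k-1}\alpha} + \norm{q_{k-2}\alpha} > \tfrac12$ (possible only for very small $k$, since the $q_i$ grow at least Fibonacci-fast), the sign of $\beta_{q_{k-1}-q_{k-2}}$ flips and this $j$ migrates from the opposite-sign case into the same-sign case with wrap-around; a direct calculation shows that $\norm{m\alpha} = \norm{q_{k-1}\alpha}+\norm{q_{k-2}\alpha} - a_{k+1}\norm{q_k\alpha}$ in both scenarios and that the offending $m$ is unchanged, so the single exclusion suffices. The first listed value $q_{k+1,a_{k+1}-1} = (a_{k+1}-1)q_k + q_{k-1}$ lies below $a_{k+1}q_k$, so that condition is vacuous but harmless.
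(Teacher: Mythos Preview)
Your approach is genuinely different from the paper's and the main line of reasoning is sound, but the treatment of wrap-around is incomplete.

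The paper argues geometrically on the torus: it splits according to which side of $0$ the point $\{m\alpha\}$ lies on (the $\{q_{k+1}\alpha\}$-side or the $\{q_k\alpha\}$-side), and for each side it enumerates the points $\{i\alpha\}$ with $i<q_{k+1}$ closest to $0$, showing that $m$ cannot be among them under the stated exclusions. Your approach instead decomposes $m=a_{k+1}q_k+j$ and splits on the sign of $\beta_j$; this is more algebraic and has the advantage of making the role of $j=q_{k-1}-q_{k-2}$ and the identity $\norm{q_{k-2}\alpha}=a_k\norm{q_{k-1}\alpha}+\norm{q_k\alpha}$ completely transparent. Your observation that the exclusion $m\neq q_{k+1,a_{k+1}-1}$ is vacuous given the hypothesis $a_{k+1}q_k<m$ is correct; the paper's statement and proof carry that condition redundantly.

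The gap is this: you assert that the only wrap-around to worry about is the sign-flip of $\beta_{q_{k-1}-q_{k-2}}$ when $\norm{q_{k-1}\alpha}+\norm{q_{k-2}\alpha}>\tfrac12$. But in the same-sign case, wrap-around of $\beta_m=a_{k+1}\beta_{q_k}+\beta_j$ can occur for other $j$ as well, whenever $a_{k+1}\norm{q_k\alpha}+|\beta_j|>\tfrac12$. A concrete instance: take $\alpha=[0;4,1,10,\ldots]$, $k=2$, $j=2$. Here $q_{k-1}-q_{k-2}=3\neq 2$, yet $10\,\norm{q_2\alpha}+|\beta_2|\approx 0.59>\tfrac12$, so $\norm{m\alpha}=1-10\,\norm{q_2\alpha}-2\alpha$, and your formula $\norm{m\alpha}=a_{k+1}\norm{q_k\alpha}+|\beta_j|$ is wrong for this $j$. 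The desired inequality still holds in such cases (after wrap-around $\{m\alpha\}$ lands on the $\{q_{k+1}\alpha\}$-side and one checks it is beyond $\{2q_{k-1}\alpha\}$, exactly the paper's first case), but this requires a separate argument you have not supplied. Relatedly, your involution $j\leftrightarrow q_{k-1}-j$ only shifts $|\beta_\cdot|$ by $\norm{q_{k-1}\alpha}$ when no wrap occurs in that step, so the bijection between the two sign classes is not quite as clean as you state. To close the gap, you would need either to bound $|\beta_j|$ from above in the same-sign class (so that wrap-around never hurts), or to argue directly about $\{m\alpha\}$ when wrap-around happens---at which point you are essentially replaying the paper's geometric case.
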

  \begin{proof}
    Suppose that $m$ does not equal $q_{k+1,a_{k+1}-1}$, and assume moreover that if $a_k = 1$, then
    $m \neq (a_{k+1}-1)q_k + 2q_{k-1}$. For the proof, we omit the negative signs and consider points of the form
    $\{i\alpha\}$ with $i$ positive instead the points $\{-i\alpha\}$ that are the endpoints of the intervals of the
    factors of slope $\alpha$.

    Assume first that the point $\{m\alpha\}$ is on the same side of the point $0$ as the point $\{q_{k+1}\alpha\}$.
    By this we mean that if $\{q_{k+1}\alpha\} < \tfrac12$, then also $\{m\alpha\} < \tfrac12$ and if
    $\{q_{k+1}\alpha\} > \tfrac12$, then $\{m\alpha\} > \tfrac 12$. By the best approximation property, the point
    $\{m\alpha\}$ cannot be closer to $0$ than $\{q_{k+1}\alpha\}$. Let $D_1$ be the distance of $\{m\alpha\}$ to $0$
    through the point $\{q_{k+1}\alpha\}$. Points $\{i\alpha\}$ with $i < q_{k+1}$ between the points
    $\{q_{k-1}\alpha\}$ and $\{q_{k+1}\alpha\}$ are exactly the points $\{q_{k+1,\ell}\alpha\}$ for
    $1 \leq \ell < a_{k+1}$ because of the best approximation property and the fact that the distance between
    $\{q_{k+1,\ell}\alpha\}$ and $\{q_{k+1,\ell+1}\alpha\}$ is $\norm{q_k\alpha}$. Since $a_{k+1}q_k < m$ and
    $m \neq q_{k+1,a_{k+1}-1}$, we conclude that $D_1 > \norm{q_{k-1}\alpha}$. Let us consider next points between
    $\{q_{k-1}\alpha\}$ and $\{2q_{k-1}\alpha\}$. The points $\{(2q_{k-1} + \ell q_k)\alpha\}$,
    $1 \leq \ell \leq a_{k+1}$, lie between $\{q_{k-1}\alpha\}$ and $\{2q_{k-1}\alpha\}$. As the distance between two
    consecutive such points is $\norm{q_k\alpha}$, the points between $\{q_{k-1}\alpha\}$ and $\{2q_{k-1}\alpha\}$ of
    the form $\{i\alpha\}$ with $i < q_{k+1}$ are among these points $\{(2q_{k-1} + \ell q_k)\alpha\}$. Say
    $m = 2q_{k-1} + \ell q_k$ for some $\ell$ such that $1 \leq \ell \leq a_{k+1}$. Then the assumption
    $a_{k+1} q_k < m < q_{k+1}$ implies that $\ell = a_{k+1} - 1$ and $q_k < 2q_{k-1}$. The inequality $q_k < 2q_{k-1}$
    implies that $a_k = 1$. This case is however excluded by our assumptions. Thus $\{m\alpha\}$ does not lie between
    $\{q_{k-1}\alpha\}$ and $\{2q_{k-1}\alpha\}$. Consider then the point $\{i\alpha\}$ with $i < q_{k+1}$ that is
    closest to the point $\{2q_{k-1}\alpha\}$. By the best approximation property, the distance from $\{i\alpha\}$ to
    $\{2q_{k-1}\alpha\}$ cannot be less than or equal to $\norm{q_{k+1}\alpha}$. Therefore it must be at least
    $\norm{q_k\alpha}$. Therefore (see \autoref{fig:intervals})
    \begin{align*}
      D_1 &\geq 2\norm{q_{k-1}\alpha} + \norm{q_k\alpha} \\
          &= \norm{q_{k-1}\alpha} + a_{k+1}\norm{q_k\alpha} + \norm{q_{k+1}\alpha} + \norm{q_k\alpha} \\
          &> \norm{q_{k-1}\alpha} + (a_{k+1} + 1)\norm{q_k\alpha}.
    \end{align*}

    Assume then that the point $\{m\alpha\}$ is on the same side as $\{q_k\alpha\}$. Again $\{m\alpha\}$ cannot be
    closer to $0$ than the point $\{q_k\alpha\}$ due to the best approximation property. Let $D_2$ be the distance of
    $\{m\alpha\}$ to $0$ through the point $\{q_k\alpha\}$. If $\{i\alpha\}$ with $i < q_{k+1}$ is a point between
    $\{q_k\alpha\}$ and $\{q_{k,a_k-1}\alpha\}$, then $i$ is a multiple of $q_k$. This means that $\{m\alpha\}$ is not
    between $\{q_k\alpha\}$ and $\{q_{k,a_k-1}\alpha\}$. The points closest to $\{q_{k,a_k-1}\alpha\}$ that are not
    between $\{q_k\alpha\}$ and $\{q_{k,a_k-1}\alpha\}$ are the points $\{(q_{k,a_k-1} + \ell q_k)\alpha\}$ with
    $1 \leq \ell \leq a_k$. If $q_{k,a_k-1} + \ell q_k > a_{k+1}q_k$, then $\ell = a_{k+1}$. Thus
    $D_2 \geq \norm{(q_{k,a_k-1} + a_{k+1} q_k)\alpha}$. The claim follows since
    \begin{equation*}
      \norm{(q_{k,a_k-1} + a_{k+1} q_k)\alpha} = a_{k+1}\norm{q_k\alpha} + \norm{q_{k,a_k-1}\alpha} = a_{k+1}\norm{q_k\alpha} + \norm{q_{k-1}\alpha} + \norm{q_k\alpha}
    \end{equation*}
    and $\norm{m\alpha} \geq \min\{D_1, D_2\}$.
  \end{proof}

  The next result contains a lower bound for an abelian exponent.

  \begin{lemma}\label{lem:exponent_lower_bound}
    If $\norm{m\alpha} \leq \norm{q_k\alpha}$, then $\abexp{m} \geq q_{k+1}$. If $\norm{m\alpha} \geq
    \norm{q_k\alpha}$, then $\abexp{m} < q_{k+1} + q_k$.
  \end{lemma}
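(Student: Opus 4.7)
The plan is to reduce the statement to the identity \eqref{eq:norm_alpha} via \autoref{prp:exp_formula}. That proposition tells us $\abexp{m} = \Floor{1/\norm{m\alpha}}$, so both conclusions become inequalities on $1/\norm{m\alpha}$: the first asks for $1/\norm{m\alpha} \geq q_{k+1}$, and the second asks for $1/\norm{m\alpha} < q_{k+1} + q_k$ (recalling that $\Floor{x} < n \iff x < n$ when $n$ is a positive integer).

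Using the monotonicity $\norm{m\alpha} \leq \norm{q_k\alpha} \Rightarrow 1/\norm{m\alpha} \geq 1/\norm{q_k\alpha}$ and the reverse for the other hypothesis, it suffices in both cases to compare $1/\norm{q_k\alpha}$ with the corresponding target. By \eqref{eq:norm_alpha}, we have
\begin{equation*}
  \frac{1}{\norm{q_k\alpha}} = \alpha_{k+1}q_k + q_{k-1},
\end{equation*}
and the recurrence gives $q_{k+1} = a_{k+1}q_k + q_{k-1}$. So everything collapses to comparing $\alpha_{k+1}$ with $a_{k+1}$ and with $a_{k+1}+1$.

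For the first bound: since $\alpha_{k+1} = a_{k+1} + 1/\alpha_{k+2} > a_{k+1}$, we get $1/\norm{q_k\alpha} > q_{k+1}$, hence $1/\norm{m\alpha} > q_{k+1}$ (strictly), which gives $\Floor{1/\norm{m\alpha}} \geq q_{k+1}$. For the second bound: since $\alpha_{k+2} > 1$ (as $\alpha_{k+2} = a_{k+2} + 1/\alpha_{k+3}$ with $a_{k+2} \geq 1$ and $1/\alpha_{k+3} > 0$), we have $\alpha_{k+1} < a_{k+1}+1$, hence $1/\norm{q_k\alpha} < (a_{k+1}+1)q_k + q_{k-1} = q_{k+1} + q_k$, and therefore $1/\norm{m\alpha} < q_{k+1} + q_k$.

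I do not anticipate a real obstacle: once \autoref{prp:exp_formula} and \eqref{eq:norm_alpha} are in hand, the lemma is essentially a two-line manipulation of the Gauss map iterate $\alpha_{k+1}$ against the integer $a_{k+1}$, exploiting only that $\alpha_{k+1} \in (a_{k+1}, a_{k+1}+1)$.
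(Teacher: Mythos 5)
Your proof is correct and follows essentially the same route as the paper: reduce to $1/\norm{q_k\alpha} = \alpha_{k+1}q_k + q_{k-1}$ via \eqref{eq:norm_alpha} and \autoref{prp:exp_formula}, then use $\alpha_{k+1} = a_{k+1} + [0;a_{k+2},\ldots] \in (a_{k+1}, a_{k+1}+1)$ to sandwich this quantity strictly between $q_{k+1}$ and $q_{k+1}+q_k$. The paper's proof is just a terser statement of exactly this computation.
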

  \begin{proof}
    The claim follows directly from \eqref{eq:norm_alpha}:
    $1/\norm{q_k\alpha} = \alpha_{k+1}q_k + q_{k-1} = q_{k+1} + [0; a_{k+2}, \ldots]q_k$ and
    $0 < [0; a_{k+2}, \ldots] < 1$.
  \end{proof}

  \section{Idea of the Proof and Derivation of the Main Inequality}\label{sec:main_inequality}
  The idea of the proof of \autoref{thm:fibonacci} that is given in
  \cite{2016:abelian_powers_and_repetitions_in_sturmian_words} is, roughly speaking, to show that near the beginning of
  an occurrence of a factor $w$ with abelian period $m$ in $\infw{f}$, the Fibonacci word, there begins an abelian
  power of period $F_k$, where $F_k$ is the largest Fibonacci number such that $F_k \leq m$, and large exponent that
  contains $w$ completely. This shows by \autoref{lem:abelian_period_covering} that $F_k$ is an abelian period of $w$.
  Thus the minimum abelian period of $w$ must be a Fibonacci number. Recall that all Sturmian words with a common slope
  have the same language. Therefore it often suffices to study the factors of a single Sturmian word of slope $\alpha$.

  As is mentioned in the introduction, an explicit counterexample showed that the above proof idea, as such, does not
  generalize to other Sturmian words. In fact, we shall show in \autoref{prp:fibonacci_counter_example} that such a
  counterexample exists in all cases except in the case of the slope $1/\varphi^2$. This means that the proof of
  \cite{2016:abelian_powers_and_repetitions_in_sturmian_words} for \autoref{thm:fibonacci} is specific to the Fibonacci
  subshift. While this specific proof could be modified to work more generally, this line of reasoning seems to be
  unworkable. Thus new ideas are necessary.

  Let us now consider Sturmian words of slope $\alpha$. Instead of looking for abelian powers of period $q_k$ with
  large exponent that can cover some factor of slope $\alpha$, the idea is to see what it means if the period $q_k$ is
  avoided. We shall soon see that abelian powers of period $q_k$ cover almost all of a Sturmian word of slope $\alpha$.
  This means that a factor $w$ avoiding the period $q_k$ must be rather long. This in turn means that the abelian
  exponent $\abexp{m}$, related to the minimum abelian period $m$ of $w$, must be large whenever $m$ is not too large
  compared to $q_k$. Since $\abexp{m} = \floor{1/\norm{m\alpha}}$ by \autoref{prp:exp_formula}, it must be that
  $\norm{m\alpha}$ is small. The analysis of \autoref{sec:main_proofs} indicates that $\norm{m\alpha}$ has to be so
  small that $m$ relates to a rather good rational approximation of $\alpha$. Precise analysis of the quality of the
  approximation leads to the statement of \autoref{thm:main_intro}.

  We let $\mathcal{M}_\alpha$ denote the set $\{tq_k : \text{$k \geq 0$ and $1 \leq t \leq a_{k+1}$}\}$. With the new
  notation from \autoref{sec:preliminaries}, we now rephrase the main result, \autoref{thm:main_intro}, as follows.
  
  \begin{theorem}\label{thm:main}
    If $m$ is the minimum abelian period of a nonempty factor of slope $\alpha$, then
    $m \in \qkl{\alpha} \cup \mathcal{M}_\alpha$.
  \end{theorem}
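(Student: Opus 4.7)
Fix a nonempty factor $w$ of slope $\alpha$ with minimum abelian period $m$, and let $k$ be the unique integer with $q_k \leq m < q_{k+1}$. The set of allowed values of $m$ in this range is $\{tq_k : 1 \leq t \leq a_{k+1}\} \cup \{q_{k+1,\ell} : 1 \leq \ell < a_{k+1}\}$, and the plan is to derive a contradiction from the assumption that $m$ lies outside this finite list. By \autoref{lem:abelian_period_covering}, each smaller element of $\qkl{\alpha} \cup \mathcal{M}_\alpha$ would serve as an abelian period of $w$ unless $w$ is ``cut'' by the Parikh-vector anomaly of the corresponding singular factor, so the minimality hypothesis translates into strong combinatorial restrictions on the internal structure of $w$.

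The combinatorial step---the derivation of the main inequality---is to show that avoiding every abelian period in $\qkl{\alpha} \cup \mathcal{M}_\alpha$ below $m$ forces $|w|$ to be large. By \autoref{prp:ab_eq_geometric_2}, an abelian power of period $q_k$ starting at position $n$ corresponds to consecutive rotates $\{\rho + (n + i q_k)\alpha\}$ all lying on one side of the cut of $\T$ by $\{-q_k\alpha\}$; failure of $q_k$ to be an abelian period of $w$ means that this orbit crosses the cut, which by \autoref{rem:singular} and \autoref{lem:singular_properties} forces $w$ to contain the singular factor of length $q_k$ in a prescribed phase modulo $q_k$. Exploiting the palindromicity of singular factors and the containment property for their prefix/suffix Parikh vectors, and iterating the argument over all smaller candidate periods, yields a lower bound of the form $|w| \geq c\, q_k$ with an explicit constant $c$. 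Since $w$ is a factor of some abelian power of period $m$ of length at most $m \cdot \abexp{m}$, this converts into $\abexp{m} \geq c\, q_k / m$, which via \autoref{prp:exp_formula} is exactly an upper bound on $\norm{m\alpha}$.

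The arithmetic step is then a finite case analysis driven by the lemmas of \autoref{sec:abelian_exponents}. For $m$ in the forbidden subrange $a_{k+1} q_k < m < q_{k+1}$ avoiding the explicit exceptions of \autoref{lem:larger_lower_bound}, that lemma gives $\norm{m\alpha} \geq \norm{q_{k-1}\alpha} + (a_{k+1} + 1)\norm{q_k\alpha}$, and \autoref{lem:relation_exponent_convergent_2} then yields $\abexp{m} < q_k - 1$, directly contradicting the lower bound from the main inequality. A parallel argument using \autoref{lem:relation_exponent_convergent} and \autoref{lem:exponent_lower_bound} handles $m$ in the range $q_k \leq m \leq a_{k+1} q_k$ with $m \notin \mathcal{M}_\alpha$, while the small list of exceptional values, such as $(a_{k+1} - 1) q_k + 2 q_{k-1}$ when $a_k = 1$, can be rewritten via the identities $q_k + q_{k-1} = q_{k+1,1}$ and $q_k = q_{k-1} + q_{k-2}$ into a previously handled case.

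The principal obstacle is the combinatorial step: producing a \emph{sharp} lower bound on $|w|$ whose constant $c$ is large enough to match the thresholds in \autoref{lem:relation_exponent_convergent_2}. This requires a careful geometric bookkeeping of how an uncovered singular factor at scale $q_k$ simultaneously obstructs coverings at every smaller allowed scale, using the full structure of \autoref{fig:intervals} rather than only the closest-point estimates; any slack here would leave a gap between the main inequality and the arithmetic thresholds, and the proof would fail to pin $m$ down inside $\qkl{\alpha} \cup \mathcal{M}_\alpha$.
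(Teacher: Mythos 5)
Your two-step architecture---a combinatorial lower bound on $\abs{w}$ forced by avoiding the singular factor, followed by an arithmetic case analysis via the exponent lemmas---is exactly the paper's strategy, and the first two paragraphs correctly identify the key ingredients (at least $q_k$ occurrences of the singular factor, its return times, the Parikh-vector containment for its prefixes/suffixes). But there are two genuine gaps. First, and most seriously, your claim that the exceptional values such as $m = (a_{k+1}-1)q_k + 2q_{k-1}$ with $a_k = 1$ ``can be rewritten \ldots into a previously handled case'' is false. The paper shows explicitly that for these values the main inequality is numerically insufficient: for $\alpha = [0;2,\overline{1}]$ and $m = 4$ the left side of the inequality is $13$ while the right side is $14$, and for $\alpha = [0;2,3,1,6,\overline{1}]$, $k=3$, $m=59$ one gets $587$ versus $588$. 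No algebraic rewriting of $m$ closes this gap; the paper instead needs a separate, rather delicate geometric argument (its Lemma \ref{lem:cases_2_3}) that tracks where the maximal abelian power $\beta_1 \dotsm \beta_E$ of period $m$ begins relative to the singular interval $[s]$, shows the corresponding orbit point has $\norm{y} \geq \norm{q_{k-1}\alpha}$, and thereby improves the exponent bound to $E < q_k - 1$. Without something playing this role your case analysis does not terminate.

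Second, the combinatorial step is only gestured at. The bound you would need is $(q_{k+1} + 2t - 1)q_k - q_{k+1} \leq \abs{w}$, where $t$ is the largest integer with $tq_k < m$; obtaining the $2t$ term requires more than counting occurrences of the singular factor at scale $q_k$. One must also use that $tq_k < m$ is not an abelian period of $w$, extend the abelian power between consecutive singular occurrences to abelian repetitions of period $tq_k$ with head and tail of length $tq_k - 1$ (this uses Lemma \ref{lem:first_last_letter} on the first/last letters of the blocks adjacent to $s$), and, in the case of exactly $q_k$ occurrences, apply the same argument to the reversal $\mirror{w}$ using that $s$ is a palindrome. You flag this as ``the principal obstacle'' but do not resolve it, so the constant $c$ in your $\abs{w} \geq c\,q_k$ is never pinned down. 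A smaller issue: a factor with abelian period $m$ can have length up to $(\abexp{m}+2)m - 2$, not $m\cdot\abexp{m}$, because of the head and tail; your stated upper bound is not valid as written, though correcting it only weakens your inequality slightly.
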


  Let us consider the minimum abelian period $m$ of a nonempty word $w$ in $\Lang{\alpha}$ (we shall use the notation
  introduced here throughout this section). In view of \autoref{thm:main}, we suppose that $m \notin
  \mathcal{M}_\alpha$. Let $k$ be the largest integer such that $q_k < m$, and let $t$ to be the largest integer such
  that $tq_k < m$ with $1 \leq t \leq a_{k+1}$. Notice that our assumptions imply that $k \geq 1$ because $q_0 = 1$. By
  taking the exponent and head and tail length to be maximal, we see that $\abs{w} \leq (\abexp{m} + 2)m - 2$. The main
  task of this section is to derive the following lower bound for the length of $w$:
  \begin{equation}\label{eq:main_lower_bound}
    (q_{k+1} + 2t - 1)q_k - q_{k+1} \leq \abs{w}
  \end{equation}
  This establishes the key inequality
  \begin{equation}\label{eq:main_inequality}
    (q_{k+1} + 2t - 1)q_k - q_{k+1} \leq (\abexp{m} + 2)m - 2.
  \end{equation}
  In other words, our aim is to establish the following proposition.
  
  \begin{proposition}\label{prp:main_inequality}
    Consider a factor $w$ of slope $\alpha$ with minimum abelian period $m$. Let $k$ be the largest integer such that
    $q_k \leq m$, and let $t$ to be the largest integer such that $tq_k \leq m$ with $1 \leq t \leq a_{k+1}$. If
    $m \notin \mathcal{M}_\alpha$, then both \eqref{eq:main_lower_bound} and \eqref{eq:main_inequality} hold.
  \end{proposition}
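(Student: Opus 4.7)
The plan is to establish the length lower bound \eqref{eq:main_lower_bound}; once this is in hand, \eqref{eq:main_inequality} follows immediately from the elementary upper bound $|w| \leq (\abexp{m} + 2)m - 2$. The latter bound comes from the abelian decomposition $w = u_0 u_1 \cdots u_{r-1} u_r$ of period $m$: the interior piece $u_1 \cdots u_{r-1}$ is an abelian power of period $m$ occurring in a Sturmian word of slope $\alpha$, so its exponent is at most $\abexp{m}$, while $\abs{u_0}, \abs{u_r} \leq m - 1$.

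Since $m \notin \mathcal{M}_\alpha$, the maximality of $k$ and $t$ forces each of $q_k, 2q_k, \ldots, tq_k$ to be strictly less than $m$. Combining \autoref{lem:abelian_period_covering} with $m = \mu_w$, none of these numbers is itself an abelian period of $w$.

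The base case $t = 1$ is where the main combinatorial content lives. I would recast the failure of $q_k$ as an abelian period of $w$ in terms of occurrences of the singular factor $s$ of length $q_k$ (\autoref{rem:singular}). Using \autoref{prp:ab_eq_geometric_2} and the uniqueness of the singular factor in its abelian class, one checks that $q_k$ is an abelian period of $w$ if and only if there is a phase $j \in \{0, \ldots, q_k - 1\}$ at which no occurrence of $s$ starts at a position congruent to $j$ modulo $q_k$ inside $w$; the Parikh-containment of head and tail is automatic thanks to \autoref{lem:singular_properties}(iii). Hence the failure of $q_k$ forces the occurrences of $s$ inside $w$ to cover every residue class modulo $q_k$. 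A short dynamical argument on the interval $[s]$ of length $\norm{q_k\alpha}$ (in the spirit of the three-distance theorem) shows that consecutive occurrences of $s$ are separated by either $q_{k+1}$ or $q_{k+1} + q_k$ positions; both differences are congruent to $q_{k-1} \pmod{q_k}$, and since $\gcd(q_{k-1}, q_k) = 1$, any $q_k$ successive occurrences sweep out all residues. Taking the minimum spacing $q_{k+1}$ for the $q_k - 1$ gaps and the trailing $q_k$ letters needed to fit the last occurrence inside $w$ yields $|w| \geq (q_k - 1)q_{k+1} + q_k$, which is exactly \eqref{eq:main_lower_bound} at $t = 1$.

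The extension to general $t \geq 2$ is the main obstacle. One must exploit the additional assumptions that $2q_k, \ldots, tq_k$ also fail to be abelian periods of $w$. The natural strategy is to show that these stronger hypotheses prevent any occurrence of $s$ from lying within the first $(t-1)q_k$ or within the last $tq_k$ positions of $w$; informally, an occurrence of $s$ too close to either end, together with the period-$m$ abelian decomposition of $w$, would allow some $sq_k$ with $1 \leq s \leq t$ to act as an abelian period of $w$ in its own right, contradicting the minimality of $m$. The two buffer regions of total length $2(t-1)q_k$ then supplement the base estimate to give $|w| \geq (q_k - 1)q_{k+1} + (2t-1)q_k = (q_{k+1} + 2t - 1)q_k - q_{k+1}$, as required. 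Turning the buffer heuristic into a precise statement---presumably by a careful case analysis of how singular sub-blocks interact with the head and tail of a candidate abelian-$sq_k$-decomposition, most likely using the complete first return in the same phase vocabulary of \autoref{sec:preliminaries}---is the step I expect to require the most work.
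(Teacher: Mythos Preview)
Your outline follows the same overall strategy as the paper: reduce to the length lower bound, show that the failure of $q_k$ as an abelian period forces the singular factor $s$ of length $q_k$ to occur in every residue class modulo $q_k$ (hence at least $q_k$ times, with gaps $\geq q_{k+1}$), and for $t\geq 2$ argue that the additional failures of $2q_k,\ldots,tq_k$ create buffer zones at the ends of $w$. Your $t=1$ case is essentially the paper's argument.

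The gap is in the $t\geq 2$ step, which you correctly flag as the hard part but whose mechanism you misidentify. The paper does \emph{not} use the period-$m$ decomposition of $w$ here at all; only the fact that $tq_k$ is not an abelian period of $w$ is used. What the paper does is, for each phase $n=0,\ldots,t-1$, explicitly build an abelian repetition of period $tq_k$ by regrouping the length-$q_k$ blocks $\beta_0\cdots\beta_{E-1}$ lying between the first and last occurrences of $s$, and then \emph{extend} this repetition to have head and tail of maximal length $tq_k-1$. The technical ingredient you are missing, and which makes this extension possible, is a first/last-letter lemma (\autoref{lem:first_last_letter}, \autoref{rem:weak_consequence}): the $a_{k+1}-1$ length-$q_k$ blocks immediately preceding (resp.\ following) any occurrence of $s$ all begin (resp.\ end) with the same letter as $s$. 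This single-letter control is precisely what allows the Parikh vector of the extended head or tail (which contains one singular block $s$ plus several nonsingular blocks with one boundary letter deleted) to be contained in the Parikh vector of a nonsingular $tq_k$-block. Without this lemma your buffer heuristic does not close.

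You also need the case split the paper performs on the number of occurrences of $s$ in $w$: exactly $q_k$, exactly $q_k+1$, or at least $q_k+2$. Only the first case matches your symmetric two-sided buffer picture (and there the paper gets the right-hand buffer by passing to the reversal $\mirror{w}$, using that $s$ is a palindrome and $\Lang{\alpha}$ is closed under reversal). The $q_k+1$-occurrence case needs a separate ``left/right overhang'' analysis of the $t$ shifted extended repetitions $\lambda_0,\ldots,\lambda_{t-1}$: those whose left overhang already covers $u_0$ must fail to cover $w$ on the right, and one shows that at least one of the corresponding right overhangs has length $\geq (t-i-2)q_k-1$, which yields the bound.
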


  As mentioned above, the main point of this paper is to show that left side of \eqref{eq:main_inequality} is so large
  that it also forces $\abexp{m}$ to be relatively large, that is, it forces $\norm{m\alpha}$ to be small, so small
  that $m$ has to correspond to a good rational approximation of $\alpha$.

  \begin{example}
    Recall that the slope $\alpha$ of the Fibonacci word equals $1/\varphi^2$, where $\varphi$ is the Golden ratio.
    Now $\alpha = [0; 2, \overline{1}] \approx 0.38$. The inequality \eqref{eq:main_inequality} predicts that a factor
    of the Fibonacci word having minimum abelian period $9$ must have length at least
    $(13 + 2 \times 1 - 1) \times 8 - 13 = 99$. On the other hand, we have $\norm{9\alpha} \approx 0.44$, so
    $\abexp{9} = \floor{1/\norm{9\alpha}} = 2$ by \autoref{prp:exp_formula}. Thus the upper bound of
    \eqref{eq:main_inequality} is $(2 + 2) \times 9 - 2 = 34$. The conclusion is that there is no factor with minimum
    abelian period $9$ in the Fibonacci word.
  \end{example}
  
  \begin{claim}\label{cl:s_occurrences}
    The word $w$ contains at least $q_k$ occurrences of the singular factor $s$ of length $q_k$.
  \end{claim}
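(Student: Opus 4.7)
My plan is to argue by contradiction: I will assume $w$ contains at most $q_k - 1$ occurrences of $s$ and construct an abelian decomposition of $w$ with period $q_k$, contradicting $\mu_w = m > q_k$.

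First, since the starting positions of the $s$-occurrences in $w$ occupy at most $q_k - 1$ residues modulo $q_k$, a pigeonhole argument gives an unoccupied residue $r \in \{0, 1, \ldots, q_k - 1\}$. The main obstacle I foresee is ensuring that such an $r$ can be chosen with $r \leq \abs{w} - q_k$, so that the phase-$r$ block decomposition of $w$ has at least one full block of length $q_k$. If no such $r$ existed, then every residue in $\{0, 1, \ldots, \abs{w} - q_k\}$ would be occupied; since $\abs{w} > q_k$, this would include two consecutive positions, which would force $s$ to have period $1$ and hence to be a constant string $a^{q_k}$ for some letter $a$. Iterating the consecutive $s$-occurrences would then give $w = a^{\abs{w}}$, so $\mu_w = 1$, contradicting $m > q_k \geq 2$. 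Thus a valid unoccupied residue $r$ does exist.

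Next, I will write $w = u_0 u_1 \dotsm u_{n-1} u_n$ with $\abs{u_0} = r$, $\abs{u_1} = \dotsm = \abs{u_{n-1}} = q_k$, and $0 \leq \abs{u_n} < q_k$. By the choice of $r$, each middle block $u_i$ is a non-singular length-$q_k$ factor, and by \autoref{prp:ab_eq_geometric} together with \autoref{rem:singular} all non-singular length-$q_k$ factors are abelian equivalent; thus the middle blocks share a common Parikh vector $\mathcal{P}$. For the head, let $v_0$ denote the length-$q_k$ factor of the Sturmian word starting where $w$ starts; then $u_0$ is a prefix of $v_0$ of length $r < q_k$. The vector $\Parikh{u_0}$ is contained in $\mathcal{P}$ either because $v_0$ is non-singular (so $\Parikh{u_0}$ is componentwise less than $\Parikh{v_0} = \mathcal{P}$, with strict inequality in at least one component) or, when $v_0 = s$, by \autoref{lem:singular_properties}(iii). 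The same argument applied to the length-$q_k$ factor of the Sturmian word ending where $w$ ends handles the tail $u_n$. This yields a valid abelian decomposition of $w$ with period $q_k < m$, the desired contradiction.
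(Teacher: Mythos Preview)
Your proof is correct and follows essentially the same approach as the paper: both argue that if some residue class modulo $q_k$ contains no full occurrence of $s$ in $w$, then the corresponding block decomposition (together with \autoref{lem:singular_properties}(iii) for the head and tail) gives $w$ an abelian period $q_k < m$, a contradiction. Your treatment is in fact more careful than the paper's on two points: you explicitly verify that the chosen residue $r$ yields at least one full middle block (handling the degenerate range $q_k < \abs{w} < 2q_k$, where residues coincide with positions and the constant-word argument kicks in), and you spell out the head/tail containment by looking at the length-$q_k$ factor $v_0$ extending $u_0$, splitting into the cases $v_0 \neq s$ and $v_0 = s$.
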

  \begin{proof}
    By \autoref{rem:singular}, all factors of length $q_k$ belong to the same abelian equivalence class except the
    singular factor $s$ of length $q_k$. Thus whenever we factorize a factor of slope $\alpha$ of length $nq_k$ as a
    product $u_1 \dotsm u_n$ with $\abs{u_1} = \ldots = \abs{u_n} = q_k$ and none of the words $u_i$ equal $s$, then
    $u_1 \dotsm u_n$ is an abelian power of period $q_k$ and exponent $n$. The word $w$ cannot be a factor of such an
    abelian power $u_1 \dotsm u_n$. This means that $w$ contains the singular factor $s$ of length $q_k$ in all phases
    modulo $q_k$. Otherwise there is a phase which does not contain $s$ or it contains $s$ only partially (a suffix of
    $s$ as a prefix or a prefix of $s$ as a suffix). As the Parikh vectors of the proper prefixes and suffixes of $s$
    are contained in the Parikh vectors of any factor of length $q_k$ by \autoref{lem:singular_properties} (iii), it
    follows that it is possible to cover $w$ with an abelian repetition of period $q_k$. This is contrary to our
    assumptions. Consequently, the word $w$ contains at least $q_k$ occurrences of $s$.
  \end{proof}
   
  The next result is crucial in obtaining a lower bound for $\abs{w}$.

  \begin{lemma}\label{lem:return_times_singular}
    The return times of the singular factor of length $q_k$ are $q_{k+1}$ and $q_{k+2,1}$.
  \end{lemma}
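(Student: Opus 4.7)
The plan is to analyze the first return map of the rotation $R$ restricted to the interval $[s]$, which by \autoref{rem:singular} has length $\norm{q_k\alpha}$ and has one endpoint at $0$ (the other at $\{-q_k\alpha\}$). The first step is to observe that an integer $n>0$ can be a return time to $[s]$ only if $\norm{n\alpha} < \norm{q_k\alpha}$, since otherwise the translate $R^n([s])$ is disjoint from $[s]$. By the best approximation property and the structure of the points nearest $0$ depicted in \autoref{fig:intervals}, the two smallest positive integers $n$ with $\norm{n\alpha} < \norm{q_k\alpha}$ are $n = q_{k+1}$ and $n = q_{k+2,1} = q_{k+1}+q_k$, with all further candidates being the larger semiconvergents $q_{k+2,\ell}$ for $\ell \geq 2$.

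The second step is to partition $[s]$ into two explicit subintervals $A$ and $B$, using that $\{q_{k+1}\alpha\}$ lies on the opposite side of $0$ from $\{q_k\alpha\}$. The subinterval $A$ of length $\norm{q_k\alpha} - \norm{q_{k+1}\alpha}$ consists of those $\rho \in [s]$ with $R^{q_{k+1}}(\rho) \in [s]$, giving first return time $q_{k+1}$. The complementary subinterval $B$ of length $\norm{q_{k+1}\alpha}$ is mapped by $R^{q_{k+1}}$ just outside $[s]$ on the side of $0$, and a further application of $R^{q_k}$ shifts the image back into $[s]$ because $\{q_k\alpha\}$ points in the opposite direction with $\norm{q_k\alpha} > \norm{q_{k+1}\alpha}$. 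This yields a first return time of $q_{k+1} + q_k = q_{k+2,1}$ for points in $B$, provided no intermediate time returns them to $[s]$ first. Since $A$ and $B$ together exhaust $[s]$, no larger candidate $q_{k+2,\ell}$ with $\ell \geq 2$ can appear as a first return.

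The main obstacle is the ``provided'' clause: for $q_{k+1} < n < q_{k+2,1}$, one must show $R^n([s]) \cap [s] = \emptyset$. Writing $n = q_{k+1} + j$ with $0 < j < q_k$, this reduces to bounding $\norm{n\alpha}$ from below. Since $\{q_{k+1}\alpha\}$ is very small while $\norm{j\alpha} \geq \norm{q_{k-1}\alpha}$ by the best approximation property, combining with the identity $\norm{q_{k-1}\alpha} = a_{k+1}\norm{q_k\alpha} + \norm{q_{k+1}\alpha}$ (which underlies \autoref{lem:consecutive_norm}) forces $\norm{n\alpha} \geq \norm{q_k\alpha}$, as required. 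A secondary technical nuisance is keeping track of the orientation of $[s]$ on $\T$, which alternates with the parity of $k$; but the argument is symmetric between the two cases, so no new ideas are needed there.
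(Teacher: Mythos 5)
Your proof is correct and follows essentially the same route as the paper: both arguments analyze the first return of the rotation to the interval $[s]$ of length $\norm{q_k\alpha}$, split $[s]$ into a subinterval of length $\norm{q_k\alpha}-\norm{q_{k+1}\alpha}$ returning at time $q_{k+1}$ and a complementary sliver of length $\norm{q_{k+1}\alpha}$ returning at time $q_{k+2,1}$, and exclude intermediate return times via the best approximation property (your quantitative bound $\norm{(q_{k+1}+j)\alpha}\geq\norm{q_{k-1}\alpha}-\norm{q_{k+1}\alpha}=a_{k+1}\norm{q_k\alpha}\geq\norm{q_k\alpha}$ for $0<j<q_k$ is a clean substitute for the paper's contradiction argument). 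One cosmetic slip: since $\{q_{k+1}\alpha\}$ and $\{q_k\alpha\}$ lie on opposite sides of $0$, the translate $R^{q_{k+1}}(B)$ exits $[s]$ past the endpoint $\{-q_k\alpha\}$ rather than past $0$; this does not affect your argument, because $R^{q_k}$ still translates that sliver back across the exit endpoint into $[s]$, the net displacement being $\norm{q_k\alpha}-\norm{q_{k+1}\alpha}<\norm{q_k\alpha}$.
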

  \begin{proof}
    The interval $[s]$ of the singular factor $s$ is $I(0, -q_k\alpha)$ or $I(-q_k\alpha, 1)$ by
    \autoref{rem:singular}. Let $x \in [s]$. Then the word $\infw{s}_{x,\alpha}$ begins with $s$. The word $s$ occurs
    in $\infw{s}_{x,\alpha}$ at position $n$, $n > 0$, if $\{x + n\alpha\} \in [s]$. The return time of the prefix $s$
    in $\infw{s}_{x,\alpha}$ is determined by the least such $n$. The length of the interval $[s]$ is
    $\norm{q_k\alpha}$, so it must be that the distance between $x$ and $\{x + n\alpha\}$ is less than
    $\norm{q_k\alpha}$, that is, $\norm{n\alpha} < \norm{q_k\alpha}$. By the best approximation property, we thus
    conclude that $n \geq q_{k+1}$. Let $y$ be a point such that $y \in [s]$ and $\norm{y} = \norm{q_{k+1}\alpha}$. If
    $x \in I(-q_k\alpha, y) \subseteq [s]$, then $\{x + q_{k+1}\alpha\} \in [s]$ and $n = q_{k+1}$. Suppose then that
    $\norm{x} < \norm{y}$. Now $\{x + q_{k+1}\alpha\} \notin [s]$, so $n > q_{k+1}$. On the other hand,
    $\{x + q_{k+2,1}\alpha\} \in [s]$ because $\norm{q_{k+2,1}\alpha} = \norm{q_k\alpha} - \norm{q_{k+1}\alpha}$ and
    the distance between $x$ and $\{-q_k\alpha\}$ is at least $\norm{q_k\alpha} - \norm{q_{k+1}\alpha}$. Therefore
    $n \leq q_{k+2,1}$. If $n < q_{k+2,1}$, then both $\{x + n\alpha\}$ and $\{x + q_{k+2,1}\alpha\}$ lie on $[s]$.
    Then we have $q_{k+2,1} - n \geq q_{k+1}$ by the best approximation property. Therefore
    $q_{k+2,1} \geq q_{k+1} + n > 2q_{k+1} > q_{k+1} + q_k = q_{k+2,1}$; a contradiction. The conclusion is that
    $n = q_{k+2,1}$. We are left with the case $x = y$. Recall that the interval $[s]$ is half-open. If $0 \in [s]$,
    then $\{x + q_{k+1}\alpha\} = 0 \in [s]$, and the claim is clear. Otherwise $\{-q_k\alpha\} \in [s]$,
    $\{x + q_{k+2,1}\alpha\} = \{-q_k\alpha\} \in [s]$, and $\{x + q_{k+1}\alpha\} \notin [s]$, so the preceding
    arguments show that $n = q_{k+2,1}$.
  \end{proof}

  In fact, every factor of a Sturmian word has exactly two returns (the return times are distinct), and Sturmian words
  can be characterized as the recurrent infinite words whose each factor has exactly two returns
  \cite{2001:a_characterization_of_sturmian_words_by_return_words,2000:return_words_in_sturmian_and_episturmian_words}.
  \autoref{lem:return_times_singular} states that the maximum return time of the singular factor of length $q_k$ is
  $q_{k+2,1}$. This is in fact the longest return time among all factors of length $q_k$. This is important in
  determining the so-called recurrence quotients of Sturmian words
  \cite{1940:symbolic_dynamics_ii,1999:limit_values_of_the_recurrence_quotient_of_sturmian}.

  Before proceeding any further, we need the following two technical lemmas on complete first returns to $s$ in the
  same phase. The first lemma essentially says that a Sturmian word $\infw{s}$ of slope $\alpha$ is ``covered'' by
  abelian powers of period $q_k$. By this we mean that whenever we factorize $\infw{s}$ as blocks of length $q_k$, the
  singular factor $s$ is seen only rarely. Intuitively, it is ``hard'' to avoid having abelian period $q_k$ for factors
  whose length does not significantly differ from $\abexp{q_k}q_k$. See \autoref{fig:covering} for a picture of two
  such factorizations of the Fibonacci word, one for phase $0$ modulo $5$ and another for phase $1$ modulo $5$. Notice
  that $\abexp{q_k} \geq q_{k+1}$; see \autoref{lem:exponent_lower_bound}.

  \begin{figure}
  \centering
  \begin{tikzpicture}
    \tikzstyle{own}=[line width=0.8pt,black]
    \tikzstyle{point}=[line width=0.5pt,black,fill=white]
    \tikzstyle{point2}=[line width=0.8pt,black]

    \node (box) {%
    \begin{minipage}[t!]{\textwidth}
      \centering
      $\infw{f} = 01001010010010100101 \mathbf{00100} 1010010010100101 \mathbf{00100} 10100101001001010010010100\dotsm$
    \end{minipage}
    };

    \tikzmath{\width = 0.88; \height1 = 0.2; \height2 = -0.25; \pos1 = -6.3; \pos2 = \pos1 + 5*\width; \pos3 =
    -6.13; \pos4 = \pos3 + 9*\width;}

    \foreach \i in {0,...,3} {
      \draw[-] ({\pos1+\i*\width},{\height1}) to[bend left] ({\pos1+(\i + 1)*\width},{\height1});
    }
    \foreach \i in {0,...,9} {
      \draw[-] ({\pos2+\i*\width},{\height1}) to[bend left] ({\pos2+(\i + 1)*\width},{\height1});
    }

    \foreach \i in {0,...,7} {
      \draw[-] ({\pos3+\i*\width},{\height2}) to[bend right] ({\pos3+(\i + 1)*\width},{\height2});
    }
    \foreach \i in {0,...,5} {
      \draw[-] ({\pos4+\i*\width},{\height2}) to[bend right] ({\pos4+(\i + 1)*\width},{\height2});
    }
  \end{tikzpicture}
  \caption{The Fibonacci word $\infw{f}$ factorized as a product of blocks of length $5$ in phases $0$ and $1$ modulo $5$. The singular factor $00100$ of length $5$ is seen only rarely in each phase.}
  \label{fig:covering}
  \end{figure}

  \begin{lemma}\label{lem:covering}
    Let $s$ be the singular factor of length $q_k$ for some $k \geq 0$, and let $w$ in $\Lang{\alpha}$ be a complete
    first return to $s$ in the same phase. Then the word $s^{-1}ws^{-1}$ is an abelian power of period $q_k$ having
    exponent $\abexp{q_k} - 1$ or $\abexp{q_k}$.
  \end{lemma}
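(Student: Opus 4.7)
The plan splits into two steps: first prove $s^{-1}ws^{-1}$ is an abelian power of period $q_k$, then compute its exponent from the geometry of rotations.

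For the abelian power structure, note that, since $w$ is a complete first return to $s$ in the same phase, $\abs{w} = n q_k$ for some integer $n \geq 2$, and within $w$ the only occurrences of $s$ at positions divisible by $q_k$ are at $0$ and $(n-1)q_k$. Partitioning $w = w_1 w_2 \dotsm w_n$ with each $\abs{w_i} = q_k$ therefore gives $w_1 = w_n = s$ and $w_i \neq s$ for $1 < i < n$; hence $s^{-1}ws^{-1} = w_2 \dotsm w_{n-1}$. By \autoref{rem:singular}, every length-$q_k$ factor except $s$ shares a common Parikh vector, so the blocks $w_2, \ldots, w_{n-1}$ are abelian equivalent, and $s^{-1}ws^{-1}$ is an abelian power of period $q_k$ and exponent $n - 2$.

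For the exponent, the goal is to show $n - 1 \in \{\abexp{q_k}, \abexp{q_k}+1\}$. Pick an intercept $\rho$ for which $\infw{s}_{\rho,\alpha}$ begins with $w$; then $\rho \in [s]$, and the complete-first-return-in-the-same-phase condition translates to: $n - 1$ is the least positive integer $j$ such that $R^{jq_k}(\rho) \in [s]$. Equivalently, $n-1$ is the first return time of $\rho$ to $[s]$ under the map $T = R^{q_k}$. The map $T$ rotates $\T$ by $\{q_k\alpha\}$, which lies at distance $\delta := \norm{q_k\alpha}$ from $0$, so $T$ acts as a rotation by arc-length $\delta$ (in one of two directions, depending on the parity of $k$). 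Crucially, \autoref{rem:singular} identifies $[s]$ as an interval of length exactly $\delta$. An elementary computation on $\T$ then shows that, starting from any point of an interval of length $\delta$ and moving by arc-length $\delta$ at each step, the first return to the interval occurs at step $N := \floor{1/\delta}$ or at step $N+1$, the two cases partitioning the interval into two sub-arcs. By \autoref{prp:exp_formula}, $N = \abexp{q_k}$, which yields the desired conclusion.

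The main obstacle is that first-return calculation: one must carefully handle the two possible orientations of $T$ (whether $\{q_k\alpha\}$ lies just above $0$ or just below $1$) and the two corresponding placements of $[s]$ described in \autoref{rem:singular}, as well as the half-open endpoints of $[s]$ (which depend on whether $0 \in I_0$). After checking each case, one obtains that for every $\rho \in [s]$ exactly one value $j \in \{N, N+1\}$ satisfies $T^j(\rho) \in [s]$, cleanly producing $n - 1 \in \{\abexp{q_k}, \abexp{q_k} + 1\}$ and hence the stated exponent $n - 2 \in \{\abexp{q_k} - 1, \abexp{q_k}\}$.
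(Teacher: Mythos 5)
Your proposal is correct and follows essentially the same route as the paper: identify $w$ with a prefix of some $\infw{s}_{\rho,\alpha}$ with $\rho \in [s]$, use the fact that all non-singular factors of length $q_k$ are abelian equivalent (equivalently, that the intermediate points $R^{jq_k}(\rho)$ all lie in the complement of $[s]$) to get the abelian power structure, and compute the exponent as the first return time of $\rho$ to the interval $[s]$ of length $\norm{q_k\alpha}$ under the rotation by $\norm{q_k\alpha}$, which is $\floor{1/\norm{q_k\alpha}}$ or that plus one. The paper phrases the last step as counting how many times $\norm{q_k\alpha}$ fits into the complementary arc starting at $\{ \rho + q_k\alpha\}$, but this is the same computation, handled at the same level of detail regarding orientation and endpoints.
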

  \begin{proof}
    Suppose that $w$ is a prefix of a Sturmian word $\infw{s}_{x,\alpha}$. As $w$ begins with $s$, we have $x \in [s]$.
    By \autoref{rem:singular}, the interval $[s]$ has endpoints $0$ and $\{-q_k\alpha\}$. The occurrences of $s$ in
    $\infw{s}_{x,\alpha}$ in the same phase as the prefix $s$ correspond to points of the form $\{x + nq_k \alpha\}$
    that are interior points of the interval $[s]$.\footnote{Since $\infw{s}_{x,\alpha}$ is recurrent and $\alpha$ is
    irrational, we may assume that none of these points coincide with the endpoints of $[s]$.} Consider the smallest
    such positive $n$ (such a number exists because the sequence $(nq_k\alpha)_n$ is dense in $\T$ by the well-known
    Kronecker Approximation Theorem; see \cite[Ch.~XXIII]{1979:an_introduction_to_the_theory_of_numbers}). The points
    $\{x + q_k\alpha\}$, $\ldots$, $\{x + (n - 1)q_k\alpha\}$ lie on the interval $I(-q_k\alpha, 1)$ (if
    $\{-q_k\alpha\} < 1/2)$ or on the interval $I(0, -q_k\alpha)$ (if $\{-q_k\alpha\} > 1/2$). Thus
    \autoref{prp:ab_eq_geometric_2} implies that the word $s^{-1}ws^{-1}$ is an abelian power of period $q_k$ and
    exponent $n - 1$. Moreover, the exponent is found by adding one to the times the length $\norm{q_k\alpha}$ fits
    into the interval $I(x + q_k\alpha, 1)$ (if $\{-q_k\alpha\} < 1/2)$ or into the interval
    $I(x + q_k\alpha, -q_k\alpha)$ (if $\{-q_k\alpha\} > 1/2$). Suppose that $\{-q_k\alpha\} < 1/2$. If $x$ is
    arbitrarily close to $0$, then $\{x + q_k\alpha\}$ is arbitrarily close to $\{-q\alpha\}$, so in this case
    $n - 1 = 1 + \floor{(1 - \norm{q_k\alpha})/\norm{q_k\alpha}}$, that is, we have $n - 1 = \abexp{q_k}$. If $x$ is
    arbitrarily close to $\{-q_k\alpha\}$, then analogously we have $n - 1 = \abexp{q_k} - 1$. The case
    $\{-q_k\alpha\} > 1/2$ is similar.
  \end{proof}

  \begin{lemma}\label{lem:first_last_letter}
    Let $s$ be the singular factor of length $q_k$ for some $k \geq 0$. Let $w$ in $\Lang{\alpha}$ be a complete first
    return to $s$ in the same phase, and write $s^{-1}ws^{-1} = u_0 u_1 \dotsm u_{n - 1}$ with
    $\abs{u_0} = \ldots = \abs{u_{n - 1}} = q_k$. Then the words $u_0$, $u_1$, $\ldots$, $u_{\lambda-1}$ end with the
    same letter as $s$ and the words $u_{n - \lambda}$, $u_{n - \lambda + 1}$, $\ldots$, $u_{n - 1}$ begin with the
    same letter as $s$ when
    \begin{equation*}
      \lambda = \begin{cases}
                  q_{k+1} - p_{k+1} - 1, &\text{if $k$ is odd}, \\
                  p_{k+1} - 1,           &\text{if $k$ is even}.
                \end{cases}
    \end{equation*}
    Moreover, the singular factor $s$ ends and begins with the same letter.
  \end{lemma}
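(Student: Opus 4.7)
The second sentence of the lemma is simply \autoref{lem:singular_properties}(i), so I focus on the main claim, working with $k \geq 1$. My plan is to translate the combinatorial statement into a question about a rotation orbit on $\T$ and then read off the required $\lambda$ from two classical continued-fraction identities.

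Using \autoref{lem:covering}, I write $w$ as a prefix of some $\infw{s}_{x,\alpha}$ with $x \in [s]$; then each block $u_i$ is the factor of length $q_k$ starting at the circle point $y_i := \{x + (i+1)q_k\alpha\} \in \T \setminus [s]$. The first letter of $u_i$ is $\nu(y_i)$, and its last letter is $\nu(\{y_i + (q_k-1)\alpha\}) = \nu(\{y_{i+1} - \alpha\})$. A one-line case split on whether $y_{i+1} \geq \alpha$ shows that $u_i$ ends with $0$ iff $y_{i+1} \in [\alpha, 1)$, and with $1$ iff $y_{i+1} \in [0, \alpha)$. By \autoref{lem:singular_properties}(i), $s$ begins and ends with a common letter $a$; the cases $s$ light ($a=0$) and $s$ heavy ($a=1$) coincide with $k$ odd and $k$ even respectively and are symmetric, so I focus on $k$ odd.

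Set $\beta := \norm{q_k\alpha}$. Then $\{q_k\alpha\} = 1 - \beta$, $[s] = (0, \beta) \subseteq I_0$, and rotation by $q_k\alpha$ on $\T \setminus [s]$ acts as subtraction by $\beta$: while $y_i \notin [s]$, $y_{i+1} = y_i - \beta$. Since $x \in (0, \beta)$, we get $y_0 = x - \beta + 1 \in (1-\beta, 1)$ and hence $y_{i+1} = y_0 - (i+1)\beta$ with no wraparound. Requiring $y_{i+1} \in [\alpha, 1)$ for all $i \leq \lambda - 1$ reduces, in the worst case $y_0 \to (1-\beta)^+$, to the single inequality $(\lambda+1)\beta \leq 1 - \alpha$. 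The parallel analysis for the last $\lambda$ blocks iterates backward from $y_n \in (0, \beta)$ with worst case $y_n \to \beta^-$ and produces exactly the same inequality; in the $k$-even case the symmetric bound is $(\lambda+1)\beta \leq \alpha$.

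It remains to verify $(q_{k+1} - p_{k+1})\norm{q_k\alpha} < 1 - \alpha$ in the odd case and $p_{k+1}\norm{q_k\alpha} < \alpha$ in the even case. Both come from two classical continued-fraction identities, $q_{k+1}\norm{q_k\alpha} + q_k\norm{q_{k+1}\alpha} = 1$ and $\alpha = p_{k+1}\norm{q_k\alpha} + p_k\norm{q_{k+1}\alpha}$. For $k$ odd, subtracting gives $(q_{k+1}-p_{k+1})\norm{q_k\alpha} = 1 - \alpha - (q_k - p_k)\norm{q_{k+1}\alpha}$, which is strictly less than $1-\alpha$ because $q_k > p_k \geq 1$ and $\norm{q_{k+1}\alpha} > 0$; the $k$-even inequality is immediate. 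The main obstacle is the geometric bookkeeping---placing $[s]$ correctly relative to $I_0$ and $I_1$ in each parity case, identifying the worst-case endpoint of $[s]$ for $y_0$ and for $y_n$, and tracking the direction of rotation---after which the two identities above convert the lemma into a one-line arithmetic check.
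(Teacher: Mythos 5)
Your argument is correct and is essentially the paper's own proof in cleaner form: both reduce the claim to placing the orbit points $\{x + iq_k\alpha\}$ relative to the intervals $I_0$ and $I_1$ and to checking that $\lambda + 1$ gaps of width $\norm{q_k\alpha}$ fit inside $1-\alpha$ ($k$ odd, $s$ light) or $\alpha$ ($k$ even, $s$ heavy), then close with continued-fraction identities. Incidentally, your closing identity $\alpha = p_{k+1}\norm{q_k\alpha} + p_k\norm{q_{k+1}\alpha}$ is the correct classical form --- the paper's \eqref{eq:cf_identity} omits the coefficient $p_k$ --- but since the paper only extracts the lower bounds $\alpha/\norm{q_k\alpha} \geq p_{k+1}$ and $(1-\alpha)/\norm{q_k\alpha} \geq q_{k+1} - p_{k+1}$ from it, its conclusion survives the correction.
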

  \begin{proof}
    If $k = 0$, then the claim is true, so suppose that $k \geq 1$. Let $w$ be a prefix of a Sturmian word
    $\infw{s}_{x,\alpha}$. Then $x \in [s]$, and we can assume without loss of generality that $x$ is an interior point
    of $[s]$. We consider first the latter claim concerning the first letters of the words $u_j$. Assume that
    $\{-q_k\alpha\} > 1/2$ so that $s$ begins with the letter $1$. If the points $\{x + iq_k\alpha\}$,
    $\{x + (i+1)q_k\alpha\}$, $\ldots$, $\{x + (n - 1)q_k\alpha\}$ lie on the interval $[1]$ of length $\alpha$, then
    the words $u_i$, $\ldots$, $u_{n-1}$ begin with the letter $1$. Notice that the distance between two consecutive
    points is $\norm{q_k\alpha}$ and that $\{x + nq_k\alpha\}$ lies on $[1]$. The worst case scenario is that the point
    $\{x + nq_k\alpha\}$ is very close to the point $\{-q_k\alpha\}$, and then it must be that the $n - i$ consecutive
    distances $\norm{q_k\alpha}$ must fit into $\alpha -  \norm{q_k\alpha}$. Suppose next that $\{-q_k\alpha\} < 1/2$.
    In this case, the word $s$ begins with the letter $0$. Similarly we need to see if the points $\{x + iq_k\alpha\}$,
    $\ldots$, $\{x + (n - 1)q_k\alpha\}$ are placed on the interval $[0]$ of length $1 - \alpha$. This time we need to
    check how many times $\norm{q_k\alpha}$ fits into $1 - \alpha - \norm{q_k\alpha}$. Thus $i$ is maximal when
    $n - 1 - i + 1$ equals $\floor{\alpha/\norm{q_k\alpha}} - 1$ (if $\{-q_k\alpha\} > 1/2$) or
    $\floor{(1-\alpha)/\norm{q_k\alpha}} - 1$ (if $\{-q_k\alpha\} < 1/2$). Consider the former case
    $\{-q_k\alpha\} > 1/2$. By applying \eqref{eq:cf_identity}, we obtain that
    \begin{equation*}
      \frac{\alpha}{\norm{q_k\alpha}} = \frac{p_{k+1}\norm{q_k\alpha} + \norm{q_{k+1}\alpha}}{\norm{q_k\alpha}} \geq p_{k+1}.
    \end{equation*}
    Suppose then that $\{-q_k\alpha\} < 1/2$. In this case, we derive using \eqref{eq:cf_identity},
    \eqref{eq:cf_identity_2}, and \eqref{eq:norm_alpha} that
    \begin{align*}
      \frac{1 - \alpha}{\norm{q_k\alpha}} &= \frac{1}{\norm{q_k\alpha}} - p_{k+1} - \frac{\norm{q_{k+1}\alpha}}{\norm{q_k\alpha}} \\
                                          &= \alpha_{k+1}q_k + q_{k-1} - p_{k+1} - \frac{\norm{q_{k+1}\alpha}}{\norm{q_k\alpha}} \\
                                          &= q_{k+1} - p_{k+1} + [0; a_{k+2}, \ldots]q_k - \frac{\norm{q_{k+1}\alpha}}{\norm{q_k\alpha}} \\
                                          &= q_{k+1} - p_{k+1} + \frac{1}{\alpha_{k+2}}q_k - \frac{1}{\alpha_{k+2}} \\
                                          &= q_{k+1} - p_{k+1} + \frac{q_k - 1}{\alpha_{k+2}} \\
                                          &\geq q_{k+1} - p_{k+1}.
    \end{align*}
    Together the two preceding inequalities establish the latter claim on first letters of the words $u_j$.

    Consider then the former claim about the last letters of the words $u_j$. Suppose first that
    $\{-q_k\alpha\} > 1/2$. The final letter of $s$ is determined by the point $\{x + (q_k - 1)\alpha\}$. Since
    $[s] = I(0, -q_k\alpha)$, we have $\{x + q_k\alpha\} \in I(0, q_k\alpha)$, and hence
    $1 - \alpha < \{x + (q_k - 1)\alpha\} < 1$ because $0 < \{q_k\alpha\} < \alpha$. This means that $s$ ends with the
    letter $1$. As long as the points $\{x + 2q_k\alpha\}$, $\ldots$, $\{x + (i + 1)q_k\alpha\}$ lie between the points
    $0$ and $\alpha$, the words $u_0$, $\ldots$, $u_{i-1}$ end with the letter $1$. Again, it is clearly sufficient to
    compute $\floor{\alpha/\norm{q_k\alpha}} - 1$. The final case $\{-q_k\alpha\} < 1/2$ is analogous.
  \end{proof}

  \begin{remark}\label{rem:weak_consequence}
    In the proof of \autoref{lem:first_last_letter}, we derived lower bounds for both $\alpha/\norm{q_k\alpha}$ and
    $(1 - \alpha)/\norm{q_k\alpha}$, the lower bound for the former being $p_{k+1}$. Since $1 - \alpha > \alpha$, we
    derive a common lower bound $p_{k+1}$ for both quantities, that is, $\lambda \geq p_{k+1} - 1$ for all $k \geq 1$.
    It is straightforward to see that $p_k \geq a_k$ for all $k \geq 1$, so we conclude that for all $k \geq 1$ the
    $a_{k+1} - 1$ consecutive factors of length $q_k$ preceding (resp. following) each occurrence of the singular factor
    $s$ of length $q_k$ begin (resp. end) with the same letter as $s$. This is the consequence of
    \autoref{lem:first_last_letter} we need in this paper.
  \end{remark}

  We may now continue to derive the inequality \eqref{eq:main_inequality}. The factor $w$ contains at least $q_k$
  occurrences of $s$ (\autoref{cl:s_occurrences}) and, by \autoref{lem:return_times_singular}, the minimum return time
  of $s$ is $q_{k+1}$. Thus if $w$ contains at least $q_k + 2$ occurrences of $s$ then,
  $\abs{w} \geq (q_k + 1)q_{k+1} + q_k$. In this case \eqref{eq:main_lower_bound} holds as it is straightforward to
  compute that $(q_k + 1)q_{k+1} + q_k - ((q_{k+1} + 2t - 1)q_k - q_{k+1}) > 0$ (recall from the paragraph following
  \autoref{thm:main} that $t \leq a_{k+1}$). We may thus assume that $w$ contains
  at most $q_k + 1$ occurrences of $s$. Suppose then that $t = 1$. If $w$ contains $q_k$ occurrences of $s$, then we
  have $\abs{w} \geq (q_k - 1)q_{k+1} + q_k$, that is, \eqref{eq:main_lower_bound} holds. If $w$ contains $q_k + 1$
  occurrences of $s$ then, by the same logic, we see that $\abs{w} \geq q_k q_{k+1} + q_k$. Then
  \eqref{eq:main_lower_bound} is true as is easy to verify. Therefore we may assume that $t > 1$.

  Suppose first that $w$ contains exactly $q_k + 1$ occurrences of $s$. By \autoref{lem:return_times_singular}, the
  occurrences of $s$ do not overlap, so we may write $w = u_0 s u_1 s u_2 \dotsm s u_{q_k + 1}$ for some words $u_0$,
  $u_1$, $\ldots$, $u_{q_k + 1}$. Since both return times of $s$, $q_{k+1}$ and $q_{k+2,1}$, equal $q_{k-1}$ modulo
  $q_k$, it follows that the occurrences of $s$ in $w$ are all in different phases modulo $q_k$ except the first and
  final one that are in the same phase (it is easy to see by induction that $\gcd(q_{k-1}, q_k) = 1$). By
  \autoref{lem:covering}, the word $(u_0 s)^{-1} w (s u_{q_k + 1})^{-1}$ is an abelian power of period $q_k$ and
  exponent $E$, where $E$ equals $\abexp{q_k} - 1$ or $\abexp{q_k}$. Write
  $(u_0 s)^{-1} w (s u_{q_k + 1})^{-1} = \beta_0 \dotsm \beta_{E-1}$ with
  $\abs{\beta_0} = \ldots = \abs{\beta_{E-1}} = q_k$. Consider the words $\beta_0 \dotsm \beta_{E-1}$,
  $\beta_1 \dotsm \beta_{E-1}$, $\ldots$, $\beta_{t-1} \dotsm \beta_{E-1}$. These words can be viewed as (possibly
  degenerate) abelian repetitions with period $tq_k$ and empty head.

  \begin{claim}\label{cl:extension}
    The abelian repetitions $\beta_0 \dotsm \beta_{E-1}$, $\beta_1 \dotsm \beta_{E-1}$, $\ldots$,
    $\beta_{t-1} \dotsm \beta_{E-1}$ of period $tq_k$ can be extended to have head and tail of maximum length
    $tq_k - 1$.
  \end{claim}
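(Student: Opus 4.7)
For each $j \in \{0, 1, \ldots, t-1\}$, let $B_j = (E - j) \bmod t$ and $A_j = (E - j - B_j)/t$. The word $\beta_j \beta_{j+1} \cdots \beta_{E-1}$ already carries the natural abelian decomposition of period $tq_k$ with empty head, $A_j$ interior blocks of length $tq_k$ (each a concatenation of $t$ consecutive non-singular $\beta_i$'s, hence of Parikh vector $\mathcal{P} = t \Parikh{\beta_0}$), and tail $\beta_{j + A_j t} \cdots \beta_{E-1}$ of length $B_j q_k$. My plan is to enlarge this repetition inside the ambient Sturmian word by prepending the $tq_k - 1$ letters immediately preceding $\beta_j$ and appending the $(t - B_j)q_k - 1$ letters immediately following $\beta_{E-1}$; the new head will then have length $tq_k - 1$ and the new tail length $B_j q_k + (t - B_j)q_k - 1 = tq_k - 1$, both attaining the maximum permitted by \autoref{def:abelian_period}.

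The interior blocks are unchanged and have the required Parikh vector by construction, so only the new head and new tail require checking, namely that their Parikh vectors are contained in $\mathcal{P}$. Writing $\mathcal{P} = (tq_k - th, th)$ with $h = \abs{\beta_0}_1$ and applying \autoref{prp:ab_eq_geometric} to length $tq_k - 1$ to restrict attention to the two Parikh classes present at that length, the containment reduces to the single floor identity $\lfloor (tq_k - 1)\alpha \rfloor = th - 1$, which forces the number of $1$'s in any length-$(tq_k-1)$ factor into $\{th - 1, th\}$.

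The main step is then a direct continued-fraction computation. Splitting by the parity of $k$, the geometric picture of \autoref{fig:intervals} and \autoref{rem:singular} together pin down $h = p_k$ in both parities (the singular factor $s$ is heavy for even $k$ and light for odd $k$, so the non-singular $\beta_i$'s occupy the complementary abelian class), and the floor identity reduces to the bound $t \norm{q_k\alpha} < \alpha$ for even $k$ and $t \norm{q_k\alpha} \leq 1 - \alpha$ for odd $k$. Both bounds follow from $t \leq a_{k+1}$ together with the estimate $a_{k+1} \norm{q_k\alpha} < 1/q_k$ obtained from \autoref{eq:norm_alpha}: for even $k \geq 2$ one uses $q_k\alpha > p_k \geq 1$ to obtain $1/q_k < \alpha$, while for odd $k \geq 1$ the inequality $1/q_k \leq 1/2 < 1 - \alpha$ is trivial from $q_k \geq 2$ and $\alpha < 1/2$. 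The case $k = 0$ does not arise under the standing assumption $m \notin \mathcal{M}_\alpha$, since $q_0 = 1 \leq m < q_1 = a_1$ would force $m \in \mathcal{M}_\alpha$.

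The main obstacle I anticipate is the bookkeeping across the two parities of $k$---in particular, which of $\alpha$ or $1 - \alpha$ appears on the right-hand side of the required inequality---but this is a short case split that is essentially readable off \autoref{fig:intervals}.
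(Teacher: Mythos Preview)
Your argument is correct, and it takes a genuinely different route from the paper's. The paper extends each $\beta_n\dotsm\beta_{E-1}$ by explicitly tracking the blocks adjacent to the two occurrences of $s$: it writes the $t-(n+1)$ non-singular blocks $\gamma_0,\ldots,\gamma_{t-(n+2)}$ preceding the first $s$ and the blocks $\delta_i$ following the last $s$, invokes \autoref{lem:first_last_letter} (through \autoref{rem:weak_consequence}) to know that each $\gamma_i$ begins and each $\delta_i$ ends with the letter $x$ of $s$, and then computes the head and tail Parikh vectors by hand, seeing that deleting one boundary letter $x$ exactly cancels the extra $x$ contributed by $s$. Your approach bypasses this block-by-block bookkeeping entirely: you observe that it suffices to show that \emph{every} factor of slope $\alpha$ of length $tq_k-1$ has Parikh vector contained in $t\Parikh{\beta_0}$, reduce this to the floor identity $\lfloor(tq_k-1)\alpha\rfloor = tp_k-1$, and verify the latter by the continued-fraction estimate $t\norm{q_k\alpha}\leq a_{k+1}\norm{q_k\alpha}<1/q_k$ combined with $1/q_k<\alpha$ (even $k\geq 2$) or $1/q_k\leq 1/2<1-\alpha$ (odd $k\geq 1$). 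This is cleaner and, notably, makes no use of \autoref{lem:first_last_letter} at all; the length formulas for the left and right overhangs that the paper uses immediately after the claim follow just as well from your setup, since they depend only on the positions of the head and tail relative to the two occurrences of $s$ and not on the explicit $\gamma$'s and $\delta$'s. The paper's approach, on the other hand, makes the structure of the extension completely explicit, which makes the subsequent overhang definitions more transparent.
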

  \begin{proof}
    Consider the word $\beta_n \dotsm \beta_{E-1}$ with $0 \leq n \leq t - 1$. Suppose that the singular factor $s$
    begins with letter $x$, and let $y$ to be the letter such that $y \neq x$. The word $s$ also ends with
    the letter $x$ by \autoref{lem:first_last_letter}. Let $i$ equal the number of letters $x$ in the nonsingular
    factors of length $q_k$ and set $j = q_k - i$. The word $\beta_0 \dotsm \beta_{E-1}$ is preceded by the word $s$,
    and $\abs{s}_x = i + 1$ and $\abs{s}_y = j - 1$. Further, the word $s \beta_0 \dotsm \beta_{E-1}$ is preceded by an
    abelian power of period $q_k$ and exponent at least $\abexp{q_k} - 1$. This power might extend beyond the starting
    position of $w$. For our purposes, it is irrelevant how $w$ is extended to the left; all that matters is that by
    recurrence the left extension of $w$ exists in $\Lang{\alpha}$. We conclude that the first $s$ of $w$ is preceded
    by an abelian power $\gamma_0 \dotsm \gamma_{t - (n + 2)}$ of period $q_k$ with
    $\abs{\gamma_0} = \ldots = \abs{\gamma_{t - (n + 2)}} = q_k$. By \autoref{rem:weak_consequence}, the words
    $\gamma_0$, $\ldots$, $\gamma_{t - (n + 2)}$ all begin with the letter $x$. Thus
    \begin{equation*}
      \abs{x^{-1}\gamma_0 \dotsm \gamma_{t - (n + 2)} s \beta_0 \dotsm \beta_{n-1}}_x = ((t - (n + 1))i - 1) + (i + 1) + ni = ti
    \end{equation*}
    and
    \begin{equation*}
      \abs{x^{-1}\gamma_0 \dotsm \gamma_{t - (n + 2)} s \beta_0 \dotsm \beta_{n-1}}_y = tj - 1.
    \end{equation*}
    Therefore the Parikh vector of the factor $x^{-1}\gamma_0 \dotsm \gamma_{t - (n + 2)} s \beta_0 \dotsm \beta_{n-1}$
    is contained in the Parikh vector of $\beta_n \dotsm \beta_{n + t - 1}$. Thus the abelian repetition 
    $\beta_n \dotsm \beta_{E-1}$ can be extended to have a head of maximal length $tq_k - 1$.

    Let $r$ be the largest integer such that $rt \leq E - n$. Then the abelian repetition $\beta_n \dotsm \beta_{E-1}$
    has tail $\beta_{n + rt} \dotsm \beta_{E-1}$. This tail is followed by the word $s$, which is in turn followed by
    an abelian power $\delta_0 \dotsm \delta_{(r+1)t - E + n - 2}$ of period $q_k$ with
    $\abs{\delta_0} = \ldots = \abs{\delta_{(r+1)t - E + n - 2}} = q_k$. By \autoref{rem:weak_consequence}, the words
    $\delta_0$, $\ldots$, $\delta_{(r+1)t - E + n - 2}$ end with the letter $x$. Similarly to above, if we remove the
    final letter of the word $\beta_{n + rt} \dotsm \beta_{E-1} s \delta_0 \dotsm \delta_{(r+1)t - E + n - 2}$ it will
    cancel the additional letter $x$ in $s$, and we see that the tail of $\beta_n \dotsm \beta_{E-1}$ can be extended
    to maximum length $tq_k - 1$.
  \end{proof}

  Let then $n$ be an integer such that $0 \leq n \leq t - 1$, and let $\lambda_n$ to be the abelian repetition
  $\beta_n \dotsm \beta_{E-1}$ extended to have head and tail of length $tq_k - 1$. We define the \emph{left overhang}
  of $\lambda_n$, denoted by $L(\lambda_n)$, to be its prefix that comes before the occurrence of $s$ that coincides
  with the first $s$ in $w$ if it exists; otherwise we set $L(\lambda_n) = \varepsilon$. Using the notation of the
  proof of \autoref{cl:extension}, we thus set $L(\lambda_n) = x^{-1} \gamma_0 \dotsm \gamma_{t - (n + 2)}$ when
  $n < t - 1$. Similarly, we define the word $R(\lambda_n)$, the \emph{right overhang} of $\lambda_n$, as the suffix of
  $\lambda_n$ that comes after the final $s$ in $w$ if it exists, that is, $R(\lambda_n) = \delta_0 \dotsm
  \delta_{(r+1)t - E + n - 2}x^{-1}$, $n \neq E - (r + 1)t + 2$, in the notation of the proof of
  \autoref{cl:extension}. Here $r$ is the largest integer such that $rt \leq E - n$. In particular, we have
  \begin{equation*}
    \abs{L(\lambda_n)} = (t - (n + 1))q_k - 1
  \end{equation*}
  and
  \begin{equation*}
    \abs{R(\lambda_n)} = ((r + 1)t - E + n - 1)q_k - 1
  \end{equation*}
  when $L(\lambda_n)$ and $R(\lambda_n)$ are nonempty.

  Since $tq_k < m$ and $m$ is the minimum abelian period of $w$, none of the words $\lambda_0$, $\ldots$,
  $\lambda_{t-1}$ can completely cover $w$. Let $i$ be the largest integer such that $iq_k \leq \abs{u_0}$. If
  $i \geq t - 1$, then $\abs{w} \geq (t - 1)q_k + q_k q_{k+1} + q_k$, and it is elementary to verify that
  \eqref{eq:main_lower_bound} holds. We hence assume that $i < t - 1$. Now the left overhangs of the words $\lambda_0$,
  $\ldots$, $\lambda_{t - (i + 1) - 1}$ cover $u_0$, so it must be that $w$ extends beyond their right overhangs. At
  least one of these right overhangs must have length at least $(t - i - 2)q_k - 1$. Namely if $\abs{R(\lambda_0)}$ is
  as small as possible, then $R(\lambda_0) = \varepsilon$ and
  $\abs{R(\lambda_{t - (i + 1) - 1})} = (t - i - 2)q_k - 1$. It follows that
  \begin{align*}
    \abs{w} &\geq \abs{u_0} + q_k q_{k+1} + q_k + (t - i - 2)q_k \\
            &\geq iq_k + (q_{k+1} + 1)q_k + (t - i - 2)q_k \\
            &= (q_{k+1} + t - 1)q_k,
  \end{align*}
  and it is again straightforward to verify that \eqref{eq:main_lower_bound} holds.

  Suppose finally that $w$ contains exactly $q_k$ occurrences of $s$. Factor $w$ again according to the occurrences of
  $s$: $w = u_0 s u_1 \dotsm s u_{q_k}$. Similarly as before, the factor $(u_0 s)^{-1} w$ is now a prefix of an abelian
  power $\beta_0 \dotsm \beta_{E-1}$. The arguments of \autoref{cl:extension} can now be repeated to see that the
  abelian power $\beta_0 \dotsm \beta_{E-1}$ can be extended to an abelian repetition with period $tq_k$ and head of
  length $tq_k - 1$. Since $m$ is the minimum abelian period of $w$, this head cannot cover $u_0$ completely, so
  $\abs{u_0} \geq (t - 1)q_k$. Consider next the reversal $\mirror{w}$ of $w$. Since the word $s$ is a palindrome by
  \autoref{lem:singular_properties} (ii), we have $\mirror{w} = \mirror{u}_{q_k} s \dotsm \mirror{u}_1 s \mirror{u}_0$.
  Since the language $\Lang{\alpha}$ is closed under reversal, we have $\mirror{w} \in \Lang{\alpha}$. The minimum
  abelian period is invariant under reversal so, by repeating the preceding arguments, we see that
  $\abs{u_{q_k}} \geq (t - 1)q_k$. A short computation shows that \eqref{eq:main_lower_bound} holds also in this final
  case. This ends the proof of \autoref{prp:main_inequality}.

  \section{Proofs of the Main Results}\label{sec:main_proofs}
  In this section, we prove the main results, Theorems \ref{thm:main} and \ref{thm:fibonacci_characterization}.
  Throughout this section, we continue to use the notation of \autoref{sec:main_inequality}. We consider an abelian
  period $m$ of a factor $w$ of slope $\alpha$, and we assume that $m \notin \qkl{\alpha} \cup \mathcal{M}_\alpha$. We
  let $k$ to be the largest integer such that $q_k < m$ and $t$ to be the largest integer such that $tq_k < m$ with
  $1 \leq t \leq a_{k+1}$. Recall that $k \geq 1$. The assumption $m \notin \qkl{\alpha} \cup \mathcal{M}_\alpha$
  implies that $\norm{m\alpha} > \norm{q_{k-1}\alpha} + \norm{q_k\alpha}$. This is most easily seen from
  \autoref{fig:intervals}. Consider first the lower part of the figure. Point with distance at most 
  $\norm{q_{k-1}\alpha} + \norm{q_k\alpha}$ is either in $\mathcal{M}_\alpha$ or equals $\{-q_{k,a_k-1}\alpha\}$. The
  latter option is ruled out because $m > q_k > q_{k,a_k-1}$. In the upper part, all points with distance at most
  $\norm{q_{k-1}\alpha}$ are in $\qkl{\alpha}$. Since $q_k < m < q_{k+1}$, the best approximation property shows that
  the distance from $\{-m\alpha\}$ to $\{-q_{k-1}\alpha\}$ is greater than $\norm{q_k\alpha}$. The claim follows.

  The next lemma is used repeatedly in the following proofs. It will be used to show \autoref{thm:main} to be true in
  the fairly typical case $\abexp{m} < q_k - 1$; the remaining case $\abexp{m} = q_k - 1$ is the difficult one.

  \begin{lemma}\label{lem:improved_ub}
    Consider a factor $w$ of slope $\alpha$ with abelian period $m$ and exponent $E$. Let $k$ be the largest integer
    such that $q_k \leq m$. If $m \notin \mathcal{M}_\alpha$ and $E < q_k - 1$, then $m$ is not the minimum abelian
    period of $w$.
  \end{lemma}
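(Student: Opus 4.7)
The plan is to contradict Proposition \ref{prp:main_inequality} by showing that if $w$ has abelian period $m$ with exponent $E < q_k - 1$, then $w$ is simply too short to satisfy the length lower bound that Proposition \ref{prp:main_inequality} would impose on $w$ were $m$ its minimum abelian period.

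Suppose for contradiction that $m$ is the minimum abelian period of $w$. Since $m \notin \mathcal{M}_\alpha$, Proposition \ref{prp:main_inequality} applies and gives
\begin{equation*}
  (q_{k+1} + 2t - 1)q_k - q_{k+1} \leq |w|.
\end{equation*}
On the other hand, an abelian decomposition of $w$ with period $m$ and $E$ middle factors (plus a head and a tail of length at most $m - 1$ each) forces $|w| \leq (E + 2)m - 2$. Using the hypothesis $E \leq q_k - 2$, the right side is at most $q_k m - 2$, so it remains to show that this upper bound is strictly less than the left side.

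To bound $m$ sharply I would split into two cases according to whether $t < a_{k+1}$ or $t = a_{k+1}$. In Case 1 ($t < a_{k+1}$), the maximality of $t$ together with the exclusion $m \neq (t + 1)q_k \in \mathcal{M}_\alpha$ gives $m \leq (t+1)q_k - 1$, while $q_{k+1} \geq (t+1)q_k + q_{k-1}$. In Case 2 ($t = a_{k+1}$), the bound $m < q_{k+1}$ yields the tighter $m \leq tq_k + q_{k-1} - 1$, and $q_{k+1} = tq_k + q_{k-1}$ exactly. Substituting into $(q_{k+1} + 2t - 1)q_k - q_{k+1} \leq q_k m - 2$ and simplifying, the difference of the two sides reduces to an expression bounded below by $q_{k-1}(q_k - 1) + 2$ in Case 1 and by $tq_k - q_{k-1} + 2$ in Case 2. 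Both are strictly positive for $k \geq 1$ because $q_k > q_{k-1}$, yielding the desired contradiction. The case $k = 0$ is vacuous since $E < q_0 - 1 = 0$ is impossible.

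The main obstacle is ensuring strictness of the polynomial inequality. The uniform bound $m \leq (t+1)q_k - 1$ alone is not tight enough to handle Case 2 when $a_k$ is large compared to $a_{k+1}$; this is exactly why the case split is necessary, so that one can exploit either the sharper relation $q_{k+1} \geq (t+1)q_k + q_{k-1}$ in Case 1 or the sharper bound $m \leq q_{k+1} - 1$ in Case 2.
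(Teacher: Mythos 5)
Your proposal is correct and takes essentially the same route as the paper: both argue by contradiction, combining Proposition~\ref{prp:main_inequality} with $\abs{w} \le (E+2)m - 2 \le q_k m - 2$ and then splitting into the cases $t < a_{k+1}$ (where $m \le (t+1)q_k - 1$) and $t = a_{k+1}$ (where $m \le q_{k+1} - 1$). The only cosmetic difference is that the paper finishes the first case by bounding the coefficient of $q_k$ in the rearranged inequality (treating $k \ge 2$ and $k = 1$ separately), whereas you substitute $q_{k+1} \ge (t+1)q_k + q_{k-1}$ directly; both computations check out.
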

  \begin{proof}
    If $k = 0$, then there is nothing to prove as $E$ is always positive. Assume that $k \geq 1$,
    $m \notin \mathcal{M}_\alpha$, and $E < q_k - 1$. Clearly $\abs{w} \leq (E + 2)m - 2$. Suppose for a contradiction
    that $m$ is the minimum abelian period of $w$. Then \autoref{prp:main_inequality} gives
    \begin{equation}\label{eq:improved_inequality}
      (q_{k+1} + 2t - 1)q_k - q_{k+1} \leq (E + 2)m - 2.
    \end{equation}
    Let us assume first that $t < a_{k+1}$. Using the upper bound $m < (t + 1)q_k$, we obtain from
    \eqref{eq:improved_inequality} that
    \begin{equation*}
      (q_{k+1} + 2t - 1)q_k - q_{k+1} \leq q_k((t + 1)q_k - 1) - 2.
    \end{equation*}
    Using the equality $q_{k+1} = a_{k+1}q_k + q_{k-1}$, we obtain by rearrangement the equivalent inequality
    \begin{equation}\label{eq:foo2}
      ((a_{k+1} - (t + 1))q_k + q_{k-1} + 2t - a_{k+1})q_k \leq q_{k-1} - 2.
    \end{equation}
    The right side of \eqref{eq:foo2} is at less than $q_k$, so it must be that the coefficient of $q_k$ on the left is
    at most $0$. If $k \geq 2$, then $q_k, q_{k-1} \geq q_1 \geq 2$, and we obtain
    \begin{align*}
      (a_{k+1} - (t + 1))q_k + q_{k-1} + 2t - a_{k+1} &\geq 2(a_{k+1} - (t + 1)) + q_{k-1} + 2t - a_{k+1} \\
                                                      &= a_{k+1} + q_{k-1} - 2 \\
                                                      &\geq a_{k+1},
    \end{align*}
    which is impossible as $a_{k+1} \geq 1$. If $k = 1$, then the right side of \eqref{eq:foo2} is negative. Analogous
    calculation now gives $(a_2 - (t + 1))q_1 + q_0 + 2t - a_2 \geq a_2 + q_0 - 2 = a_2 - 1 \geq 0$, which is again
    contradictory.
    
    Suppose then that $t = a_{k+1}$. Now the upper bound $m < q_{k+1}$ and
    \eqref{eq:improved_inequality} give
    \begin{equation*}
      (q_{k+1} + 2a_{k+1} - 1)q_k - q_{k+1} \leq q_k(q_{k+1} - 1) - 2.
    \end{equation*}
    Rearranging like above then gives
    \begin{equation*}
      2a_{k+1}q_k \leq q_{k+1} = a_{k+1}q_k + q_{k-1},
    \end{equation*}
    which is again impossible since $q_k > q_{k-1}$.
  \end{proof}

  Next we do not make restrictions on $\abexp{m}$ and prove \autoref{thm:main} in almost every case.

  \begin{lemma}\label{lem:almost_all}
    If $k \geq 4$, then $m$ is not the minimum abelian period of $w$.
  \end{lemma}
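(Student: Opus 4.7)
The plan is to reduce Lemma \ref{lem:almost_all} to Lemma \ref{lem:improved_ub} whenever possible. From the paragraph preceding Lemma \ref{lem:improved_ub} we already have $\norm{m\alpha} > \norm{q_{k-1}\alpha} + \norm{q_k\alpha}$, and Lemma \ref{lem:relation_exponent_convergent} then yields $\abexp{m} < q_k$. Consequently, if $\abexp{m} < q_k - 1$, Lemma \ref{lem:improved_ub} applied with $E = \abexp{m}$ immediately shows that $m$ is not the minimum abelian period of $w$. The only remaining case is thus $\abexp{m} = q_k - 1$ exactly, and the remainder of the proof is dedicated to ruling this out under the standing hypothesis $k \geq 4$.

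To handle $\abexp{m} = q_k - 1$ I would invoke the contrapositive of Lemma \ref{lem:relation_exponent_convergent_2} (which is available since $k \geq 4 \geq 2$): the assumption $\abexp{m} \geq q_k - 1$ forces the strict bound $\norm{m\alpha} < \norm{q_{k-1}\alpha} + (a_{k+1}+1)\norm{q_k\alpha}$. Combined with Lemma \ref{lem:larger_lower_bound}, this rules out every $m$ in the range $a_{k+1}q_k < m < q_{k+1}$ except for $m = q_{k+1,a_{k+1}-1}$ (which is excluded because $m \notin \qkl{\alpha}$) and, when $a_k = 1$, the exceptional value $m_0 = (a_{k+1}-1)q_k + 2q_{k-1}$. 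Hence only two scenarios survive: (a) $t < a_{k+1}$, so $m \leq a_{k+1}q_k$; and (b) $a_k = 1$ and $m = m_0$.

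For scenario (a) I would substitute $\abexp{m} = q_k - 1$ and the upper bound $m \leq (t+1)q_k - 1$ into the main inequality \eqref{eq:main_inequality} of Proposition \ref{prp:main_inequality}. After expansion using $q_{k+1} = a_{k+1}q_k + q_{k-1}$ and rearrangement, the inequality reduces to a quadratic in $q_k$ of the shape
\begin{equation*}
  (a_{k+1} - t - 1)q_k^2 + \bigl(q_{k-1} - (a_{k+1} - t + 1)\bigr)q_k + 3 - q_{k-1} \leq 0,
\end{equation*}
and the task is to show that for $k \geq 4$ the left-hand side is strictly positive in every admissible partial-quotient regime. The subcase $t = a_{k+1} - 1$ (so the leading coefficient vanishes) reduces to $(q_{k-1} - 2)q_k + 3 - q_{k-1} > 0$, which is immediate from $q_{k-1} \geq q_3 \geq 5$ and $q_k > q_{k-1}$. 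For $t < a_{k+1} - 1$ the leading term dominates and the Fibonacci-type growth $q_k, q_{k-1}, q_{k-2} \geq q_3, q_2, q_1$ enforced by $k \geq 4$ provides enough room to absorb the $a_{k+1}$-dependent middle term. For scenario (b) I would compute $\norm{m_0\alpha}$ explicitly using $a_k = 1$ (whence $q_k = q_{k-1} + q_{k-2}$) and tracking signs modulo $1$ according to the parity of $k$; the resulting closed form, when compared with $\norm{q_{k-1}\alpha} + (a_{k+1}+1)\norm{q_k\alpha}$, will again be seen to contradict $\abexp{m} = q_k - 1$ once $k \geq 4$.

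The main obstacle is scenario (a): because $a_{k+1}$ is unbounded, the inequality must be rearranged so that every term involving $a_{k+1}$ lies on the correct side, and the sign of the coefficient of $q_k$ changes across regimes. One must therefore split on the sign of $q_{k-1} - (a_{k+1} - t + 1)$ and on whether $t \in \{a_{k+1}-1\}$ or $t < a_{k+1}-1$, verifying each subcase by invoking the growth bounds guaranteed by $k \geq 4$ together with the convention $a_1 \geq 2$. Scenario (b) is technically cleaner but still requires care with parity in the computation of $\norm{m_0\alpha}$.
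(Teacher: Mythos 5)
Your reduction to \autoref{lem:improved_ub} and your scenario (a) follow the paper's own route (the paper's case $t<a_{k+1}$ is exactly your quadratic in $q_k$, and your subcase analysis is sound), but scenario (b) contains a genuine gap. For $m_0=(a_{k+1}-1)q_k+2q_{k-1}$ with $a_k=1$ one computes $\norm{m_0\alpha}=\norm{q_{k-1}\alpha}+\norm{q_k\alpha}+\norm{q_{k+1}\alpha}$, which is \emph{strictly smaller} than the threshold $\norm{q_{k-1}\alpha}+(a_{k+1}+1)\norm{q_k\alpha}$ of \autoref{lem:relation_exponent_convergent_2} (since $\norm{q_{k+1}\alpha}<\norm{q_k\alpha}\leq a_{k+1}\norm{q_k\alpha}$). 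So the comparison you propose goes the wrong way, and indeed $\abexp{m_0}=q_k-1$ genuinely occurs for $k\geq 4$: take $\alpha=[0;2,1,1,1,10,\overline{1}]$, so $q_1,\ldots,q_5=2,3,5,8,85$ and, for $k=4$, $m_0=9\cdot 8+2\cdot 5=82\notin\qkl{\alpha}\cup\mathcal{M}_\alpha$; numerically $\norm{82\alpha}\approx 0.136$ and $\abexp{82}=7=q_4-1$. No norm estimate can therefore rule out $\abexp{m_0}=q_k-1$, and your plan for scenario (b) cannot be completed as stated.

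The correct handling of this exceptional value (and the one place where $k\geq 4$ is actually needed) is to go back to the main inequality \eqref{eq:main_inequality} rather than to the exponent bound: with $t=a_{k+1}$, $\abexp{m}\leq q_k-1$ and $m\leq(a_{k+1}-1)q_k+2q_{k-1}$, inequality \eqref{eq:main_inequality} reduces after substituting $q_{k+1}=a_{k+1}q_k+q_{k-1}$ to $(q_k-q_{k-1})q_k\leq 3q_{k-1}-2$, which is impossible because $q_k-q_{k-1}\geq q_{k-2}\geq q_2\geq 3$ when $k\geq 4$. (In the example above this reads $24\leq 13$.) With that replacement your argument closes; everything else in your proposal matches the paper's proof.
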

  \begin{proof}
    Suppose for a contradiction that $m$ is the minimum abelian period of $w$. Since
    $\norm{m\alpha} > \norm{q_{k-1}\alpha} + \norm{q_k\alpha}$, we have $\abexp{m} < q_k$ by
    \autoref{lem:relation_exponent_convergent}. Thus we obtain from \autoref{prp:main_inequality} that
    \begin{equation*}
      (q_{k+1} + 2t - 1)q_k - q_{k+1} \leq (q_k + 1)m - 2.
    \end{equation*}
    Consider first the case $t < a_{k+1}$. We obtain from the previous inequality that
    \begin{equation*}
      (q_{k+1} + 2t - 1)q_k - q_{k+1} \leq (q_k + 1)((t + 1)q_k - 1) - 2.
    \end{equation*}
    By rearranging and writing $q_{k+1} = a_{k+1}q_k + q_{k-1}$, we obtain
    \begin{equation}\label{eq:foo1}
      ((a_{k+1} - (t + 1))q_k + q_{k-1} + t - 1)q_k \leq q_{k+1} - 3.
    \end{equation}
    In order for this inequality to hold, it is necessary that the coefficient of $q_k$ on the left side is at most
    $a_{k+1}$. Since $k > 1$, we have $q_k \geq q_2 \geq 3$ and $q_{k-1} \geq q_1 \geq 2$. It follows that
    \begin{equation*}
      a_{k+1} \geq 3(a_{k+1} - t - 1) + t + 1,
    \end{equation*}
    which yields $a_{k+1} \leq t + 1$. This is true only if $t = a_{k+1} - 1$. It is also clear from \eqref{eq:foo1}
    that we must have $q_{k-1} \leq 2$ if $t = a_{k+1} - 1$. This means that $k = 2$ and $a_1 = 2$. However, now the
    left side of \eqref{eq:foo1} is $a_3 q_2$ and the right side is $a_3 q_2 + q_1 - 3$. This is impossible as
    $q_1 = 2$.

    Suppose then that $t = a_{k+1}$. Let us first assume that $m \leq (a_{k+1}-1)q_k + 2q_{k-1}$ to obtain from
    \eqref{eq:main_inequality} that
    \begin{equation*}
      (q_{k+1} + 2a_{k+1} - 1)q_k - q_{k+1} \leq (q_k + 1)((a_{k+1}-1)q_k + 2q_{k-1}) - 2.
    \end{equation*}
    This inequality is equivalent to
    \begin{equation*}
      (q_k - q_{k-1} + a_{k+1})q_k \leq q_{k+1} + 2q_{k-1} - 2.
    \end{equation*}
    Substituting $q_{k+1} = a_{k+1}q_k + q_{k-1}$ gives the equivalent inequality
    \begin{equation}\label{eq:offending}
      (q_k - q_{k-1})q_k \leq 3q_{k-1} - 2.
    \end{equation}
    Since $k \geq 4$, we have $q_k - q_{k-1} \geq q_4 - q_3 \geq q_2 \geq 3$. This together with \eqref{eq:offending}
    gives $3q_k < 3q_{k-1}$, which is obviously false. We may thus assume that $m > (a_{k+1}-1)q_k + 2q_{k-1}$. Then
    \autoref{lem:larger_lower_bound} implies that
    $\norm{m\alpha} \geq \norm{q_{k-1}\alpha} + (a_{k+1} + 1)\norm{q_k\alpha}$. This means that we can improve the bound
    $\abexp{m} < q_k$ to $\abexp{m} < q_k - 1$ by \autoref{lem:relation_exponent_convergent_2}. The desired
    contradiction follows now from \autoref{lem:improved_ub}.
  \end{proof}

  By \autoref{lem:almost_all}, we are left with the cases $k = 1$, $k = 2$, and $k = 3$. However in the last two cases,
  the arguments of the proof of \autoref{lem:almost_all} apply under suitable conditions. Let us analyze the situation.
  The only place where we really need the assumption $k \geq 4$ is when a contradiction is derived from
  \eqref{eq:offending}. Here we needed $k \geq 4$ to establish that $q_k - q_{k-1} \geq 3$. Let us see when
  $q_k - q_{k-1} \leq 2$. Now $q_k - q_{k-1} = (a_k - 1)q_{k-1} + q_{k-2}$, so if $q_k - q_{k-1} \leq 2$, then it must
  be that $a_k = 1$ whenever $q_{k-1} \geq 2$. Moreover, by \autoref{lem:relation_exponent_convergent_2}, we do not
  need to know that $q_k - q_{k-1} \geq 3$ in order to improve the bound to $\abexp{m} < q_k - 1$ if we know that
  $m \neq (a_{k+1} - 1)q_k + 2q_{k-1}$. We are thus left to prove \autoref{thm:main} in the following cases:
  \begin{itemize}
    \item $k = 1$;
    \item $m = (a_{k+1} - 1)q_k + 2q_{k-1}$, $a_k = 1$ when $k = 2$ or $k = 3$.
  \end{itemize}

  It must be emphasized that the proof of \autoref{lem:almost_all} does not work in these final cases: the inequality
  \eqref{eq:main_inequality} is not enough. Indeed, if $\alpha = [0; 2, \overline{1}]$ and
  $m = (a_3 - 1)q_2 + 2q_1 = 4$, then $\abexp{m} = 2$ and the left side of \eqref{eq:main_inequality} is $13$, but the
  right side is $14$. A more interesting example is perhaps $\alpha = [0; 2, 3, 1, 6, \overline{1}]$. When $k = 3$, we
  have $m = (a_4 - 1)q_3 + 2q_2 = (6 - 1) \times 9 + 2 \times 7 = 59$ and $\abexp{m} = 8 = q_3 - 1$. The left side of
  \eqref{eq:main_inequality} is $(61 + 2 \times 6 - 1) \times 9 - 61 = 587$, and the right side is $588$.

  The following general lemma handles the above cases $k = 2, 3$ and a subcase of $k = 1$.

  \begin{lemma}\label{lem:cases_2_3}
    If either
    \begin{enumerate}[(i)]
      \item $k = 1$, $a_1 \geq 3$, $a_2 > 1$, and $m = (a_2 - 1)q_1 + 2q_0$ or
      \item $k \geq 2$, $a_k = 1$, and $m = (a_{k+1} - 1)q_k + 2q_{k-1}$,
    \end{enumerate}
    then $m$ is not the minimum abelian period of $w$.
  \end{lemma}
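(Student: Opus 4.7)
The plan is to argue by contradiction. Suppose $m$ is the minimum abelian period of $w$; then \eqref{eq:main_inequality} from \autoref{prp:main_inequality} holds. In case (i), the hypothesis $a_1 \geq 3$ gives $a_2 q_1 > m$, so $t = a_2 - 1$; in case (ii), $a_k = 1$ gives $q_k < 2q_{k-1}$, hence $m > a_{k+1}q_k$ and $t = a_{k+1}$. Both set-ups are consistent with the standing section-wide assumption $m \notin \qkl{\alpha} \cup \mathcal{M}_\alpha$ (using $a_2 > 1$ in (i) and $q_k \neq 2q_{k-1}$ in (ii)).

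First I would compute $\norm{m\alpha}$ exactly. Writing $m = q_{k+1,a_{k+1}-1} + q_{k-1}$ (interpreting the semiconvergent formula when $a_{k+1} = 1$) and observing that the points $\{-q_{k+1,a_{k+1}-1}\alpha\}$ and $\{-q_{k-1}\alpha\}$ lie on the same side of $0$ on $\T$, I obtain
\begin{equation*}
\norm{m\alpha} = \norm{q_{k+1,a_{k+1}-1}\alpha} + \norm{q_{k-1}\alpha} = \norm{q_{k-1}\alpha} + \norm{q_k\alpha} + \norm{q_{k+1}\alpha},
\end{equation*}
using $\norm{q_{k+1,a_{k+1}-1}\alpha} = \norm{q_k\alpha} + \norm{q_{k+1}\alpha}$ together with the identity $\norm{q_{k-1}\alpha} = a_{k+1}\norm{q_k\alpha} + \norm{q_{k+1}\alpha}$. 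This is precisely the exceptional value that blocks the sharper estimate of \autoref{lem:larger_lower_bound}, so only \autoref{lem:relation_exponent_convergent} applies and we get $\abexp{m} \leq q_k - 1$; the numerical example given right before the lemma shows that \eqref{eq:main_inequality} can then be nearly tight, so that inequality alone is insufficient.

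The main idea of my plan is therefore to bypass \eqref{eq:main_inequality} and instead produce a strictly smaller abelian period $p \in \qkl{\alpha}$ of $w$, contradicting the minimality of $m$. The natural candidate is $p = q_{k+1,a_{k+1}-1} = m - q_{k-1}$, which always lies in $\qkl{\alpha}$. By \autoref{prp:exp_formula}, $\abexp{p} = \Floor{1/(\norm{q_k\alpha}+\norm{q_{k+1}\alpha})}$ is much larger than $\abexp{m}$, so length is no obstacle. By \autoref{prp:ab_eq_geometric_2}, $w$'s having abelian period $m$ means that a sequence of at most $\abexp{m}+2 \leq q_k+1$ points of the form $\{\rho + jm\alpha\}$ lies in one of $I(0,-m\alpha)$ or $I(-m\alpha,1)$. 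Since $I(0,-p\alpha)$ and $I(0,-m\alpha)$ differ only by the short piece of length $\norm{q_{k-1}\alpha}$, I would show that the re-indexed trajectory (jumps of $p\alpha$ rather than $m\alpha$) still stays on the correct side; the two hypotheses $a_k = 1$ (in (ii)) and $a_1 \geq 3$ (in (i)) give the quantitative control on $\norm{q_{k-1}\alpha}$ needed to close the argument.

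The main obstacle is making this geometric re-indexing rigorous: shifting from jumps of $m\alpha$ to jumps of $p\alpha$ displaces each witness by a multiple of $q_{k-1}\alpha$, and one must show none of the $O(q_k)$ resulting witnesses falls into the mismatch strip of length $\norm{q_{k-1}\alpha}$. In the tightest boundary sub-cases (e.g.\ $a_1 = 3, a_2 = 2$ in (i), or $k = 2$ with small $a_1$ in (ii)), if this direct argument falters, the fallback plan is to revisit \autoref{prp:main_inequality} itself and sharpen \eqref{eq:main_lower_bound} by one additional block using the singular-factor covering structure of \autoref{lem:covering} and \autoref{lem:first_last_letter} together with the explicit form of $\norm{m\alpha}$ from Step one, gaining the $q_{k-1}$ additional letters needed to break \eqref{eq:main_inequality}.
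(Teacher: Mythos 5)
Your set-up is sound and matches the paper's: the identification of $t$ in each case, the exact evaluation $\norm{m\alpha} = \norm{q_{k-1}\alpha} + \norm{q_k\alpha} + \norm{q_{k+1}\alpha}$, and the diagnosis that \autoref{lem:relation_exponent_convergent} only gives $\abexp{m} \leq q_k - 1$ so that \eqref{eq:main_inequality} is one unit short of a contradiction are all correct. The gap is in your main step. You propose to contradict minimality by showing that $p = m - q_{k-1} = q_{k+1,a_{k+1}-1}$ is itself an abelian period of $w$, via ``re-indexing'' the witness points. But passing from blocks of length $m$ to blocks of length $p$ displaces the $j$-th witness from $\{\rho + jm\alpha\}$ to $\{\rho + jm\alpha - jq_{k-1}\alpha\}$, and the displacements $\{-jq_{k-1}\alpha\}$ for $j$ up to roughly $q_k$ are spread over essentially the whole circle; they are not confined to a ``mismatch strip'' of width $\norm{q_{k-1}\alpha}$ (only the $j=1$ displacement is). So the period-$p$ condition of \autoref{prp:ab_eq_geometric_2} is essentially independent of the period-$m$ condition, and the sketched argument does not go through. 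Worse, the intermediate claim itself is untenable in some sub-cases: e.g.\ in case (ii) with $a_{k+1} = 1$ and $k \geq 3$ one has $p = q_{k-1}$, and the lower bound \eqref{eq:main_lower_bound} already forces $\abs{w} \geq q_k^2 + q_{k-1}q_k - q_{k-1}$, which exceeds the maximal length $(\abexp{q_{k-1}}+2)q_{k-1}-2 < (q_k + q_{k-1} + 2)q_{k-1} - 2$ of a factor with abelian period $q_{k-1}$; so no argument can exhibit $p$ as an abelian period of $w$ there. Your fallback (gaining $q_{k-1}$ extra letters in \eqref{eq:main_lower_bound}) is numerically plausible on the two examples but is asserted, not proved.

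The paper closes the gap on the other side of the inequality. Keeping the period-$m$ decomposition $w = \beta_0\beta_1\dotsm\beta_E\beta_{E+1}$ with $E = q_k - 1$ and the factorization $w = u_0 s u_1 \dotsm s u_\lambda$ by occurrences of the singular factor $s$ of length $q_k$, a length count (using $\abs{u_0}, \abs{u_\lambda} \geq (a_{k+1}-1)q_k$ from \autoref{sec:main_inequality} and \autoref{lem:return_times_singular}) shows $\abs{\beta_0} > \abs{u_0}$. Hence the circle point $y$ at which the abelian power $\beta_1 \dotsm \beta_E$ starts satisfies $\norm{y} \geq \norm{q_{k-1}\alpha}$ (it lies outside $[s]$ and on the long side of the partition by $0$ and $\{-m\alpha\}$), and \autoref{prp:ab_eq_geometric_2} then bounds $E$ by $(1 - \norm{q_{k-1}\alpha})/\norm{m\alpha}$, which a computation with \autoref{lem:consecutive_norm} shows is strictly less than $q_k - 1$ --- the desired contradiction. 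If you want to salvage your write-up, this exponent-side refinement (rather than a smaller explicit period or a sharpened length lower bound) is the missing idea.
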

  \begin{proof}
    Let $m = (a_{k+1} - 1)q_k + 2q_{k-1}$ for $k \geq 1$, and assume for a contradiction that there exists a factor $w$
    of slope $\alpha$ with minimum abelian period $m$. Notice that we have $(a_{k+1} - 1)q_k < m < a_{k+1}q_k$ in the
    case (i), and $a_{k+1}q_k < m < q_{k+1}$ in the case (ii). Write $w = u_0 s u_1 \dotsm s u_\lambda$ according to
    the $\lambda$ occurrences of the singular factor $s$ of length $q_k$ in $w$. Recall that $\lambda \geq q_k$ by
    \autoref{cl:s_occurrences}. Factorize $w = \beta_0 \beta_1 \dotsm \beta_E \beta_{E+1}$, with
    $\abs{\beta_1} = \ldots = \abs{\beta_E} = m$, according to the minimum abelian period $m$ of $w$. We may assume
    that $E = q_k - 1$ for otherwise the claim is clear by \autoref{lem:improved_ub}. Suppose first that
    $\lambda > q_k$. Then
    \begin{equation*}
      \abs{w} - \abs{u_0} \geq q_k q_{k+1} + q_k
    \end{equation*}
    according to \autoref{lem:return_times_singular}. Assume then that $\lambda = q_k$. When the inequality
    \eqref{eq:main_lower_bound} was derived, we showed that $\abs{u_0}, \abs{u_\lambda} \geq (a_{k+1} - 1)q_k$.
    In particular, we have $\abs{u_\lambda} \geq (a_{k+1} - 1)q_k$, and thus
    \begin{equation}\label{eq:bar_1}
      \abs{w} - \abs{u_0} \geq (q_k - 1)q_{k+1} + a_{k+1}q_k.
    \end{equation}
    Thus no matter the value of $\lambda$, the inequality \eqref{eq:bar_1} holds.

    Next we want to show that $\abs{\beta_0} > \abs{u_0}$. Assume on the contrary that $\abs{\beta_0} \leq \abs{u_0}$.
    Then
    \begin{align*}
      \abs{\beta_{E+1}} &= \abs{w} - \abs{\beta_0 \beta_1 \dotsm \beta_E} \\
                        &\geq \abs{w} - \abs{u_0} - \abs{\beta_1 \dotsm \beta_E} \\
                        &\geq (q_k - 1)q_{k+1} + a_{k+1}q_k - Em \\
                        &= (q_k - 1)q_{k+1} + q_k - (q_k - 1)((a_{k+1} - 1)q_k + 2q_{k-1}) + (a_{k+1} - 1)q_k \\
                        &= (q_{k+1} + 1)q_k - q_{k+1} - (a_{k+1} - 1)q_k q_k - 2q_{k-1}q_k + (a_{k+1} - 1)q_k + 2q_{k-1} + (a_{k+1} - 1)q_k \\
                        &= (q_{k+1} + 1 - (a_{k+1} - 1)q_k - 2q_{k-1})q_k - q_{k+1} + (a_{k+1} - 1)q_k + 2q_{k-1} + (a_{k+1} - 1)q_k \\
                        &= (q_k - q_{k-1} + 1)q_k - q_k + q_{k-1} + (a_{k+1} - 1)q_k.
    \end{align*}
    Let $Q = (q_k - q_{k-1} + 1)q_k - q_k + q_{k-1} + (a_{k+1} - 1)q_k$. If (i) holds, then
    \begin{equation*}
      Q = (q_1 - 2)q_1 + q_0 + (a_2 - 1)q_1 > 2q_0 + (a_2 - 1)q_1 = m.
    \end{equation*}
    If (ii) holds, then
    \begin{align*}
      Q &= (q_{k-2} + 1)q_k - q_{k-2} + (a_{k+1} - 1)q_k \\
        &= (q_{k-2} + 1)(q_{k-1} + q_{k-2}) - q_{k-2} + (a_{k+1} - 1)q_k \\
        &= (q_{k-2} + 1)q_{k-1} + q_{k-2}^2 + (a_{k+1} - 1)q_k \\
        &> 2q_{k-1} + (a_{k+1} - 1)q_k \\
        &= m.
    \end{align*}
    Thus $\abs{\beta_{E+1}} \geq Q > m$. This is a contradiction, so we conclude that $\abs{\beta_0} > \abs{u_0}$.

    We let $J$ denote the longer of the two intervals separated by the points $0$ and $\{-m\alpha\}$. Let
    $x \in [u_0^{-1} w]$. In particular, we have $x \in [s]$. Now $[s] = I(0, -q_k\alpha)$ or $[s] = I(-q_k\alpha, 1)$.
    Let $L = \norm{q_{k-1}\alpha} + \norm{q_k\alpha} + \norm{q_{k+1}\alpha}$. The distance from $\{-m\alpha\}$ to $0$
    through the point $\{q_{k-1}\alpha\}$ equals $L$; see \autoref{fig:intervals}. Our aim is to show that
    $L < \tfrac12$. This establishes that the point $x$ is not on the same side of $0$ as $\{-m\alpha\}$ and that
    $\norm{m\alpha} = L$.
    
    Suppose first that $k \geq 2$. We have $\norm{q_{t+1}\alpha} < \tfrac12 \norm{q_t\alpha}$ for all $t \geq 0$, so
    \begin{equation*}
      L < \frac12(\norm{q_{k-2}\alpha} + \norm{q_{k-1}\alpha} + \norm{q_k\alpha}) \leq \frac12(\norm{q_{k-2}\alpha} + \norm{q_{k-2}\alpha}) = \norm{q_{k-2}\alpha} \leq \alpha < \frac12.
    \end{equation*}
    Assume then that $k = 1$. Since $a_2 > 1$, we have $m \neq 2$. The distance from $\{-m\alpha\}$ to $\{-\alpha\}$ is
    $\norm{q_1\alpha} + \norm{q_2\alpha}$, and the distance from $\{-m\alpha\}$ to $\{-2\alpha\}$ is
    $(a_2 - 1)\norm{q_1\alpha}$. Since $a_1 \geq 3$, we see that
    $1 - L = (a_2 - 1)\norm{q_1\alpha} + (a_1 - 2)\norm{\alpha} + \norm{q_1\alpha} = (a_1 - 2)\norm{\alpha} + a_2 \norm{q_1\alpha}$.
    Clearly $1 - L > L$, so $L < \tfrac12$.

    Since the point $x$ is not on the same side of $0$ as $\{-m\alpha\}$, it follows that $x \in J$. Set
    $y = \{x + (\abs{\beta_0} - \abs{u_0})\alpha\}$. Then $x \neq y$. The abelian power $\beta_1 \dotsm \beta_E$
    beginning at position $\abs{\beta_0}$ of $w$ is not degenerate: $E = q_k - 1 \geq 2$ when $k = 1$ and
    $E = q_k - 1 \geq q_2 - 1 \geq 2$ when $k \geq 2$. Therefore, by \autoref{prp:ab_eq_geometric_2}, the point $y$
    must also lie on $J$. Let $D_1$ be the distance of $y$ to $0$ through the point $\{-m\alpha\}$ and $D_2$ be the
    distance of $y$ to $0$ to the other direction. Since $y$ lies on $J$, it follows that
    $D_1 \geq \norm{m\alpha} \geq \norm{q_{k-1}\alpha}$.
    
    Our next aim is to find a lower bound to the distance $D$ between $x$ and $y$. Notice that since
    $\abs{\beta_0} < m < q_{k+1}$, it must be that $\abs{\beta_0} - \abs{u_0} < q_{k+1}$. It thus follows from the best
    approximation property that $D > \norm{q_{k+1}\alpha}$. If $D = \norm{q_k\alpha}$, then it must be that
    $\abs{\beta_0} - \abs{u_0} = q_k$. This is however not the case because then $y$ would be on the same side of
    $0$ as the point $\{-m\alpha\}$. Therefore $D > \norm{q_k\alpha}$. Since $y \in J$, it follows that $y \notin [s]$.
    In particular, we have $D_2 \geq D$. Since $D > \norm{q_k\alpha}$, it follows from the best approximation
    property that $D \geq \norm{q_{k-1}\alpha}$. The conclusion is that
    $\norm{y} \geq \min\{D_1, D_2\} \geq \norm{q_{k-1}\alpha}$.

    By \autoref{prp:ab_eq_geometric_2}, the exponent $E$ of the abelian power $\beta_1 \dotsm \beta_E$ is the integer
    part of
    \begin{equation*}
      \frac{1 - \norm{m\alpha} - \norm{y}}{\norm{m\alpha}} + 1.
    \end{equation*}
    By the above, we obtain that
    \begin{equation*}
      \frac{1 - \norm{m\alpha} - \norm{y}}{\norm{m\alpha}} + 1 \leq \frac{1 - \norm{q_{k-1}\alpha}}{\norm{m\alpha}}.
    \end{equation*}
    Let $\mathcal{S}$ denote the right side of this inequality. We shall argue that $\mathcal{S} < q_k - 1$. This shows
    that $E < q_k - 1$ contradicting the maximality of $E$ and ending the proof.

    Recall that $\norm{m\alpha} = \norm{q_{k-1}\alpha} + \norm{q_k\alpha} + \norm{q_{k+1}\alpha}$. In particular, we
    have $\norm{m\alpha} > \norm{q_{k-1}\alpha} + \norm{q_k\alpha}$. Hence
    \begin{equation*}
      \mathcal{S} < \frac{1 - \norm{q_{k-1}\alpha}}{\norm{q_{k-1}\alpha} + \norm{q_k\alpha}}.
    \end{equation*}
    By \autoref{lem:consecutive_norm}, we have
    \begin{equation*}
      \frac{1}{\norm{q_{k-1}\alpha} + \norm{q_k\alpha}} = \frac{\alpha_{k+1}q_k + q_{k-1}}{\alpha_{k+1} + 1} \quad \text{and} \quad q_k - \frac{\alpha_{k+1}q_k + q_{k-1}}{\alpha_{k+1} + 1} = \frac{q_k - q_{k-1}}{\alpha_{k+1} + 1},
    \end{equation*}
    so
    \begin{align*}
      &q_k - \frac{1}{\alpha_{k+1} + 1}\Big( q_k - q_{k-1} + \norm{q_{k-1}\alpha}(\alpha_{k+1}q_k + q_{k-1})\Big) \\
      &= q_k - \frac{q_k - q_{k-1}}{\alpha_{k+1} + 1} - \frac{\norm{q_{k-1}\alpha}(\alpha_{k+1}q_k + q_{k-1})}{\alpha_{k+1} + 1} \\
      &= \frac{\alpha_{k+1}q_k + q_{k-1}}{\alpha_{k+1} + 1} - \frac{\norm{q_{k-1}\alpha}(\alpha_{k+1}q_k + q_{k-1})}{\alpha_{k+1} + 1} \\
      &= \frac{1 - \norm{q_{k-1}\alpha}}{\norm{q_{k-1}\alpha} + \norm{q_k\alpha}}.
    \end{align*}
    In other words, we have
    \begin{equation*}
      \mathcal{S} < q_k - \frac{1}{\alpha_{k+1} + 1}\Big( q_k - q_{k-1} + \norm{q_{k-1}\alpha}(\alpha_{k+1}q_k + q_{k-1})\Big).
    \end{equation*}
    Now $\norm{q_{k-1}} = (\alpha_k q_{k-1} + q_{k-2})^{-1}$ by \eqref{eq:norm_alpha}, so
    \begin{equation}\label{eq:bar_2}
      \mathcal{S} < q_k - \frac{1}{\alpha_{k+1} + 1}\left( q_k - q_{k-1} + \frac{\alpha_{k+1}q_k + q_{k-1}}{\alpha_k q_{k-1} + q_{k-2}} \right).
    \end{equation}
    Clearly,
    \begin{equation*}
      \frac{\alpha_{k+1}q_k + q_{k-1}}{\alpha_k q_{k-1} + q_{k-2}} = \frac{\alpha_{k+1}q_k + q_{k-1}}{q_k + \alpha_{k+1}^{-1}q_{k-1}} = \alpha_{k+1},
    \end{equation*}
    so it follows from \eqref{eq:bar_2} that $\mathcal{S} < q_k - 1$ because $q_k - q_{k-1} \geq 1$.
  \end{proof}

  By Lemmas \ref{lem:almost_all} and \ref{lem:cases_2_3}, \autoref{thm:main} is true when $k \geq 2$, and we are left
  with the case $k = 1$. Let us first make some general observations.

  Suppose that $k = 1$ and $m$ is the minimum abelian period of $w$. If $t = a_2$, then
  $a_2 q_1 < m < q_2 = a_2 q_1 + 1$, and such $m$ cannot exist. Thus we may assume that $a_2 > 1$ and $t < a_2$. Now
  $m \neq tq_1 + 1 = q_{2,t}$, so $m$ must equal one of the numbers $tq_1 + 2$, $\ldots$, $tq_1 + q_1 - 1$. In
  particular, it must be that $q_1 = a_1 \geq 3$. The computation at the beginning of the proof of
  \autoref{lem:almost_all} leading to the inequality \eqref{eq:foo1} shows that
  \begin{equation*}
    ((a_2 - (t + 1))q_1 + t)q_1 \leq q_2 - 3.
  \end{equation*}
  The coefficient $(a_2 - (t + 1))q_1 + t$ of $q_1$ on the left is at most $a_2$ and at least $3a_2 - 2t - 3$ because
  $q_1 \geq 3$. Hence $2a_2 \leq 2t + 3$. Since $t < a_2$, the only possibility is that $t = a_2 - 1$. By
  \autoref{lem:cases_2_3}, we may further assume that $m > (a_2 - 1)q_1 + 2$. Notice that this implies that
  $a_1 \geq 4$. The rest of the case $k = 1$ is handled by the next lemma.

  \begin{lemma}\label{lem:case_1}
    If $a_1 \geq 4$ and $(a_2 - 1)q_1 + 2 < m < a_2 q_1$, then $m$ is not the minimum abelian period of $w$.
  \end{lemma}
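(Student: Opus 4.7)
The strategy is to adapt the argument of \autoref{lem:cases_2_3} to the $k = 1$ setting with the generalized range of $m$. Suppose for contradiction that $m$ is the minimum abelian period of $w$. The opening observation of \autoref{sec:main_proofs} yields $\norm{m\alpha} > \norm{q_0\alpha} + \norm{q_1\alpha} = 1 - (a_1 - 1)\alpha$, so by \autoref{lem:relation_exponent_convergent} we have $\abexp{m} < q_1$; \autoref{lem:improved_ub} then lets me restrict to the case $\abexp{m} = q_1 - 1$, and write $w = \beta_0 \beta_1 \dotsm \beta_E \beta_{E+1}$ with $\abs{\beta_i} = m$ for $1 \leq i \leq E$ and $E = q_1 - 1$.

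Next, I would factorize $w = u_0 s u_1 \dotsm s u_\lambda$ according to the occurrences of the singular factor $s$ of length $q_1$; by \autoref{cl:s_occurrences}, $\lambda \geq q_1$. Combining this with the return-time information of \autoref{lem:return_times_singular}, I obtain lower bounds on $\abs{w} - \abs{u_0}$ (treating the cases $\lambda = q_1$ and $\lambda > q_1$ separately) that mirror those in the proof of \autoref{lem:cases_2_3}. A direct calculation then rules out $\abs{\beta_0} \leq \abs{u_0}$ by showing that it would force $\abs{\beta_{E+1}} > m$, a contradiction, so $\abs{\beta_0} > \abs{u_0}$.

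Now put $x \in [u_0^{-1} w] \subseteq [s]$ and $y = \{x + (\abs{\beta_0} - \abs{u_0})\alpha\}$. Since $a_1 \geq 4$, the exponent $E = q_1 - 1 \geq 3$, so \autoref{prp:ab_eq_geometric_2} forces all $E$ successive points $\{y + j m\alpha\}$ onto the longer arc $J$ separated by $0$ and $\{-m\alpha\}$. A best-approximation argument parallel to the one in \autoref{lem:cases_2_3}, where the distance $D = \norm{(\abs{\beta_0} - \abs{u_0})\alpha}$ between $x$ and $y$ satisfies $D > \norm{q_1\alpha}$ (because $\abs{\beta_0} - \abs{u_0} < m < q_2$) and the borderline value $D = \norm{q_1\alpha}$ is excluded by a side-of-$0$ consideration, upgrades this to $D \geq \norm{q_0\alpha} = \alpha$; combining with $y \in J$ yields $\norm{y} \geq \alpha$. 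The exponent estimate from \autoref{prp:ab_eq_geometric_2} then gives $E \leq (1 - \alpha)/\norm{m\alpha}$, and using $\norm{m\alpha} > 1 - (a_1 - 1)\alpha$ together with the elementary $a_1 \alpha < 1$ reduces the required $E < q_1 - 1$ to an algebraic inequality that always holds, producing the desired contradiction.

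The main obstacle is establishing $\norm{y} \geq \alpha$ uniformly across the range $(a_2 - 1)q_1 + 2 < m < a_2 q_1$. Unlike the $r = 2$ setting of \autoref{lem:cases_2_3}(i), where the inequality $L < 1/2$ automatically places $\{-m\alpha\}$ on the side of $0$ opposite to $[s]$, here $\{-m\alpha\}$ may lie on either side of $0$ depending on the precise value of $m$, and in some subcases $[s]$ falls entirely inside the short arc between $0$ and $\{-m\alpha\}$. Making the side-of-$0$ argument rigorous will require a subcase analysis tracking the relative positions of $[s]$, $0$, and $\{-m\alpha\}$ on the circle, and the stronger hypothesis $a_1 \geq 4$ (rather than $a_1 \geq 3$ as in \autoref{lem:cases_2_3}(i)) should provide the extra slack needed for all the geometric estimates to close.
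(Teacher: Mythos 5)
There is a genuine gap, and you acknowledge it yourself in your final paragraph: the entire argument hinges on establishing $\norm{y} \geq \alpha$ (equivalently, on controlling where the points of the abelian power of exponent $E = q_1 - 1$ sit relative to $0$ and $\{-m\alpha\}$), and you leave this as an unresolved ``main obstacle'' requiring an unspecified subcase analysis that you hope $a_1 \geq 4$ will rescue. As you correctly observe, the geometric configuration in the proof of \autoref{lem:cases_2_3} (where $\norm{m\alpha}$ equals exactly $\norm{q_{k-1}\alpha} + \norm{q_k\alpha} + \norm{q_{k+1}\alpha}$ and the side-of-$0$ argument is clean) does not transfer to the whole range $(a_2-1)q_1 + 2 < m < a_2 q_1$, so the proof as written does not close. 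A proof proposal that names the decisive step and defers it is not a proof.

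The missing idea is that in this range you never need to confront the case $E = q_1 - 1$ at all, because the hypothesis forces $\abexp{m} < q_1 - 1$ outright. You only extract the weak bound $\norm{m\alpha} > \norm{q_0\alpha} + \norm{q_1\alpha}$ (which via \autoref{lem:relation_exponent_convergent} gives merely $\abexp{m} < q_1$), whereas for $m$ strictly between $(a_2-1)q_1 + 2$ and $a_2 q_1$ the point $\{-m\alpha\}$ avoids \emph{all} of the points closest to $0$ in \autoref{fig:intervals}, and one gets the much stronger bound $\norm{m\alpha} \geq \norm{\alpha} + (a_2+1)\norm{q_1\alpha}$ (this is the content of \autoref{lem:larger_lower_bound}). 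Feeding that into the computation of \autoref{lem:relation_exponent_convergent_2}, i.e.\ verifying \eqref{eq:foo3} for $k=1$, reduces to the inequality $q_0 + \alpha_2 + a_2 + 1 < (a_2+1)q_1$, which holds precisely because $a_1 \geq 4$. Then \autoref{lem:improved_ub} finishes immediately, with no singular-factor decomposition, no $\beta_0$/$u_0$ comparison, and no circle geometry. In short: the hypothesis $a_1 \geq 4$ is not ``extra slack for the geometric estimates''; it is exactly what makes a purely arithmetic three-line argument work, and the machinery you are trying to adapt is both unnecessary here and, in your rendition, incomplete.
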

  \begin{proof}
    By \autoref{lem:larger_lower_bound}, we have $\norm{m\alpha} \geq \norm{\alpha} + (a_2 + 1)\norm{q_1\alpha}$. In
    order to conclude that $\abexp{m} < q_1 - 1$, it is enough, by the beginning of the proof of
    \autoref{lem:relation_exponent_convergent_2}, to verify the inequality \eqref{eq:foo3} for $k = 1$, that is, we
    need to show that
    \begin{equation*}
      q_0 + \alpha_2 + a_2 + 1 < (a_2 + 1)q_1.
    \end{equation*}
    Now $a_1 \geq 4$, so $(a_2 + 1)q_1 \geq 4a_2 + 4 > 3a_2 + \alpha_2 + 3 > q_0 + \alpha_2 + a_2 + 1$. Thus
    \autoref{lem:improved_ub} implies the claim.
  \end{proof}

  We have established all the cases, and we are now ready to prove \autoref{thm:main}.

  \begin{proof}[Proof of \autoref{thm:main}]
    Let $m$ be the minimum abelian period of a factor $w$ of slope $\alpha$. If $m < q_1$, then $m = t q_0$ with
    $1 \leq t < a_1$, that is, $m \in \qk{\alpha} \cup \mathcal{M}_\alpha$. Suppose that $m \geq q_1$. Then there
    exists a positive integer $k$ such that $q_k \leq m < q_{k+1}$. If $m \notin \qkl{\alpha} \cup \mathcal{M}_\alpha$,
    then Lemmas \ref{lem:almost_all}, \ref{lem:cases_2_3}, and \ref{lem:case_1} (together with the discussions
    preceding them), imply that $m$ cannot be the minimum abelian period of $w$. The conclusion is that
    $m \in \qkl{\alpha} \cup \mathcal{M}_\alpha$.
  \end{proof}

  Notice that \autoref{thm:main} directly implies \autoref{thm:fibonacci} because the slope of the Fibonacci word has
  continued fraction expansion $[0; 2, \overline{1}]$.

  Let us then see if \autoref{thm:main} completely characterizes the minimum abelian periods of factors of slope
  $\alpha$. We begin with the following proposition.

  \begin{proposition}\label{prp:qk_admissible}
    If $m \in \qk{\alpha}$, then there exists a factor of slope $\alpha$ having minimum abelian period $m$.
  \end{proposition}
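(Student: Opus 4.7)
The plan is to exhibit, for each $k \geq 0$, an explicit factor of slope $\alpha$ whose minimum abelian period equals $q_k$. For $k = 0$ any single letter works (its minimum abelian period is $1 = q_0$), so I focus on $k \geq 1$.

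Set $E := \abexp{q_k}$, which by \autoref{lem:exponent_lower_bound} satisfies $E \geq q_{k+1}$, and choose an abelian power $y$ of period $q_k$ and exponent $E$ in $\Lang{\alpha}$, whose existence is guaranteed by \autoref{prp:exp_formula}. Since $E$ is maximal, the length-$q_k$ factor of the ambient Sturmian word immediately preceding $y$ (and the one immediately following $y$) cannot be abelian equivalent to the blocks of $y$, so by \autoref{rem:singular} and uniqueness of the singular Parikh vector, both neighbors must coincide with the singular factor $s$ of length $q_k$. Let $h$ be the length-$(q_k - 1)$ suffix of the preceding copy of $s$ and $t$ the length-$(q_k - 1)$ prefix of the following copy of $s$, and set $w := h y t \in \Lang{\alpha}$; then $\abs{w} = (E + 2)q_k - 2$. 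The factorization of $w$ into $h$, the $E$ blocks of $y$, and $t$ exhibits $q_k$ as an abelian period of $w$: by \autoref{lem:singular_properties} (iii) the Parikh vectors $\Parikh{h}$ and $\Parikh{t}$ are contained in the common Parikh vector $\mathcal{P}$ of the blocks of $y$.

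To rule out abelian periods $m' < q_k$, suppose such an $m'$ exists; then $\abs{w} \leq (\abexp{m'} + 2)m' - 2$. Since $m' < q_k$, the best approximation property gives $\norm{m'\alpha} \geq \norm{q_{k-1}\alpha}$, and \eqref{eq:norm_alpha} together with $\alpha_k < a_k + 1$ yields $\abexp{m'} = \floor{1/\norm{m'\alpha}} \leq \floor{\alpha_k q_{k-1} + q_{k-2}} \leq q_k + q_{k-1} - 1$. Hence $(\abexp{m'} + 2)m' - 2 \leq (q_k + q_{k-1} + 1)(q_k - 1) - 2$. On the other hand, $\abs{w} \geq (q_{k+1} + 2)q_k - 2 \geq (q_k + q_{k-1} + 2)q_k - 2$ using $q_{k+1} \geq q_k + q_{k-1}$. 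Subtracting the two bounds leaves $\abs{w} - ((\abexp{m'} + 2)m' - 2) \geq 2q_k + q_{k-1} + 1 > 0$, a contradiction.

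The step I expect to need the most care is the verification that the maximum-exponent abelian power $y$ really is flanked on both sides by the singular factor $s$ of length $q_k$ inside the Sturmian word. This should follow by combining \autoref{rem:singular} (uniqueness of the singular Parikh vector) with \autoref{lem:covering} (which identifies complete first returns to $s$ in the same phase with abelian powers of period $q_k$ of exponent $\abexp{q_k}$ or $\abexp{q_k} - 1$), but translating that into the required two-sided sandwich is the delicate part; everything else is a length count that leverages the uniform bound $\abexp{m'} < q_k + q_{k-1}$ for $m' < q_k$.
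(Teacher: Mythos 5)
Your proof is correct and follows essentially the same route as the paper: both exhibit a maximal abelian power of period $q_k$ (the paper uses the prefix of $\infw{s}_{0,\alpha}$ of length $\abexp{q_k}q_k$) and rule out any smaller period $m'$ by playing the lower bound $\abs{w} \geq q_{k+1}q_k$ against the upper bound $(q_k + q_{k-1} + 1)(q_k - 1) - 2$ obtained from $\abexp{m'} < q_k + q_{k-1}$. Your extra step of sandwiching $y$ between truncated singular factors is valid (the flanking blocks are indeed forced to be singular, since they cannot be abelian equivalent to the blocks of $y$ by maximality of the exponent, and the blocks of $y$ cannot themselves be singular because the return time of $s$ exceeds $q_k$), but it is superfluous: $y$ alone is already long enough, as $q_{k+1}q_k$ exceeds $(q_k + q_{k-1} + 1)(q_k - 1) - 2$ by $q_{k-1} + 3$.
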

  \begin{proof}
    Let $q_k \in \qk{\alpha}$ for some $k \geq 0$. Clearly the factor $0$ has minimum abelian period $q_0$, so we may
    assume that $k \geq 1$. We suppose that $0 \in I_0$ if $k$ is even and $0 \notin I_0$ otherwise. Consider the
    prefix $w$ of $\infw{s}_{0,\alpha}$ of length $\abexp{q_k}q_k$. The factor $w$ has abelian period $q_k$. Observe
    that $\abs{w} \geq q_{k+1} q_k$ because $\abexp{q_k} \geq q_{k+1}$ by \autoref{lem:exponent_lower_bound}. Let $m$
    be the minimum abelian period of $w$, and suppose for a contradiction that $m < q_k$. Then we have
    $\norm{m\alpha} \geq \norm{q_{k-1}\alpha}$ by the best approximation property. It follows by
    \autoref{lem:exponent_lower_bound} that $\abexp{m} < q_k + q_{k-1}$. Therefore
    \begin{equation*}
      q_{k+1} q_k \leq \abs{w} \leq (\abexp{m} + 2)m - 2 \leq (q_k + q_{k-1} + 1)(q_k - 1) - 2.
    \end{equation*}
    We thus obtain that
    \begin{equation*}
      ((a_{k+1} - 1)q_k - 1)q_k \leq -(q_k + q_{k-1} + 3).
    \end{equation*}
    The coefficient $(a_{k+1} - 1)q_k - 1$ of $q_k$ on the left is at least $-1$, which shows that the inequality is
    impossible. The conclusion is that $m = q_k$.
  \end{proof}

  \autoref{thm:main} and \autoref{prp:qk_admissible} together imply the following proposition. This was already
  implicitly present in \cite{2016:abelian_powers_and_repetitions_in_sturmian_words} as a corollary of
  \cite[Thm.~6.9]{2016:abelian_powers_and_repetitions_in_sturmian_words} and
  \cite[Thm.~6.12]{2016:abelian_powers_and_repetitions_in_sturmian_words}.

  \begin{proposition}
    The abelian period set of the Fibonacci word is the set of Fibonacci numbers.
  \end{proposition}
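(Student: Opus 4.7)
The plan is to combine \autoref{thm:main} with \autoref{prp:qk_admissible} after identifying the sets $\qkl{\alpha}$ and $\mathcal{M}_\alpha$ for the specific slope of the Fibonacci word; the statement follows at once from the two inclusions those results provide.

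First I would recall that the slope of the Fibonacci word is $\alpha = [0; 2, \overline{1}]$, so $a_1 = 2$ and $a_k = 1$ for every $k \geq 2$. Semiconvergents $p_{k,\ell}/q_{k,\ell}$ are defined only for $k \geq 2$ with $1 \leq \ell < a_k$, and since $a_k = 1$ for all such $k$ the index range is empty. Hence $\qkl{\alpha} = \qk{\alpha}$, which by the recurrences for $q_k$ is exactly the set of Fibonacci numbers.

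Next I would compute $\mathcal{M}_\alpha = \{tq_k : k \geq 0,\ 1 \leq t \leq a_{k+1}\}$. For $k = 0$ the constraint $1 \leq t \leq a_1 = 2$ gives $\{1, 2\}$; for $k \geq 1$ the constraint forces $t = 1$ and contributes only the convergent denominator $q_k$. Because $q_1 = 2$ is already a Fibonacci number, the set $\mathcal{M}_\alpha$ coincides with $\qk{\alpha}$, so $\qkl{\alpha} \cup \mathcal{M}_\alpha$ is simply the set of Fibonacci numbers.

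Finally I would assemble the two inclusions. \autoref{thm:main}, applied to this $\alpha$, shows that every minimum abelian period of a nonempty factor of the Fibonacci word belongs to $\qkl{\alpha} \cup \mathcal{M}_\alpha$, i.e.\ is a Fibonacci number; conversely, \autoref{prp:qk_admissible} guarantees that each element of $\qk{\alpha}$---again every Fibonacci number---is the minimum abelian period of some factor. Since both results have already been proved, no substantive obstacle remains; the only point requiring any care is the observation that the definition of semiconvergents automatically rules them out for this particular slope, so that the bound delivered by \autoref{thm:main} collapses to the Fibonacci numbers alone.
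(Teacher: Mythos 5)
Your proposal is correct and is exactly the paper's argument: the paper derives this proposition by combining \autoref{thm:main} with \autoref{prp:qk_admissible}, noting that for $\alpha = [0;2,\overline{1}]$ both $\qkl{\alpha}$ and $\mathcal{M}_\alpha$ collapse to $\qk{\alpha}$, the set of Fibonacci numbers. Your computation of the two sets and the assembly of the two inclusions match the intended reasoning.
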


  We shall see next that it is not necessary that a denominator of a semiconvergent or a proper multiple of a denominator
  of a convergent is the minimum abelian period of some factor of slope $\alpha$. We begin with the following
  observation.

  \begin{proposition}\label{prp:fibonacci_counter_example}
    Let $k \geq 1$, and suppose that $a_{k+1} > 1$. Then there exists a factor of slope $\alpha$ whose minimum abelian
    period equals $q_{k+1,1}$ or $2q_k$.
  \end{proposition}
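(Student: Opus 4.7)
The plan is to construct a factor $w$ of slope $\alpha$ having abelian period $2q_k$ whose length strictly exceeds $(\abexp{q_k} + 2)q_k - 2$. Once such a $w$ is produced, the argument wraps up quickly: any factor with abelian period $m \leq q_k$ has length at most $(\abexp{m}+2)m - 2$, and since $\norm{m\alpha} \geq \norm{q_k\alpha}$ for $m \leq q_k$ by the best approximation property, we have $\abexp{m} \leq \abexp{q_k}$, yielding $(\abexp{m}+2)m - 2 \leq (\abexp{q_k}+2)q_k - 2$. Hence no abelian period of $w$ can be at most $q_k$, so the minimum abelian period $m^*$ of $w$ satisfies $m^* > q_k$. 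By \autoref{thm:main}, $m^* \in \qkl{\alpha} \cup \mathcal{M}_\alpha$; and since $w$ has abelian period $2q_k$, also $m^* \leq 2q_k$. A direct enumeration (using $a_{k+1} > 1$ so that $q_{k+1} > 2q_k$) shows that the only elements of $\qkl{\alpha} \cup \mathcal{M}_\alpha$ in the interval $(q_k, 2q_k]$ are $q_{k+1,1}$ and $2q_k$, giving $m^* \in \{q_{k+1,1}, 2q_k\}$.

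For the construction, the starting observation is that $\norm{2q_k\alpha} = 2\norm{q_k\alpha}$ (since $\norm{q_k\alpha} < 1/2$ for $k \geq 1$), so \autoref{prp:exp_formula} gives $\abexp{2q_k} = \lfloor \abexp{q_k}/2 \rfloor$. A short case analysis on the parity of $\abexp{q_k}$ then yields $2\abexp{2q_k} \geq \abexp{q_k} - 1$, so the theoretical maximum length $(\abexp{2q_k}+2)\cdot 2q_k - 2$ of a factor with abelian period $2q_k$ exceeds the target $(\abexp{q_k}+2)q_k - 2$ by either $q_k$ (odd case) or $2q_k$ (even case). To realize this length as an actual Sturmian factor, I would take a maximal abelian power of period $2q_k$---existing in the Sturmian word by \autoref{prp:ab_eq_geometric_2} and recurrence---and extend it by a head and tail whose Parikh vectors are contained in the common Parikh vector $\mathcal{P}$ of the middle blocks. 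The neighboring blocks of length $2q_k$ lie in the opposite abelian class (their Parikh vectors differ from $\mathcal{P}$ by $(\pm 1, \mp 1)$), but appropriately long suffixes (resp.\ prefixes) of these blocks still have Parikh vectors componentwise below $\mathcal{P}$, in analogy with \autoref{lem:singular_properties} (iii) for the singular factor.

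The main obstacle will be this last extension step: verifying quantitatively that sufficiently long head and tail can be appended in the Sturmian word so that the total length surpasses $(\abexp{q_k}+2)q_k - 2$. I would carry this out by translating the Parikh containment into the geometric picture on $\T$---tracking the points $\{\rho + i\alpha\}$ flanking the abelian power and using the interval description of each Parikh class from \autoref{prp:ab_eq_geometric}---in the same spirit as \autoref{lem:first_last_letter}. Both the ``side'' of $0$ on which the maximal power lives and the two parities of $\abexp{q_k}$ need to be handled, but in each case a straightforward counting of how many of the rotated points fall into the relevant subinterval yields the required head and tail lengths, completing the proposition.
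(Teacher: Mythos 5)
Your route is genuinely different from the paper's and, once the one step you defer is filled in, it works. The paper instead takes a complete first return $u$ to the singular factor $s$ of length $q_k$ \emph{in the same phase}, deletes the last letter, and observes that the resulting word contains $s$ in every phase modulo $q_k$ (so it cannot have abelian period $q_k$), rules out periods $<q_k$ by a separate length computation, and exhibits period $2q_k$ by pairing the $q_k$-blocks of \autoref{lem:covering} and invoking \autoref{lem:singular_properties}~(iii). Your version replaces all of the combinatorics around the singular factor by the single uniform bound $(\abexp{m}+2)m-2\leq(\abexp{q_k}+2)q_k-2$ for $m\leq q_k$, which is cleaner; the price is that you must actually attain (essentially) the maximal length for period $2q_k$, whereas the paper only needs a complete return word whose length comes for free from \autoref{lem:return_times_singular}.

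Two points need attention. First, $\norm{2q_k\alpha}=2\norm{q_k\alpha}$ requires $\norm{q_k\alpha}\leq\tfrac14$, not merely $\norm{q_k\alpha}<\tfrac12$; this does hold here because $a_{k+1}\geq 2$ gives $\norm{q_k\alpha}=(\alpha_{k+1}q_k+q_{k-1})^{-1}<\tfrac15$ by \eqref{eq:norm_alpha}, but your stated justification is not sufficient as written. Second, and more importantly, the extension step you flag is not optional slack: when $\abexp{q_k}$ is odd the theoretical maximum exceeds the target by exactly $q_k$, so you need head plus tail of total length at least $3q_k-1$ out of the possible $4q_k-2$; a construction yielding only, say, a head of length $q_k-1$ would fail. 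The step does go through, and more easily than your sketch suggests: the blocks flanking a maximal abelian power of period $2q_k$ lie in the \emph{rare} abelian class (the one whose interval has length $2\norm{q_k\alpha}$), and by \autoref{prp:ab_eq_geometric} every word in that class begins and ends with its excess letter (the case distinction there resolves correctly because $2\norm{q_k\alpha}<\alpha$ when $a_{k+1}\geq 2$). Hence deleting the first letter of the preceding block and the last letter of the following block already yields head and tail of the full length $2q_k-1$ with Parikh vectors contained in $\mathcal{P}$, realizing the length $(\abexp{2q_k}+2)\cdot 2q_k-2$ exactly. With that paragraph written out, your proof is complete.
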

  \begin{proof}
    A suitable factor was essentially constructed in \autoref{sec:main_inequality}. Let us repeat the construction.
    Let $u$ be a factor of slope $\alpha$ that is a complete first return to $s$, the singular factor of length $q_k$,
    in the same phase. Such a factor exists by recurrence and \autoref{lem:return_times_singular}. Indeed, the return
    times of $s$, $q_{k+1}$ and $q_{k+2,1}$, equal $q_{k-1}$ modulo $q_k$, so $u$ has exactly $q_k + 1$ occurrences of
    $s$, all in different phases modulo $q_k$ except the first and final occurrence. Let $x$ be the final letter of
    $s$, and set $w = ux^{-1}$. It follows from what precedes that $w$ cannot have abelian period $q_k$. Moreover,
    $\abs{w} \geq (q_{k+1} + 1)q_k - 1$. Let $m$ be the minimal abelian period of $w$. Suppose first that $m < q_k$. By
    the best approximation property, we have $\norm{m\alpha} \geq \norm{q_{k-1}\alpha}$, so
    $\abexp{m} \leq \abexp{q_{k-1}}$. Now $\abexp{q_{k-1}} < q_k + q_{k-1}$ by \autoref{lem:exponent_lower_bound}, so
    $\abs{w} \leq (q_k + q_{k-1} + 1)(q_k - 1) - 2$. Therefore
    \begin{equation*}
      (q_{k+1} + 1)q_k - 1 \leq (q_k + q_{k-1} + 1)(q_k - 1) - 2,
    \end{equation*}
    which is equivalent to
    \begin{equation*}
      ((a_{k+1} - 1)q_k)q_k \leq -(q_k + q_{k-1} + 2).
    \end{equation*}
    This is a contradiction as the left side is clearly nonnegative. Thus we conclude that $m > q_k$. Write
    $s^{-1}us^{-1} = \beta_0 \dotsm \beta_{E-1}$ with $\abs{\beta_1} = \ldots = \abs{\beta_E} = q_k$. Observe that the
    words $\beta_0$, $\ldots$, $\beta_{E-1}$ are abelian equivalent and that $\beta_0 \dotsm \beta_{E-1}$ is a prefix
    of $s^{-1}w$. Let $\gamma_i = \beta_{2i}\beta_{2i+1}$ for $i = 0, \ldots, r$, where $r = \tfrac12(E - 2)$ if $E$ is
    even and $r = \tfrac12(E - 3)$ if $E$ is odd. The words $\gamma_0$, $\ldots$, $\gamma_r$ are abelian equivalent and
    have length $2q_k$. We may write $w = s\gamma_0 \dotsm \gamma_r v$, where $v = sx^{-1}$ if $E$ is even and
    $v = \beta_{E-1}sx^{-1}$ otherwise. The Parikh vector of $s$ is contained in the Parikh vector of $\gamma_0$ by
    \autoref{lem:singular_properties} (iii). Similarly \autoref{lem:singular_properties} (iii) shows that the Parikh
    vector of $sx^{-1}$ is contained in the Parikh vector of $\beta_0$. Therefore the Parikh vector of $v$ is contained
    in the Parikh vector of $\gamma_0$. Thus the word $w$ is an abelian repetition of period $2q_k$ with head $s$ and
    tail $v$. The conclusion is that $m \leq 2q_k$. Since $a_{k+1} > 1$, we have $q_{k+1} > 2q_k$. Hence
    \autoref{thm:main} implies that $m \in \{q_{k+2,1}, 2q_k\}$.
  \end{proof}

  Let us then see through examples that we cannot improve on \autoref{prp:fibonacci_counter_example}. Let
  $\alpha = [0; 2, 1, 2, 3, \overline{1}]$ ($\approx 0.3711$). Then the sequence of denominators of convergents is $2$,
  $3$, $8$, $27$, $\ldots$ and the sequence of denominators of semiconvergents is $5$, $11$, $19$, $\ldots$. Set
  $m = 2q_2 = 6$. It can be verified with the help of a computer that no factor of slope $\alpha$ has minimum abelian
  period $m$. Since $\abexp{m} = 6$, it is enough to compute the minimum abelian periods of factors up to length
  $(\abexp{m} + 2)m - 2 = 34$. In fact, the minimum abelian periods of factors up to length $34$ belong to the set
  $\{1, 2, 3, 5, 8\}$. Thus a proper multiple of a denominator of a convergent is not necessarily the minimum abelian
  period of some factor. Notice that the minimum abelian period $6$ is not ruled out by \eqref{eq:main_inequality}: the
  left side of \eqref{eq:main_inequality} equals $25$. There are thus other, unknown reasons why $6$ is not a minimum
  abelian period.

  It is possible to have a minimum abelian period of the form $tq_k$ with $t > 2$. Let
  $\alpha = [0; 2, 6, \overline{1}]$ ($\approx 0.4649$). Then the sequence of denominators of convergents is $2$, $13$,
  $15$, $\ldots$ and the sequence of denominators of semiconvergents is $3$, $5$, $7$, $9$, $11$, $\ldots$. The
  following factor of slope $\alpha$ of length $32$ has minimal abelian period $4q_1$:
  \begin{equation*}
    010100 \cdot 10101010 \cdot 10100101 \cdot 01010101 \cdot 01.
  \end{equation*}
  For the slope $[0; 2, 5, \overline{1}]$, no factor with minimum abelian period $8$ exists.

  Let finally $\alpha = [0; 2, 3, 2, \overline{1}]$ ($\approx 0.4355$). The sequence of denominators of convergents is
  $2$, $7$, $16$, $23$, $\ldots$ and the sequence of denominators of semiconvergents is $3$, $5$, $9$, $\ldots$. It can
  be verified that there is no factor of slope $\alpha$ with minimum abelian period $9$ (it is enough to study factors
  up to length $124$). Therefore a denominator of a semiconvergent is not necessarily the minimum abelian period of
  some factor. The possible abelian periods of factors up to length $124$ are in the set
  $\{1, 2, 3, 4, 5, 7, 14, 16\}$. The period $14$ is included as predicted by \autoref{prp:fibonacci_counter_example}.

  It seems to us that the problem of characterizing the possible abelian periods of factors of slope $\alpha$ is
  significantly harder than proving \autoref{thm:main}. The above examples indicate that the answer depends heavily on
  the arithmetic nature of the slope. We leave this problem open. Based on computer experiments, we have the following
  conjecture.

  \begin{conjecture}
    Let $\alpha = [0; \overline{2}]$. The abelian period set of a Sturmian word of slope $\alpha$ is
    $\qkl{\alpha} \cup \mathcal{M}_\alpha$.
  \end{conjecture}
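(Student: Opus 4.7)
The plan is to prove the two inclusions separately. The inclusion ``(abelian period set) $\subseteq \qkl{\alpha} \cup \mathcal{M}_\alpha$'' is immediate from \autoref{thm:main}, so the real task is to exhibit, for every $m \in \qkl{\alpha} \cup \mathcal{M}_\alpha$, some factor of slope $\alpha = [0; \overline{2}]$ whose minimum abelian period is exactly $m$. Because $a_k = 2$ for all $k \geq 1$, the target set decomposes as $\qk{\alpha} \cup \{q_{k+1,1} : k \geq 1\} \cup \{2q_k : k \geq 1\}$ (the first piece already contains $2q_0 = q_1$). \autoref{prp:qk_admissible} handles the convergent denominators, so it remains to realize the two infinite families $\{q_{k+1,1}\}_{k \geq 1}$ and $\{2q_k\}_{k \geq 1}$.

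The natural starting point is \autoref{prp:fibonacci_counter_example}, which for each $k \geq 1$ already produces a specific factor whose minimum abelian period lies in $\{q_{k+1,1}, 2q_k\}$ but does not tell us which. I would therefore carry out two parallel constructions tailored to force each value. For $m = q_{k+1,1}$ I would take a moderately short factor $w_k^{\flat}$ obtained by truncating the complete-first-return factor $u$ of \autoref{prp:fibonacci_counter_example} just past the point where its natural block decomposition first closes up modulo $q_{k+1,1}$; the Parikh-containment property of the singular factor (\autoref{lem:singular_properties}(iii)) guarantees that this yields an abelian decomposition of period $q_{k+1,1}$. For $m = 2q_k$ I would take the full factor $w$ produced by \autoref{prp:fibonacci_counter_example}, whose abelian decomposition of period $2q_k$ is already explicit in that proof via the paired blocks $\gamma_i = \beta_{2i}\beta_{2i+1}$.

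The essential step is then ruling out smaller abelian periods. By \autoref{thm:main}, the minimum abelian period of the constructed factor must lie in $\mathcal{S}_\alpha := \qkl{\alpha} \cup \mathcal{M}_\alpha$, so only candidates $m' \in \mathcal{S}_\alpha \cap [1, m)$ need be checked. For each such $m'$, the universal length bound $|w| > (\abexp{m'}+2)m'-2$ suffices to exclude $m'$. Combining \autoref{prp:exp_formula} with the explicit identity $\norm{q_k\alpha} = \alpha^{k+1}$ (valid because the continued fraction is purely periodic) makes all the relevant abelian exponents computable in closed form in $\Z[\sqrt{2}]$; exclusion of the ``easy'' candidates ($2q_j$ for $j$ small, and the smaller semiconvergent denominators) reduces to routine Pell-number inequalities.

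The main obstacle I foresee is the borderline case $m' = q_k$, the convergent denominator immediately below the target. Already at $k=2$, a maximal abelian power of period $q_{3,1} = 7$ has length at most $75$ while the length bound for period $q_2 = 5$ is $78$, so the crude length argument does not exclude $q_k$. To close this gap I would invoke the phase-sensitive reasoning from \autoref{sec:main_inequality}: any abelian decomposition of the constructed factor with period $q_k$ would force all occurrences of the singular factor $s$ of length $q_k$ to lie in a single phase modulo $q_k$ (this is essentially the contrapositive of \autoref{cl:s_occurrences}), whereas by construction $w$ contains a complete first return to $s$ in the same phase, hence occurrences of $s$ in every phase modulo $q_k$; this yields a direct contradiction. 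Together with the palindromic symmetry of $s$ already used in the proof of \autoref{prp:fibonacci_counter_example}, this phase obstruction should pin down both families simultaneously. I expect this phase-sensitive step to be the hardest part, and it is precisely the reason the conjecture is specific to $[0; \overline{2}]$: for slopes with non-uniform partial quotients the alignment between return-time residues and candidate periods breaks, creating the ``other, unknown reasons'' mentioned in the examples following \autoref{prp:fibonacci_counter_example}.
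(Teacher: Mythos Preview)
This statement is listed in the paper as a \emph{conjecture}, not a theorem: immediately before it the paper writes ``We leave this problem open. Based on computer experiments, we have the following conjecture.'' There is therefore no proof in the paper for you to be compared against; you are proposing an argument for what the author regards as an open problem.

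On the substance of your plan, the inclusion from \autoref{thm:main} and the realization of the convergent denominators via \autoref{prp:qk_admissible} are fine, and your phase obstruction against $q_k$ is exactly the argument already inside the proof of \autoref{prp:fibonacci_counter_example}. The genuine gap is that you never separate $q_{k+1,1}$ from $2q_k$. For the factor $w$ of \autoref{prp:fibonacci_counter_example} that you intend to witness $2q_k$, you must exclude the candidate $m' = q_{k+1,1}$, and neither your length bound nor your phase argument touches this: concretely, for $k=2$ one has $q_{3,1}=7$, $\abexp{7}=9$, so $(\abexp{7}+2)\cdot 7 - 2 = 75$, while the shorter of the two admissible factors $w$ has length $74$, so the length test fails. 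Conversely, your truncated factor $w_k^\flat$ meant to witness $q_{k+1,1}$ is not specified precisely enough to verify anything: you do not say where the truncation occurs, you do not check that the truncation still contains the singular factor $s$ in every phase modulo $q_k$ (without which your phase obstruction against $q_k$ collapses), and you do not exclude $2q_{k-1}$. The paper's post-\autoref{prp:fibonacci_counter_example} examples show exactly why this separation is delicate: for other slopes one of $q_{k+1,1}$, $2q_k$ can genuinely fail to be realized, so any proof for $[0;\overline{2}]$ must use something specific to that slope at this very step, and your proposal does not.
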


  By \autoref{prp:fibonacci_counter_example}, there exists a factor with minimum abelian period that is not a
  denominator of a convergent whenever $a_k > 1$ for some $k \geq 2$. This gives the following interesting
  characterization of the Fibonacci subshift/the Golden ratio in terms of abelian periods.

  \begin{theorem}\label{thm:fibonacci_characterization}
    Let $\alpha$ be an irrational in $(0, \tfrac12)$. Then $\alpha = 1/\varphi^2$, where $\varphi$ is the Golden ratio,
    if and only if the minimum abelian period of every factor of slope $\alpha$ is in $\qk{\alpha}$.
  \end{theorem}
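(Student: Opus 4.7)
The plan is to prove each direction of the biconditional separately, leaning heavily on results already in the paper. For the forward direction ($\alpha = 1/\varphi^2$), the work is essentially done: $1/\varphi^2 = [0; 2, \overline{1}]$ has convergent denominators equal to the Fibonacci numbers, so $\qk{\alpha}$ is the Fibonacci sequence, and the proposition just established (the Fibonacci-period proposition) says that the abelian period set of a Sturmian word of this slope is exactly the set of Fibonacci numbers.

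For the backward direction I would argue contrapositively: assuming $\alpha \in (0, \tfrac12)$ is irrational and $\alpha \neq 1/\varphi^2$, exhibit a factor of slope $\alpha$ whose minimum abelian period is not a denominator of a convergent. Under the standing convention $a_1 \geq 2$, the hypothesis $\alpha \neq 1/\varphi^2$ splits into two subcases: either some $a_{k+1} \geq 2$ with $k \geq 1$, or $a_k = 1$ for all $k \geq 2$ but $a_1 \geq 3$.

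The first subcase is handled directly by \autoref{prp:fibonacci_counter_example}, which supplies a factor whose minimum abelian period equals $q_{k+1,1} = q_k + q_{k-1}$ or $2q_k$. A short arithmetic check using $q_{k-1} \geq 1$ (valid because $k \geq 1$) and $a_{k+1} \geq 2$ yields the chain $q_k < q_k + q_{k-1} < 2q_k < a_{k+1}q_k + q_{k-1} = q_{k+1}$, so both candidate periods lie strictly between consecutive convergent denominators and are therefore absent from $\qk{\alpha}$.

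The second subcase $a_1 \geq 3$ is not covered by \autoref{prp:fibonacci_counter_example} and requires a separate elementary argument. Since $\alpha \in (0, \tfrac12)$, both letters have positive density in any Sturmian word of slope $\alpha$, so the factor $01$ occurs. A direct check of \autoref{def:abelian_period} shows that $01$ has abelian period $2$ via $\varepsilon \cdot 01 \cdot \varepsilon$ but does not have abelian period $1$, since the only candidate decompositions with a middle block of length $1$ fail the Parikh-vector containment condition at either the head or the tail. Hence the minimum abelian period of $01$ is exactly $2$. But when $a_1 \geq 3$ we have $q_0 = 1 < 2 < a_1 = q_1$, so $2 \notin \qk{\alpha}$, completing the contrapositive. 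The only real obstacle is bookkeeping this $a_1 \geq 3$ case, which is easily missed if one reads \autoref{prp:fibonacci_counter_example} too narrowly; everything else is an immediate consequence of the machinery already developed.
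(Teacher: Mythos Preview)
Your proposal is correct and follows essentially the same route as the paper: the forward direction is obtained from the Fibonacci result (the paper cites \autoref{thm:main} directly, you cite its corollary---either works), and the backward direction is argued contrapositively using \autoref{prp:fibonacci_counter_example} when some $a_{k+1} > 1$ with $k \geq 1$, and the factor $01$ with minimum abelian period $2$ when $a_1 \geq 3$. Your explicit verifications that $q_{k+1,1}$ and $2q_k$ lie strictly between $q_k$ and $q_{k+1}$, and that $01$ indeed has minimum abelian period exactly $2$, are welcome details that the paper leaves to the reader.
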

  \begin{proof}
    Say $\alpha = 1/\varphi^2$. Then $\alpha = [0; 2, \overline{1}]$ and, by \autoref{thm:main}, the abelian periods of
    factors of slope $\alpha$ are in $\qk{\alpha}$. Suppose then that the minimum abelian period of every factor of
    slope $\alpha$ is in $\qk{\alpha}$. \autoref{prp:fibonacci_counter_example} shows that there is necessarily a
    factor with minimum abelian period outside the set $\qk{\alpha}$ if $a_k > 1$ for some $k \geq 2$. Hence $a_k = 1$
    for all $k \geq 2$. If $a_1 > 2$, then the factor $01$ has abelian period $2$ and $2 \notin \qk{\alpha}$. Thus
    $a_1 = 2$. In other words, $\alpha = 1/\varphi^2$.
  \end{proof}

  \section{Remarks on \texorpdfstring{$k$}{k}-abelian Periods}\label{sec:k-abelian}
  In this section, we briefly discuss what changes if abelian equivalence is replaced by the more general $k$-abelian
  equivalence.

  Let $k$ be a positive integer. Two words $u$ and $v$ are \emph{$k$-abelian equivalent} if $\abs{u}_w = \abs{v}_w$ for
  each word $w$ of length at most $k$. Here $\abs{u}_w$ stands for the number of occurrences of $w$ as a factor of $u$.
  When $k = 1$, the $k$-abelian equivalence relation is simply the abelian equivalence relation. If $u$ and $v$ are
  $k$-abelian equivalent, then we write $u \sim_k v$. Notice that if $u \sim_k v$, then $u \sim_{k-1} v$ for all
  $k > 1$. For example, the words $0101100$ and $0011010$ are $2$-abelian equivalent, but they are not $3$-abelian
  equivalent. If $u = 0110$ and $v = 1101$, then $\abs{u}_w = \abs{v}_w$ for each word of length $2$, but
  $u \not\sim_2 v$ because $u$ and $v$ are not abelian equivalent. For words of length at least $k - 1$, we have
  $u \sim_k v$ if and only if $u$ and $v$ share a common prefix and a common suffix of length $k - 1$ and
  $\abs{u}_w = \abs{v}_w$ for each word $w$ of length $k$
  \cite[Lemma~2.4]{2013:on_a_generalization_of_abelian_equivalence_and_complexity_of_infinite}. For words of length at
  most $2k - 1$, the relation $\sim_k$ is in fact the equality relation $=$
  \cite[Lemma~2.4]{2013:on_a_generalization_of_abelian_equivalence_and_complexity_of_infinite}. If the words $u_0$,
  $\ldots$, $u_{e-1}$ are $k$-abelian equivalent, then their concatenation $u_0 \dotsm u_{e-1}$ is a \emph{$k$-abelian
  power} of period $\abs{u_0}$ and exponent $e$.

  The $k$-abelian equivalence is first introduced in the 1980 paper of J. Karhumäki
  \cite{1980:generalized_parikh_mappings_and_homomorphisms} in relation to the Post Correspondence Problem. The 2013
  paper \cite{2013:on_a_generalization_of_abelian_equivalence_and_complexity_of_infinite} by J. Karhumäki, A. Saarela,
  and L. Zamboni contains the first deeper study of $k$-abelian equivalence and, most importantly, the first research
  on $k$-abelian equivalence in relation to Sturmian words. One of their result is a characterization of Sturmian words
  as the aperiodic binary words whose factors of length $n$ belong to exactly $2k$ $k$-abelian equivalence classes if
  $n \geq 2k$ and to exactly $n + 1$ classes if $n \leq 2k - 1$
  \cite[Thm.~4.1]{2013:on_a_generalization_of_abelian_equivalence_and_complexity_of_infinite}. Another nice result is a
  general theorem from which it follows that Sturmian words contain $k$-abelian powers of arbitrarily high exponent
  \cite[Thm.~5.4]{2013:on_a_generalization_of_abelian_equivalence_and_complexity_of_infinite}. The results of Karhumäki
  et al. are made more precise in the paper \cite{2020:on_k-abelian_equivalence_and_generalized_lagrange_spectra} by
  the author and M. Whiteland where an approach based on continued fractions is developed to study $k$-abelian powers
  in Sturmian words. This approach yields results similar to those of
  \cite{2016:abelian_powers_and_repetitions_in_sturmian_words}. For example, the following analogue of
  \autoref{prp:exp_formula} concerning the maximum exponent $\abexp[k,\alpha]{m}$ of a $k$-abelian power of period $m$
  occurring in a Sturmian word of slope $\alpha$ is obtained. Here $\min L(2k-2)$ (resp. $\max L(2k-2)$) is the length
  of the shortest (resp. longest) interval among the intervals of factors of length $2k-2$.

  \begin{proposition}\cite[Lemma~3.10]{2020:on_k-abelian_equivalence_and_generalized_lagrange_spectra}
    Let $m$ be a positive integer and suppose that $\norm{m\alpha} < \min L(2k-2)$. Then
    \begin{equation*}
      \Abs{\Floor{\frac{\max L(2k-2)}{\norm{m\alpha}}} - \abexp[k,\alpha]{m}} \leq 1.
    \end{equation*}
  \end{proposition}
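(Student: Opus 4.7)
My plan is to mimic the rotation-orbit argument behind \autoref{prp:exp_formula} (the abelian case) after lifting it to the $k$-abelian setting via a suitable refinement of the partition of $\T$. The first step is to translate $k$-abelian equivalence of factors of length $m \geq 2k$ into a geometric condition on $\T$. By the Karhumäki--Saarela--Zamboni characterization recalled just above the proposition, two such factors are $k$-abelian equivalent if and only if they share a common prefix and suffix of length $k-1$ and have the same counts of every length-$k$ factor. For a Sturmian word of slope $\alpha$, the prefix of length $k-1$ of the factor starting at $\rho \in \T$ is determined by which of the $k$ intervals of factors of length $k-1$ contains $\rho$; the suffix is determined analogously for $\rho + (m-k+1)\alpha$; and the length-$k$ Parikh data reduces, by the Sturmian property, to which side of $\{-m\alpha\}$ the point $\rho$ lies on (beyond what prefix and suffix already record). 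Hence the $k$-abelian class of the factor at $\rho$ is governed by the cell of $\rho$ in the partition $\mathcal{P}_m$ of $\T$ cut at the points $\{-j\alpha\}$ for $j \in \{0, \dots, k-2\} \cup \{m-k+1, \dots, m-1\}$ together with $\{-m\alpha\}$. The sub-claim I would prove, leveraging $\norm{m\alpha} < \min L(2k-2)$, is that $\mathcal{P}_m$ is essentially a refinement of the partition into intervals of factors of length $2k-2$, and that its longest cell has length exactly $\max L(2k-2)$.

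Given this geometric setup, the second step is a direct rotation-orbit count. A $k$-abelian power $u_0 u_1 \dotsm u_{e-1}$ of period $m$ starting at $\rho$ corresponds, by Step~1, to the $e$ iterates $\rho, \rho + m\alpha, \dots, \rho + (e-1)m\alpha$ all lying in a single cell $J$ of $\mathcal{P}_m$. Since $\norm{m\alpha} < \min L(2k-2) \leq \abs{J}$, the rotation steps are strictly smaller than the cell width, so fitting $e$ points into $J$ requires $(e-1)\norm{m\alpha} \leq \abs{J}$. Maximizing over cells gives $\abexp[k,\alpha]{m} \leq \Floor{\max L(2k-2)/\norm{m\alpha}} + 1$; for the matching lower bound, density of the orbit $(n\alpha)_n$ on $\T$ allows us to place $\rho$ arbitrarily close to an endpoint of the longest cell, realizing an exponent of at least $\Floor{\max L(2k-2)/\norm{m\alpha}}$. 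Subtracting yields the claimed inequality.

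The main obstacle is the cell description in Step~1. The abelian case $k=1$ is immediate: only $\{0\}$ and $\{-m\alpha\}$ cut $\T$, recovering the two intervals $I(0, -m\alpha)$ and $I(-m\alpha, 1)$ appearing in \autoref{prp:exp_formula}. For $k \geq 2$ the interplay of the $k-1$ prefix cut-points, the $k-1$ shifted suffix cut-points, and the Parikh cut-point at $\{-m\alpha\}$ is combinatorially delicate: one must verify both that these cuts really produce $2k$ cells (matching the Karhumäki--Saarela--Zamboni count rather than collapsing due to coincidences forced by the rotation) and that the longest cell is of length precisely $\max L(2k-2)$. I would approach this via the Three Distance Theorem \cite{1998:three_distance_theorems_and_combinatorics_on_words} applied to the two sub-orbits generating the prefix and suffix cut-points, and use the hypothesis $\norm{m\alpha} < \min L(2k-2)$ to show that each shifted suffix point interlaces the prefix cells predictably, so that the refinement of each $2k-2$-factor interval is either trivial or splits off a piece of width at most $\norm{m\alpha}$ — too small to affect the identification of the longest cell.
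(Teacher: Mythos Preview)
The paper does not contain a proof of this proposition at all: it is merely quoted from \cite[Lemma~3.10]{2020:on_k-abelian_equivalence_and_generalized_lagrange_spectra} as background for the discussion in \autoref{sec:k-abelian}, with no argument given. There is therefore nothing in the present paper to compare your attempt against.

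That said, your sketch is headed in the right direction and is in fact close to how the cited paper argues. The geometric translation in your Step~1 is correct in spirit: for factors of length $m \geq 2k-1$, the $k$-abelian class is determined by the prefix of length $k-1$, the suffix of length $k-1$, and the abelian class, and these three data are encoded by the cell of $\rho$ in a partition of $\T$ by the points $\{-j\alpha\}$ for $j \in \{0,\dots,k-1\} \cup \{m-k+1,\dots,m\}$. The hypothesis $\norm{m\alpha} < \min L(2k-2)$ guarantees that the second block of $k$ points is a small perturbation of the first, so the longest cell of this $2k$-piece partition has length within $\norm{m\alpha}$ of $\max L(2k-2)$. Your Step~2 rotation count then gives the bound. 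The one place to be careful is that the longest cell need not have length \emph{exactly} $\max L(2k-2)$ but only lies in $[\max L(2k-2) - \norm{m\alpha}, \max L(2k-2)]$; this discrepancy is what produces the $\pm 1$ in the statement rather than an exact formula, and your sketch slightly overstates the precision available at that step.
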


  Let us next discuss the generalization of an abelian period to this setting of $k$-abelian equivalence. The following
  definition is compatible with \autoref{def:abelian_period} when $k = 1$.

  \begin{definition}
    A word $w$ has \emph{$k$-abelian period} $m$ if $w$ is a factor of a $k$-abelian power of period $m$.
  \end{definition}

  \begin{example}\label{ex:k-ab_example}
    Let $w = 0100110$. The minimum abelian period of $w$ is $2$ because of the (only possible) factorization
    $w = 0 \cdot 10 \cdot 01 \cdot 10$. Since $10 \not\sim_2 01$, we see that the $2$-abelian period must be greater
    than $2$. The only candidate factorization of $w$ for $2$-abelian period $3$ is $w = 01 \cdot 001 \cdot 10$, but
    this is not good because no word of length $3$ beginning with $10$ can be $2$-abelian equivalent to $001$ (there
    must be a common prefix of length $1$). Keeping in mind the requirement for a common prefix and a common suffix of
    length $1$, we see that the relevant factorizations for period $4$ are $01 \cdot 0011 \cdot 0$ and
    $010 \cdot 0110$. The prefix $01$ of the first factorization cannot be completed to a word of length $4$ that is
    $2$-abelian equivalent to $0011$ since such a completion must begin with $0$ and then we are missing the factor
    $11$. By a similar analysis for the second factorization, we see that the minimum $2$-abelian period of $w$ is at
    least $5$. In fact, it is easily verified that it is $6$.
  \end{example}

  The abelian period can also be generalized in another way based directly on \autoref{def:abelian_period}.

  \begin{definition}
    Let $A$ be an alphabet and $k$ a positive integer. Suppose that $u_0$, $u_1$, $\ldots$, $u_t$ is an enumeration of
    the nonempty words over the alphabet $A$ of length at most $k$ in some fixed order. The \emph{generalized Parikh
    vector} $\Parikh[k]{w}$ of a word $w$ is the vector $(\abs{w}_{u_0}, \abs{w}_{u_1}, \ldots, \abs{w}_{u_t})$. If
    $u$ and $v$ are words, then we say that $\Parikh[k]{u}$ is \emph{contained} in $\Parikh[k]{v}$ if $\Parikh[k]{u}$
    is componentwise less than or equal to $\Parikh[k]{v}$.
  \end{definition}

  \begin{definition}
    A word $w$ over $A$ has \emph{$k$-abelian period $m$ in the second sense} if it is possible to write
    $w = u_0 u_1 \dotsm u_{n-1} u_n$ such that $n \geq 2$, $u_1 \sim_k \dotsm \sim_k u_{n-1}$, and $\Parikh[k]{u_0}$
    and $\Parikh[k]{u_n}$ are contained in $\Parikh[k]{u_1}$.
  \end{definition}

  This latter definition is different from the first one. For example, the minimum $2$-abelian period (in the second
  sense) of the word $w = 0100110$ of \autoref{ex:k-ab_example} is $4$ due to the factorization $w = 010 \cdot 0110$.
  It is difficult to argue which of the two generalizations is more natural. The author's opinion is that the former
  one is the right definition.

  \autoref{thm:main} does not have immediate consequences on minimal $k$-abelian periods of factors of Sturmian words.
  Indeed, as was seen in \autoref{ex:k-ab_example}, the minimum $k$-abelian period of a word might be larger than its
  minimum abelian period. The prefix $010010100$ of the Fibonacci word has minimum abelian period $2$ and minimum
  $2$-abelian period $5$ (in the first sense). No method presented in this paper is directly applicable to this more
  general setting, and hence we leave this problem open. It seems difficult to make any plausible conjecture in light
  of computer experiments. It would be natural to guess that direct analogues of \autoref{thm:main} and
  \autoref{thm:fibonacci_characterization} hold also in the $k$-abelian setting. Nonetheless, this is not true---at
  least not for all $k > 1$. It can be verified that the minimum $2$-abelian and $3$-abelian periods of each factor of
  the Fibonacci word of length at most $200$ are Fibonacci numbers. The same seems to hold for $k$-abelian periods when
  $k = 4$, $\ldots$, $6$, but for $k = 7$, the situation is different. The factor $01001001010010010100101$ of the
  Fibonacci word has minimum $7$-abelian period $16$. Curiously, if the definition in the second sense is used, then
  the minimum $2$-abelian periods of the factors of the Fibonacci word of length at most $200$ are exactly $1$, $2$,
  $3$, $4$, $5$, $8$, $13$, $21$. The factor $0010100$ indeed has minimum $2$-abelian period $4$ in the second sense.
  The corresponding minimum $3$-abelian periods are $1$, $2$, $3$, $5$, $6$, $7$, $8$, $10$, $13$, $21$.

  \section*{Acknowledgments}
  The author thanks M. Whiteland for valuable discussions.

  \printbibliography
  
\end{document}